\documentclass[11pt,english]{article}
\usepackage[T1]{fontenc}
\usepackage[latin9]{inputenc}
\usepackage{verbatim}
\usepackage{amsmath}
\usepackage{graphicx}
\usepackage{amssymb}
\usepackage{esint}

\makeatletter

\usepackage{amsthm}
\usepackage{cite}\@ifundefined{definecolor}
 {\@ifundefined{definecolor}
 {\usepackage{color}}{}
}{}
\usepackage{pdfsync}

\theoremstyle{plain}
\newtheorem{thm}{Theorem}
\theoremstyle{plain}
\newtheorem{corl}[thm]{Corollary}
\theoremstyle{plain}
\newtheorem{lem}[thm]{Lemma}
\theoremstyle{plain}
\newtheorem{prop}[thm]{Proposition}

\parindent=0pt\parskip=8pt\addtolength{\textwidth}{1in}\addtolength{\oddsidemargin}{-0.5in}\addtolength{\textheight}{1.6in}\addtolength{\topmargin}{-0.8in}


\newcommand{\s}{\widehat{s}}

\global\long\def\<{\langle}
 \global\long\def\>{\rangle}
 \global\long\def\R{\mathbb{R}}
 \global\long\def\C{\mathbb{C}}
 
 \global\long\def\E{\mathbb{E}}
 \global\long\def\P#1{\operatorname{P}\left\{  #1\right\}  }
  \global\long\def\PE#1{\operatorname{P}_{\epsilon_k}\left\{  #1\right\}  }
  \global\long\def\PN#1{\operatorname{P}_{n_k}\left\{  #1\right\}  }
 \global\long\def\iunit{\mathrm{i}}
\global\long\def\bigmone{M_{1}}
\global\long\def\bigmtwo{M_{2}}
\global\long\def\bigmthree{M_{3}}
\global\long\def\bigmfour{M_{4}}
\global\long\def\TO{|\Omega||T|}

\makeatother
\usepackage{babel}

\begin{document}

\title{Matched Filtering from Limited Frequency Samples}

\author{Armin Eftekhari,$^{c}$ Justin Romberg,$^{g}$ and Michael B. Wakin$^{c}$%
\thanks{
Email: aeftekha@mines.edu, jrom@ece.gatech.edu, mwakin@mines.edu.
This work was partially supported by NSF Grants CCF-0830320 and CNS-0910592 and by DARPA grant FA8650-08-C-7853.
Submitted to the {\em IEEE Transactions on Information Theory} on January 13, 2011. %
} \\
 \vspace{-0.1cm}
 \\
 $^{c}$ Dept.\ of Electrical Engineering and Computer Science, Colorado School of Mines\\
 {} \vspace{-0.5cm}
 \\
 {} $^{g}$ School of Electrical and Computer Engineering, Georgia
Tech}

\date{January 2011; revised July 2012}

\maketitle

\begin{abstract}
 In this paper, we study a simple correlation-based strategy for estimating
 the unknown delay and amplitude of a signal based on a small number
 of noisy, randomly chosen frequency-domain samples. We model the output
 of this ``compressive matched filter'' as a random process whose
 mean equals the scaled, shifted autocorrelation function of the template
 signal. Using tools from the theory of empirical processes, we prove
 that the expected maximum deviation of this process from its mean
 decreases sharply as the number of measurements increases, and we
 also derive a probabilistic tail bound on the maximum deviation. Putting
 all of this together, we bound the minimum number of measurements
 required to guarantee that the empirical maximum of this random process
 occurs sufficiently close to the true peak of its mean function. We
 conclude that for broad classes of signals, this compressive matched
 filter will successfully estimate the unknown delay (with high probability,
 and within a prescribed tolerance) using a number of random frequency-domain
 samples that scales inversely with the signal-to-noise ratio and only
 logarithmically in the observation bandwidth and the possible
 range of delays.
\end{abstract}

\section{Introduction}

\subsection{Random Sampling and Compressive Signal Processing}

Over the last few decades, the development of cheap, flexible, and
powerful digital signal processing (DSP) architectures has enabled
the acquisition and analysis of increasingly rich data sets. One of
the key principles behind the DSP revolution is the fundamental work
by Nyquist, Whittaker, and Shannon in characterizing the minimum number
of discrete-time samples required to fully capture the information
in a bandlimited continuous-time signal. Unfortunately, many real-world
signals of interest may have very high bandwidth, which can severely
complicate the practical task of sampling a signal at its Nyquist
rate~\cite{le2005adcs,tropp2010beyond}.

The recently developed theory of Compressive Sensing (CS)~\cite{candes2006robust,donoho2006compressed}
suggests that if a signal is structured, then we can acquire it by
taking samples well below its Nyquist rate. CS relies on two fundamental
principles: first, that many signals have much lower complexity than
is suggested by their bandwidth (typically this is embodied in a sparse
representation for the signal within some basis), and second, that
such signals may safely be sampled below their Nyquist rate if the
traditional uniform time-domain sampling procedure is replaced with
a generalized linear measurement operator (typically this operator
contains some degree of randomness).

The CS theory has benefited from several powerful and elegant tools
for probabilistic analysis relating to the theory of empirical processes.
The essential condition (the {\em restricted isometry property}
\cite{candes2006nearoptimal}) that guarantees sparse recovery from
observations through a random matrix can be recast as a bound on a
random process --- this formulation, first put forth in \cite{Rudelson-Dudley},
is particularly useful when the compressive measurement system is
structured \cite{tropp2010beyond,ra09,rombergGaussian2009,rombergNeelsh2010,rauhut10re}.
In these works, the Dudley inequality \cite{dudley1967sizes}, a classical
tool which relates the supremum of a random process to the geometry
of its index set, is used to bound the expected supremum of the process,
and strong tail bounds are established that control the deviation
from the average behavior. To date, almost all of the work along these
lines has focused on providing guarantees for signal recovery from
compressive measurements.

There are many applications, however, where we are not interested in a full-scale
recovery of a signal. Instead, we may wish only to estimate some key
parameters (or {}``features'') in order to solve an inference problem that
does not demand full knowledge of the signal. It has been demonstrated
that random measurements can again be very useful in such settings.
Just as certain low-complexity signals can be fully recovered from
random measurements, certain low-complexity questions can be answered
about (possibly arbitrary) signals directly from random measurements
without first recovering the signal. Some initial steps in this direction
have been concerned with compressive detection, classification, estimation,
and filtering~\cite{haupt2006compressive,davenport2006sparse,HauptDetect,davenport2007smashed,davenport2009signal}.
Compared to alternative techniques that base their inference on a
full set of Nyquist samples, compressive inference techniques can
show slightly diminished accuracy because fewer statistics are measured
concerning the signal. In exchange, the acquisition hardware can potentially be much simpler and consume less power.  In addition,
we maintain the ability to adapt to future information we may learn the
problem at hand; from a single set of random measurements, a number
of different inference problems may be solved concerning a number
of possible candidate signals.

In this paper, we study the problem of matched filtering (i.e., estimating
the unknown delay and amplitude of a known template signal) from the
compressive viewpoint. In particular, we derive strong bounds on the
performance of a compressive matched filter by bringing in some of
the same probabilistic tools that have been so fruitful in CS recovery
analysis. To do this, we show that the compressive matched filtering problem can be reduced to controlling the supremum of a certain random process, which we approach through a specialized version of the Dudley inequality.

\subsection{Matched Filtering from Limited Frequency Samples}

\subsubsection{The Compressive Matched Filter}

The problem that we consider is formally stated as follows. Suppose we
make observations of the continuous-time signal $A\cdot s_0(t-\tau_{0})$,
where $s_0(t)$ is a known signal template, $\tau_{0}\in T\subset\R$
is the unknown delay (the {\em time-of-arrival}), and $A \in \R$ or $\C$
is the unknown amplitude. Given these observations, we would like
to estimate $\tau_{0}$ and $A$.

The optimal solution to this problem is given by the {\em matched filter}.  All shifts of the known template $s_0(t)$ are correlated against the incoming signal, and the estimated time-of-arrival is the shift that yields the maximum correlation.  The matched filter is typically implemented in one of two ways: either with a specialized analog circuit that performs the correlation and then detects the peaks, or by sampling the signal and calculating the correlation function digitally.  The advantage of the digital approach is the flexibility it offers; we can perform matched filtering against many different waveforms from the same set of samples, in case $s_0(t)$ is not known in advance (but belongs to a collection of candidates).  If the signal $s_0(t)$ is concentrated in time, the sampling rate must be commensurately high to accurately estimate $\tau_0$.  Applications such as high-frequency radar or ultra-wideband communications can require sampling rates of hundred of millions, or even billions, of samples per second.  Taking and processing samples at these kinds of rates is costly in terms of hardware complexity and power consumption.

Working from compressed samples gives us a more elegant solution. In this paper we analyze a simple correlation-based estimator that operates using a small number $m$ of noisy samples of spectrum of the received signal, with locations that are drawn randomly from a uniform distribution on some interval $\Omega$ in the frequency domain. In one of our main results (Corollary~\ref{corl:noise}), we prove that
for broad classes of signals, this compressive matched filter will successfully estimate $\tau_{0}$ (with high probability, and within
a prescribed tolerance) using a number of random frequency-domain samples that scales inversely with the signal-to-noise ratio and only logarithmically in the observation bandwidth $|\Omega|$ and the possible range of delays $|T|$. Our results help validate the use of compressive measurements for capturing important signal information. This acquisition scheme also offers us flexibility in that it depends only on very broad characteristics of the signal; it is universally effective for all $s_0(t)$ which are spread out over the band $\Omega$.

We note that the use of randomized measurements in the frequency domain is not unprecedented. One of the original motivating problems for CS, for example, came from magnetic resonance imaging (MRI), where the goal is to reconstruct an image from a partial set of Fourier coefficients~\cite{candes2006robust,lustig2008compressed}. Randomized frequency-domain measurements are also standard in CS problems where the signal to be recovered is sparse in the time domain~\cite{candes2006robust,Rudelson-Dudley}, and of course, in the compressive matched filter problem, the unknown signal delay is manifested in the time domain. In hardware, the requisite spectral samples for the compressive matched filter could be acquired by correlating the incoming signal with a bank of oscillators tuned to random frequencies, or by following a Fourier transforming device (such as a SAW processor~\cite{jack1980theory}) with a random sampler. Although the analysis in our paper is limited to one-dimensional signals, one could also envision formulating the matched filtering problem for two-dimensional images, and random samples of a two-dimensional spectrum could be acquired by combining the Fourier transforming property of a lens with a random sampler.

\subsubsection{Analytical Framework and Summary of Main Results}

In this paper, we develop an analytical framework for studying the compressive matched filter based on tools from probability theory and empirical processes.
To help build intuition, in Sections~\ref{sec:problem} through~\ref{sec:gnpc} we first study the problem fully in the case of noiseless measurements.
In Section~\ref{sec:Robustness} we then extend all of our analysis in parallel fashion to account for measurement noise.
Section~\ref{sec:theory} and several appendices provide supporting proofs for all of our main results.

For both the noiseless and noisy problem formulations, we begin by showing that the output of the correlation-based estimator can be modeled as a random process whose mean equals the scaled, shifted autocorrelation function of the template signal.
Noting that the scaled, shifted autocorrelation function of the template signal (the mean of this random process) peaks at $\tau_0$, we estimate $\tau_0$ by finding the empirical maximum of the random process, and we give guarantees about the accuracy of this estimate by showing that the random process does not vary too much from its mean.
Given the estimate of the delay, an estimate of the amplitude $A$ follows easily via least-squares, just as with the standard matched filer.

We approach the analysis as follows.
In Theorems~\ref{th:Emain} and~\ref{thm:noiseexp}, we adapt the proof of the Dudley inequality to show that the expected maximum deviation of this random process from its mean decreases sharply as the number of measurements increases. A bit more formally, Theorem~\ref{th:Emain} states that in the noiseless case, the expected maximum deviation of this process from its mean decreases roughly like $m^{-1/2}$ (normalized by the peak value of the mean function). Theorem~\ref{thm:noiseexp} quantifies the amount of additional deviation one would expect based on noise in the observations.
In Theorems~\ref{th:Pmain} and~\ref{thm:mainNoisenew}, we then derive a probabilistic tail bound on the maximum deviation of this process from its mean. Specifically, Theorem~\ref{th:Pmain} guarantees that with high probability the noiseless process stays uniformly close to its mean, and Theorem~\ref{thm:mainNoisenew} guarantees that with high probability the maximum additional deviation caused by noise is also bounded.
Finally, in Corollaries~\ref{corl:main} and~\ref{corl:noise}, we pull these results together to establish bounds on the number of measurements
required to guarantee that the empirical maximum of this random process occurs sufficiently close to the true peak of its mean function. Specifically, Corollary~\ref{corl:main} ensures in the noiseless case that when the template signal has an autocorrelation function with a single prominent peak, no values of $\tau$ far from $\tau_0$ can yield an estimate close to the true peak. Corollary~\ref{corl:noise} extends this result to account for noise and leads to the central result: the compressive matched filter will successfully estimate $\tau_{0}$ from $m$ random frequency-domain samples (with high probability, and within a prescribed tolerance) as long as $m$ scales inversely with the signal-to-noise ratio and logarithmically in the observation bandwidth $|\Omega|$ and the possible range of delays $|T|$.

All of our bounds depend on the degree to which the template signal $s_{0}$ is concentrated in the frequency domain.
As might be expected given the uniform random sampling strategy on $\Omega$, signals whose spectrum is relatively flat across $\Omega$ will require the fewest measurements, while signals with highly peaked spectra will require the most.
These issues are carefully quantified and discussed in detail throughout Section~\ref{sec:Matched filter noise free}.

\subsubsection{Exchanging Time and Frequency}
\label{sec:exchtf}

It is important to point out that the roles of time and frequency are completely interchangeable in our analysis. All of our results from Section~\ref{sec:Matched filter noise free} can therefore be adapted to the ``dual'' problem of estimating the unknown carrier frequency of a modulated signal given a small number of time-domain samples of that signal; the time domain becomes our observation domain, and the frequency domain becomes the domain in which we wish to determine the unknown shift of the known template signal. For the sake of space, we do not restate all of our results in this context, although the conclusion is clear: a compressive matched filter can successfully estimate an unknown modulation frequency $\omega_{0}$ from $m$ random time-domain samples (with high probability, and within a prescribed tolerance) as long as $m$ scales inversely with the signal-to-noise ratio and logarithmically in the observation duration $|T|$ and the possible range of carrier frequencies $|\Omega|$. (A Nyquist-based approach, in contrast, would require a sampling rate linearly proportional to $|\Omega|$ but could tolerate somewhat lower signal-to-noise ratios.) The bounds will also depend --- in this case --- on the degree to which the template signal is concentrated in the time domain. Signals whose envelope is relatively flat across $T$ will require the fewest measurements, while signals with highly peaked envelopes will require the most.

While we do not discuss this problem further in full generality, we do briefly examine a special case, namely the problem of estimating the frequency of a pure sinusoidal tone from noisy time-domain samples. Such a problem is an ideal candidate for the compressive matched filter because a pure sinusoidal tone has a perfectly flat envelope in the time domain. We discuss this tone estimation problem in Section~\ref{sec:Pure-Tone-Estimation}, and carefully quantify (in Corollaries~\ref{corl:Pmaintone} and~\ref{corl:tonenoise}) the number of random time-domain samples required to successfully estimate the tone's frequency. We also address an important practical question: at how many points is it necessary to sample (or query) the random process when searching for its peak? Using an adaptation of Corollary~\ref{corl:main}, we show for the noiseless case that the empirical peak from a finite set of samples of the random process (with sufficiently dense sampling) must occur within a certain distance of the true peak of the continuous random process. One can therefore employ a grid search strategy for implementing the compressive matched filter, and from the empirical maximum on this grid, one can actually employ a local concave ascent to find the exact value for $\omega_{0}$. We close Section~\ref{sec:Pure-Tone-Estimation} with a stylized application illustrating the potential of extending this work to the problem of determining the arrival time of a linear chirp.

\subsection{Related Work}

To the best of our knowledge, our framework for studying the compressive
matched filter is novel. Prior statistical analysis for compressive
inference problems has focused specifically on problems of signal
detection or classification from a finite model set~\cite{haupt2006compressive,davenport2006sparse,HauptDetect}
or employed a geometric point-of-view based on a stable embedding
of signal family from an original finite-dimensional signal space
into a lower-dimensional measurement space~\cite{davenport2007smashed,davenport2009signal}.
Our work takes a substantially different approach, considering the
inference of a continuous-valued shift parameter from a continuous-time
received signal, and more thoroughly characterizing the statistics
of the problem using the language and tools of empirical processes.

As mentioned above, similar probabilistic tools have been employed
in CS, but for the analysis of the sparse signal recovery problem~\cite{Rudelson-Dudley,ra09,rombergGaussian2009,rombergNeelsh2010,rauhut10re}.
While in principle one could view the matched filter problem as that
of recovering a $1$-sparse signal from a dictionary $\{s_{0}(t-\tau):\tau\in T\}$
of possible candidates, such a dictionary would have infinite size
and extremely high coherence, preventing the application of most standard
recovery analysis techniques. One recent work~\cite{eldarTimeDelay2010}
has formalized the matched filter problem using signal recovery principles
and a union of subspaces model. However, this work is quite different
from ours in that it does not theoretically study noise sensitivity
and relies on a non-random sampling architecture that is carefully
designed to facilitate the solution of the recovery problem. Interestingly,
outside the field of CS, very similar random processes to those that
we study have also arisen in the analysis of the spectral norm of
random Toeplitz matrices~\cite{meckes-spectral}.

The second part of this paper adapts our analysis of the compressive
matched filter to the problem of estimating the frequency of a pure
sinusoidal tone from a small number of random time-domain samples.
The recovery of signals that are sparse in the frequency domain based
on compressive measurements is a problem that has been well-studied
in the CS literature, although most work in this area has been concerned
with signals that can be written as trigonometric polynomials~\cite{candes2006robust,kunis06ra,gilbert2008tutorial,tropp2010beyond}.
Some techniques for recovering
off-grid frequency-sparse signals have been proposed that involve
windowing~\cite{tropp2010beyond} or other classical techniques from
the field of spectral estimation~\cite{Duarte}, and other work has
considered the more general problem of recovering continuous-time
signals based on a union of subspaces model~\cite{eldar2009compressed},
but the analysis that we present is more sharply focused on the statistics
of the simpler pure tone estimation problem.

Finally, we would like to point out some of the differences between the tone estimation problem considered in this paper and the classical problem of estimating the power spectrum of a random process from samples at random locations (see~\cite{beutler1970alias,masry1978poisson,lii1994spectral,bremaud2002power}).  In Sections~\ref{sec:Matched filter noise free} and~\ref{sec:Pure-Tone-Estimation}, we will show how the output of the compressive matched filter is a random process whose mean is the template autocorrelation function. This random process is completely specified by the samples we have observed, and rather than merely estimating its second-order statistics, we will be interested in establishing a uniform bound on its deviation from the template; this will allow us to conclude that it peaks at or near the correct location. It is also worth mentioning that our work differs from Rife and Boorstyn's classical analysis of the single-tone parameter estimation problem~\cite{rife1974single}. Specifically, our work permits sampling below the Nyquist rate, and with high probability we provide an absolute bound on the accuracy of the frequency estimate, rather than involving the Cram\'{e}r-Rao bound.


\section{\label{sec:Matched filter noise free}Analytical Framework and Main Results}


\subsection{Problem Statement}
\label{sec:problem}

\subsubsection{Signal Model}

Suppose we have received a signal $A\cdot s_{0}(t-\tau_{0})$, where
$s_{0}(t)$ is a known signal template, and $\tau_{0}$ and $A$ are
the unknown delay and amplitude, respectively. We assume that the
unknown delay $\tau_{0}$ (also called the {\em time-of-arrival})
is restricted to some interval $T=[\tau_{\mathrm{min}},\tau_{\mathrm{max}}]\subset\R$.
We make no particular assumptions about $s_{0}$, although our bounds will depend on the properties of $s_{0}$ over the range of frequencies where it is observed.

We will consider two closely related cases in this paper: in the {\em
real case}, we restrict both $s_{0}$ and $A$ to be real-valued,
whereas in the {\em complex case}, we allow both $s_{0}$ and $A$
to be complex-valued. Much of our analysis will be identical for the
real and complex cases, but we will specialize our discussions to
distinguish between the two cases when necessary.

\subsubsection{Observations}

We would like to estimate $\tau_{0}$ and $A$ based on random samples of the Fourier transform of the received signal. In particular, we
suppose that we acquire $m$ samples of the Fourier transform of $A\cdot s_{0}(t-\tau_{0})$ at frequencies $\omega_{1},\omega_{2},...,\omega_{m}$, which are drawn independently at random from a uniform distribution on some interval $\Omega=[-\omega_{\mathrm{max}},\omega_{\mathrm{max}}]$ in the frequency domain. Typically, one would choose $\Omega$ roughly equal to the essential bandwidth of $s_0$, although this is not strictly necessary; we more carefully discuss the implications of choosing $\Omega$ in Section~\ref{sec:Robustness} below.

The vector of observations $y\in\mathbb{C}^{m}$ is formed as \begin{equation}
y[k]=A\int_{-\infty}^{\infty}s_{0}(t-\tau_{0})e^{-\iunit\omega_{k}t}\, dt=A\cdot e^{-\iunit\omega_{k}\tau_{0}}\widehat{s}_{0}(\omega_{k}),\quad k=1,2,\ldots,m,\label{eq:yclean}\end{equation}
where $\widehat{s}_{0}(\omega)$ denotes the Fourier transform of
$s_{0}(t)$. For the moment, we assume that these observations are
noiseless. In Section~\ref{sec:Robustness} we extend our formulation
to account for measurement noise.

We define $s(t)$ to be a low-pass filtered version
of $s_{0}(t)$ having frequency content bandlimited to the interval
$\Omega$. More formally, \[
s(t):=\frac{1}{2\pi}\int_{\Omega}\widehat{s}_{0}(\omega)e^{\iunit\omega t}\, d\omega.\]
It follows that $\widehat{s}(\omega)=\widehat{s}_{0}(\omega)$ for
all $\omega\in\Omega$ and that $\widehat{s}(\omega)=0$ for all $\omega\notin\Omega$.
Thus, because our Fourier-domain observations are limited to the interval
$\Omega$, we may rewrite the expression (\ref{eq:yclean}) for our
observations as \[
y[k]=A\cdot e^{-\iunit\omega_{k}\tau_{0}}\widehat{s}(\omega_{k}),\quad k=1,2,\ldots,m.\]
Consequently, all of our subsequent analysis will depend only on
properties of the bandlimited signal $s(t)$.

\subsubsection{Least-Squares Estimation}

Given the observation vector $y$, a natural approach to estimating
$\tau_{0}$ and $A$ is to find the delay and amplitude which best
explain the measurements in a least-squares sense. (Such a least-squares estimate coincides with the maximum likelihood estimate in the case of Gaussian measurement noise, as we consider in Section~\ref{sec:Robustness}.) More formally,
we define \begin{equation}
(\widehat{\tau}_{0},\widehat{A}):=\arg\min_{{\tau},{A}}\sum_{k=1}^{m}\left|y[k]-{A}\cdot e^{-\iunit\omega_{k}{\tau}}\widehat{s}(\omega_{k})\right|^{2}=\arg\min_{{\tau},{A}}\|y-{A}\psi_{{\tau}}\|_{2}^{2},\label{eq:lsmain}\end{equation}
where for any $\tau\in T$, the test vector $\psi_{\tau}\in\mathbb{C}^{m}$
is given by: \[
\psi_{\tau}[k]=e^{-\iunit\omega_{k}\tau}\widehat{s}(\omega_{k}),\quad k=1,2,\ldots,m.\]
For a given estimate $\widehat{\tau}_{0}$ of the delay, one can derive a closed form expression for the amplitude $A$ that minimizes (\ref{eq:lsmain}):
\begin{equation}
\widehat{A}=\left\{ \begin{array}{ll}
\frac{\mathrm{Re}\langle y,\psi_{\widehat{\tau}_{0}}\rangle}{\|\psi_{\widehat{\tau}_{0}}\|_{2}^{2}}, & \mathrm{real~case},\\[2mm]
\frac{\langle y,\psi_{\widehat{\tau}_{0}}\rangle}{\|\psi_{\widehat{\tau}_{0}}\|_{2}^{2}}, & \mathrm{complex~case}.\end{array}\right. \label{eq:newA}
\end{equation}
Plugging (\ref{eq:newA}) in to (\ref{eq:lsmain}), we see that the optimal time-of-arrival estimate is given by
\begin{equation*}
\widehat{\tau}_0 = \left\{ \begin{array}{ll}
\arg\min_{\tau\in T}\frac{-\left|\mathrm{Re}(\langle y,\psi_{\tau}\rangle)\right|^2}{\| \psi_\tau \|^2}, & \mathrm{real~case},\\[1mm]
\arg\min_{\tau\in T}\frac{-\left|\langle y,\psi_{\tau}\rangle\right|^2}{\| \psi_\tau \|^2}, & \mathrm{complex~case}.\end{array}\right.
\end{equation*}
Finally, noting that $\| \psi_\tau\|$ is constant over all $\tau \in T$, we obtain a simplified expression for the least-squares estimate of $\tau_{0}$:
\begin{equation}
\widehat{\tau}_{0}=\left\{ \begin{array}{ll}
\arg\max_{\tau\in T}\left|\mathrm{Re}\langle y,\psi_{\tau}\rangle\right|, & \mathrm{real~case},\\[1mm]
\arg\max_{\tau\in T}\left|\langle y,\psi_{\tau}\rangle\right|, & \mathrm{complex~case}.\end{array}\right.\label{eq:bothls}
\end{equation}

Equation (\ref{eq:bothls}) suggests a correlation-based strategy
for estimating $\tau_{0}$; this strategy is a natural generalization
of the traditional time-domain {}``matched filter'' to our measurement
setting. For this reason, we refer to such an estimator as a {\em
compressive matched filter}, and our focus in this paper will be
on the accuracy with which $\tau_{0}$ can be estimated using such
an estimator. Because $\widehat{A}$ is subsequently
defined in terms of $\widehat{\tau}_{0}$, one could easily
extend our analysis to bound the accuracy of estimating $A$.


\subsection{Noiseless Analysis}

\label{sub:mainclean}

In order to study the performance of a correlation-based estimator
for $\tau_{0}$, let us define the complex-valued random process $X(\tau)$
on $T$ to be the correlation of the observations $y$ with each of
the test vectors $\psi_{\tau}$, \begin{equation}
X(\tau):=\langle y,\psi_{\tau}\rangle=A\sum_{k=1}^{m}\left|\widehat{s}\left(\omega_{k}\right)\right|^{2}e^{\iunit\omega_{k}\left(\tau-\tau_{0}\right)}.\label{eq:X}\end{equation}
 This random process has mean function \begin{equation}
\small\E X(\tau)=A\sum_{k=1}^{m}\E\left|\widehat{s}\left(\omega_{k}\right)\right|^{2}e^{\iunit\omega_{k}(\tau-\tau_{0})}=\frac{Am}{|\Omega|}\int_{\Omega}\left|\widehat{s}\left(\omega\right)\right|^{2}e^{\iunit\omega(\tau-\tau_{0})}~d\omega=\frac{2\pi Am}{|\Omega|}R_{ss}(\tau-\tau_{0}),\label{eq:MF mean function}\end{equation}
where $R_{ss}(\cdot)=\left(s(t)\star s^{*}(-t)\right)(\cdot)$ denotes
the autocorrelation function of $s(t)$.

In the complex case, the compressive matched filter estimate (\ref{eq:bothls})
for $\tau_{0}$ can be interpreted as a search for the maximizer of
$|X(\tau)|$. Because $|R_{ss}(\cdot)|$ is maximized at the origin, one would
expect informally that, on average, finding the maximum magnitude of the process
$X(\tau)$ should correctly estimate $\tau_{0}$. In the real case, the
compressive matched filter estimate for $\tau_{0}$ can be interpreted
as a search for the maximizer of $|\mathrm{Re}(X(\tau))|$. However,
in this case we note that since $R_{ss}(\cdot)$ is real, we will
have $\E\;\mathrm{Re}(X(\tau))=\mathrm{Re}(\E X(\tau))=\E X(\tau)$,
which again has magnitude maximized at $\tau_{0}$, and so informally, finding the maximum magnitude of the process $\mathrm{Re}(X(\tau))$ should correctly estimate $\tau_{0}$.

An equivalent, and potentially more revealing, way to frame the delay
estimation problem is to observe that we could rescale $X(\tau)$
to obtain an estimate of the ideal autocorrelation function $R_{ss}(\cdot)$
(up to the unknown amplitude $A$ and translation $\tau_{0}$). Define
\begin{equation}
\widetilde{R}_{ss}(\tau):=\left\{ \begin{array}{ll}
\frac{|\Omega|}{2\pi m}\cdot\mathrm{Re}(X(\tau)), & \mathrm{real~case},\\[1mm]
\frac{|\Omega|}{2\pi m}\cdot X(\tau), & \mathrm{complex~case}.\end{array}\right.\label{eq:acfestboth}\end{equation}
One way to interpret this estimate is that we have approximated the scaled,
shifted autocorrelation function, $AR_{ss}(\tau-\tau_{0})$, as a
discrete sum with samples taken at random locations in $\Omega$;
equation \eqref{eq:MF mean function} tell us that this estimate is
unbiased, since $\mathbb{E}[\widetilde{R}_{ss}(\tau)]=AR_{ss}(\tau-\tau_{0})$.

It is clear that solving the least-squares problem (\ref{eq:bothls})
is equivalent to finding the maximum of $|\widetilde{R}_{ss}(\tau)|$.
Our main concern will be quantifying how close the random process
$\widetilde{R}_{ss}(\tau)$ is to its mean. 
It is worth noting that, if the measurements are perfectly clean and we are able to perform
all computations to infinite precision, then $|\widetilde{R}_{ss}(\tau)|$
is actually guaranteed to peak at $\tau_{0}$, where it takes its maximum value
of $|\Omega|(|A|2\pi m)^{-1}\|y\|_{2}^{2}$. But what Theorems~\ref{th:Emain}
and \ref{th:Pmain} will tell us is that if $m$ is large enough, then
there will be a tangible gap between this peak at $\tau_{0}$ and
the values of $|\widetilde{R}_{ss}(\tau)|$ for all $\tau$ bounded
some distance away from $\tau_{0}$. As we will then see in Section~\ref{sec:Robustness}, this gap will
make the maximizer of $|\widetilde{R}_{ss}|$ a robust estimate of $\tau_{0}$
in the presence of noise.

To simplify some of the notation, we will use $\eta$ to denote the
peak magnitude of the mean function $AR_{ss}(\tau-\tau_{0})$, \[
\eta= \left|A R_{ss}(0)\right| =\left|A\right|\|s\|_{2}^{2}=\left|A\right|\frac{\|\widehat{s}\|_{2}^{2}}{2\pi}.\]
 Our results will depend on how concentrated the Fourier transform
$\widehat{s}(\omega)$ is over the sampling domain $\Omega$. Intuitively,
if $\widehat{s}$ is spread out more or less evenly over $\Omega$,
then each sample will give us some information about the return signal.
If $\widehat{s}$ is concentrated on a small set within $\Omega$,
then only a small number of the randomly chosen samples will tell
us anything at all. We will quantify this concentration in two different
ways. We introduce \[
\mu_{1}=\frac{\sqrt{|\Omega|}\cdot\|\widehat{s}\|_{4}^{2}}{\|\widehat{s}\|_{2}^{2}},\quad\text{and}\quad\mu_{2}=\frac{|\Omega|\cdot\|\widehat{s}\|_{\infty}^{2}}{\|\widehat{s}\|_{2}^{2}}.\]
If the energy of $\widehat{s}$ is equally spread over the sampling
domain $\Omega$, that is if $|\widehat{s}(\omega)|=|\Omega|^{-1/2}\|\widehat{s}\|_{2}$
for all $\omega\in\Omega$, then it is easy to see that $\mu_{1}=\mu_{2}=1$.
If most of the energy in $\widehat{s}(\omega)$ is concentrated on
a small subset of $\Omega$, then $\mu_{1}$ and $\mu_{2}$ will be
large (and in fact, they can be made arbitrarily large).

We start by getting a rough idea of how close $\widetilde{R}_{ss}(\tau)$
is to its mean by looking at the variance at a shift $\tau$. Since
$|\mathrm{Re}(X(\tau))|\le|X(\tau)|$, we can bound the second moment
of $\widetilde{R}_{ss}(\tau)$ in both the real and complex cases:
\begin{align*}
\E\left[|\widetilde{R}_{ss}(\tau)|^{2}\right] & \le\frac{|\Omega|^{2}}{4\pi^{2}m^{2}}\E\left|X(\tau)\right|^{2}\\
& =\frac{|A|^{2}|\Omega|^{2}}{4\pi^{2}m^{2}}\sum_{k_{1}=1}^{m}\sum_{k_{2}=1}^{m}\E\left[|\s(\omega_{k_{1}})|^{2}|\s(\omega_{k_{2}})|^{2}e^{-\iunit\omega_{k_{1}}(\tau-\tau_{0})}e^{\iunit\omega_{k_{2}}(\tau-\tau_{0})}\right]\\
&
= \frac{|A|^{2}|\Omega|^{2}}{4\pi^{2}m^{2}}\sum_{k_{1}=1}^{m} \E\left[|\s(\omega_{k_{1}})|^{4}\right] \\ & ~~~  + \frac{|A|^{2}|\Omega|^{2}}{4\pi^{2}m^{2}}\sum_{k_{1}=1}^{m} \sum_{k_{2} \neq k_{1}}\E_{\omega_{k_1}}\left[|\s(\omega_{k_{1}})|^{2}e^{-\iunit\omega_{k_{1}}(\tau-\tau_{0})}\right] \E_{\omega_{k_2}}\left[|\s(\omega_{k_{2}})|^{2}e^{\iunit\omega_{k_{2}}(\tau-\tau_{0})}\right] \\
&
= \frac{|A|^{2}|\Omega|^{2}}{4\pi^{2}m^{2}}\sum_{k_{1}=1}^{m} \frac{1}{|\Omega|} \int_\Omega |\s(\omega)|^{4} \; d\omega  \\ & ~~~ +  \frac{|A|^{2}|\Omega|^{2}}{4\pi^{2}m^{2}}\sum_{k_{1}=1}^{m}  \sum_{k_{2} \neq k_{1}} \frac{1}{|\Omega|^2} \left(\int_\Omega |\s(\omega)|^{2} e^{\iunit\omega (\tau-\tau_{0})} \; d\omega \right)^* \left(\int_\Omega |\s(\omega)|^{2} e^{\iunit\omega (\tau-\tau_{0})} \; d\omega \right) \\
& =\frac{|A|^{2}|\Omega|^{2}}{4\pi^{2}m^{2}}\left(\frac{m}{|\Omega|}\|\widehat{s}\|_{4}^{4}+\frac{m(m-1)4\pi^{2}}{|\Omega|^{2}}\left|R_{ss}(\tau-\tau_{0})\right|^{2}\right)\\
&
= \frac{|A|^{2}|\Omega|\|\widehat{s}\|_{4}^{4}}{4\pi^{2}m}+\frac{m-1}{m}|A|^2|R_{ss}(\tau-\tau_0)|^2,
\end{align*}
whereas $|\E[\widetilde{R}_{ss}(\tau)]|^{2} = |A|^2|R_{ss}(\tau-\tau_0)|^2$. Therefore,
\begin{align*}
\operatorname{Var}\left[\widetilde{R}_{ss}(\tau)\right]=\E\left[|\widetilde{R}_{ss}(\tau)|^{2}\right]-\left|\E[\widetilde{R}_{ss}(\tau)]\right|^{2}\leq\frac{\eta^{2}}{m}\left(\mu_{1}^{2}-\frac{4\pi^{2}|R_{ss}(\tau-\tau_{0})|^{2}}{\|\widehat{s}\|_{2}^{4}}\right)\leq\frac{\eta^{2}\mu_{1}^{2}}{m}.
\end{align*}
Using Jensen's inequality, we then obtain a bound for the expected deviation
of $\widetilde{R}_{ss}(\tau)$ from its mean at a fixed shift $\tau$:
\begin{equation}
\E\left|\widetilde{R}_{ss}(\tau)-AR_{ss}(\tau-\tau_{0})\right|~\leq~\frac{\eta\mu_{1}}{\sqrt{m}}.\label{eq:Epoint}\end{equation}
As expected, this deviation gets smaller as $m$ increases, and scales with $\mu_{1}$.

Our first theorem gives us a {\em uniform} bound for the expected
maximum deviation of $\widetilde{R}_{ss}(\tau)$ from its mean over
all $\tau\in T$. The following result is proved in Section~\ref{sec:mainproofs}.
\begin{thm}
	\label{th:Emain}
	Suppose that $\TO\ge 3$.\footnote{If $\TO < 3$, this theorem and all of our bounds still hold but with weaker constants.} Then the autocorrelation function estimate $\widetilde{R}_{ss}(\tau)$, as defined in (\ref{eq:acfestboth}), obeys
	\begin{align}
		\mathbb{E}\sup_{\tau\in T}
		\left|\widetilde{R}_{ss}(\tau)-AR_{ss}(\tau-\tau_{0})\right|&
		\leq \frac{\eta\mu_{1}}{\sqrt{m}}\cdot\left(4.25\sqrt{\log(2\TO)}+2.28\right)\nonumber\\
		&\leq 5.96\cdot\frac{\eta\mu_{1}}{\sqrt{m}}\cdot\sqrt{\log(2\TO)}.
		\label{eq:Emain}
	\end{align}
\end{thm}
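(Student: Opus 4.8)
The plan is to adapt the chaining proof of Dudley's inequality to the centered process $Z(\tau):=\widetilde{R}_{ss}(\tau)-AR_{ss}(\tau-\tau_{0})$. First I would reduce to the complex case: since $|\mathrm{Re}(z)|\le|z|$, the real-case deviation is pointwise dominated by the complex-case deviation, so it suffices to bound $\E\sup_{\tau}|Z(\tau)|$ with $Z(\tau)=\frac{|\Omega|}{2\pi m}\bigl(X(\tau)-\E X(\tau)\bigr)$. By \eqref{eq:X} this is a normalized sum of $m$ i.i.d.\ centered terms $W_{k}(\tau):=A|\s(\omega_{k})|^{2}e^{\iunit\omega_{k}(\tau-\tau_{0})}$, so I would apply the standard symmetrization inequality to obtain
\begin{equation*}
\E\sup_{\tau\in T}|Z(\tau)|\;\le\;2\,\E\sup_{\tau\in T}|Z^{\epsilon}(\tau)|,\qquad Z^{\epsilon}(\tau):=\frac{|\Omega|}{2\pi m}\sum_{k=1}^{m}\epsilon_{k}W_{k}(\tau),
\end{equation*}
where the $\epsilon_{k}$ are independent Rademacher signs. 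The point of symmetrizing is that, conditioned on the samples $\{\omega_{k}\}$, the process $Z^{\epsilon}$ is subgaussian with respect to a data-dependent pseudometric, and crucially this metric brings in $\|\s\|_{4}$ (hence $\mu_{1}$) rather than $\|\s\|_{\infty}$.

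Next I would set up the conditional chaining. Splitting into real and imaginary parts and applying Hoeffding's inequality shows that, conditioned on $\{\omega_{k}\}$, the increments of $Z^{\epsilon}$ are subgaussian with respect to $\widetilde{d}_{\omega}$, where
\begin{equation*}
\widetilde{d}_{\omega}(\tau,\tau')^{2}=\left(\frac{|\Omega|}{2\pi m}\right)^{2}|A|^{2}\sum_{k=1}^{m}|\s(\omega_{k})|^{4}\,\bigl|e^{\iunit\omega_{k}(\tau-\tau_{0})}-e^{\iunit\omega_{k}(\tau'-\tau_{0})}\bigr|^{2}.
\end{equation*}
The decisive observation is the two-regime bound $|e^{\iunit\omega_{k}(\tau-\tau_{0})}-e^{\iunit\omega_{k}(\tau'-\tau_{0})}|\le\min(\omega_{\mathrm{max}}|\tau-\tau'|,2)$: the Lipschitz branch governs small scales, while the bounded branch caps the diameter of $T$ in $\widetilde{d}_{\omega}$ at $D_{\omega}:=2W_{\omega}$ with $W_{\omega}:=\frac{|A||\Omega|}{2\pi m}\bigl(\sum_{k}|\s(\omega_{k})|^{4}\bigr)^{1/2}$, a quantity independent of $|T|$. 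Since an interval of Euclidean length $|T|$ has covering number $N(T,\widetilde{d}_{\omega},\varepsilon)\le 1+W_{\omega}\omega_{\mathrm{max}}|T|/\varepsilon$ in the Lipschitz regime, Dudley's inequality gives (conditionally) $\E_{\epsilon}\sup_{\tau}|Z^{\epsilon}(\tau)|\lesssim\int_{0}^{2W_{\omega}}\sqrt{\log\bigl(1+W_{\omega}\omega_{\mathrm{max}}|T|/\varepsilon\bigr)}\,d\varepsilon$.

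The payoff, and the step I expect to be the main obstacle, is precisely that the integral must be truncated at the \emph{saturated} diameter $2W_{\omega}$ rather than at the much larger Lipschitz diameter $\sim W_{\omega}\omega_{\mathrm{max}}|T|$; chaining naively to the latter would produce a useless bound growing linearly in $\TO$, whereas the $\min(\cdot,2)$ saturation is what converts this into logarithmic growth. Substituting $\varepsilon=W_{\omega}\delta$ factors out $W_{\omega}$ and leaves $\int_{0}^{2}\sqrt{\log(1+\omega_{\mathrm{max}}|T|/\delta)}\,d\delta$; since $\omega_{\mathrm{max}}|T|=\TO/2$, evaluating this integral produces exactly the $\sqrt{\log(2\TO)}$ factor together with an additive constant. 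The assumption $\TO\ge3$ is what allows one to absorb that additive term and pass from the first line of \eqref{eq:Emain} to the second (it forces $\sqrt{\log(2\TO)}\ge\log^{1/2}6$, which is enough to dominate the constant $2.28$).

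Finally I would take expectation over the samples. By Jensen's inequality (concavity of $\sqrt{\cdot}$),
\begin{equation*}
\E_{\omega}W_{\omega}\le\frac{|A||\Omega|}{2\pi m}\Bigl(\E\sum_{k}|\s(\omega_{k})|^{4}\Bigr)^{1/2}=\frac{|A||\Omega|}{2\pi m}\sqrt{\frac{m}{|\Omega|}}\,\|\s\|_{4}^{2}=\frac{\eta\mu_{1}}{\sqrt{m}},
\end{equation*}
using $\E|\s(\omega_{k})|^{4}=\|\s\|_{4}^{4}/|\Omega|$ and the definitions of $\eta$ and $\mu_{1}$; this is exactly what turns $W_{\omega}$ into the target normalization $\eta\mu_{1}/\sqrt{m}$. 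Combining the symmetrization factor, the Dudley constant, and the value of the entropy integral yields the explicit constants $4.25$ and $2.28$. The structural argument is clean; the delicate part is tracking all of these numerical constants — the symmetrization factor, the conditional Dudley constant, and the evaluation of $\int_{0}^{2}\sqrt{\log(1+a/\delta)}\,d\delta$ — sharply enough to reach the stated values.
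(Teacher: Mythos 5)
Your proposal is correct and follows essentially the same route as the paper: symmetrize to a conditionally subgaussian Rademacher process, chain \`a la Dudley with the increment metric controlled by $\min(\omega_{\mathrm{max}}|\tau-\tau'|,2)$ so that the diameter saturates at $\sim W_{\omega}$ while the Lipschitz branch yields the $\sqrt{\log(2|\Omega||T|)}$ entropy factor, and finally apply Jensen over the samples to convert $\bigl(\sum_{k}|\widehat{s}(\omega_{k})|^{4}\bigr)^{1/2}$ into $\eta\mu_{1}/\sqrt{m}$. The only cosmetic differences are that the paper symmetrizes via an independent copy (factor $1$ in the symmetrization but a $\sqrt{2}$ inflation of the conditional metric, versus your factor $2$ with an uninflated metric --- the resulting constants coincide since $2/\sqrt{2}=\sqrt{2}$) and carries out the dyadic chaining by hand on grids of spacing $2^{-j}|\Omega|^{-1}$ rather than invoking the packaged entropy integral, which is how it pins down the specific constants $4.25$ and $2.28$.
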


The essential difference between \eqref{eq:Epoint} and \eqref{eq:Emain} is the
factor of $\sqrt{\log(2|\Omega||T|)}$ --- this is the price we are
paying for a bound which holds uniformly over all $\tau\in T$. The
bound slowly loosens as the time-bandwidth product $|T||\Omega|$
gets larger; this effect is weak but necessary, as $|T||\Omega|$
affects the complexity of the random process. (A similar penalty arises
in standard CS bounds~\cite{candes2006robust,donoho2006compressed},
where the number of measurements required for successful, robust signal recovery
grows logarithmically with the ambient dimension of the signal space --- this logarithmic dependence is known to be sharp.) 

We note that Theorem~\ref{th:Emain} could be proved using the Dudley inequality~\cite{dudley1967sizes}, a classical tool which relates the supremum of a random process to the geometry of its index set; the main challenges arise in computing covering numbers for the index set $T$ under certain metrics defined in terms of the random process $\widetilde{R}_{ss}(\tau)$ and in adapting the Dudley argument to complex numbers. To provide better insight and to obtain sharper constants, however, our proof in Section~\ref{sec:mainproofs} more directly customizes the derivation of the Dudley inequality for our particular scenario. We also note that a simple application of the Sudakov minoration principle~\cite{LedouxBanachSpaces} (after computing the necessary metrics) reveals that the bound in Theorem~\ref{th:Emain} is indeed sharp (up to a constant). Intuitively, $|\Omega||T|$ is the number of points on a grid of resolution $1/|\Omega|$ on $T$ necessary to control the deviations of the random process.

Theorem~\ref{th:Emain} demonstrates that $\widetilde{R}_{ss}(\tau)$
is close to its mean in expectation; our second theorem demonstrates
that it is also close with high probability. The following is proved in Section~\ref{sec:mainproofs}.
\begin{thm} \label{th:Pmain}
	Fix $\delta>0$ and let
	\begin{equation} U = C_1 \cdot
    \max\left(\frac{\eta\mu_{1}}{\sqrt{m}},~~\frac{\eta\mu_{2}}{m}\cdot\sqrt{\log(4/\delta)}\right)\cdot\sqrt{\log(12\TO/\delta)},\label{eq:U}
	\end{equation}
    where $C_1$ is a known universal constant.
	If $\left|\Omega\right|\left|T\right|\ge3$,
	then the autocorrelation function estimate $\widetilde{R}_{ss}(\tau)$,
	as defined in (\ref{eq:acfestboth}), obeys
	\begin{equation}
		  \P{ \sup_{\tau\in T}\left|\widetilde{R}_{ss}(\tau)-AR_{ss}(\tau-\tau_{0})\right|>U} ~\leq~\delta.\label{eq:pconc}
	\end{equation}
\end{thm}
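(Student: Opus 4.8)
The plan is to upgrade the in-expectation bound of Theorem~\ref{th:Emain} to a tail bound by re-running the same customized chaining argument, but now tracking tail probabilities through Bernstein's inequality rather than expectations of the increments. Throughout, write $V(\tau):=\widetilde{R}_{ss}(\tau)-AR_{ss}(\tau-\tau_{0})$, which by \eqref{eq:X} and \eqref{eq:acfestboth} is a sum of $m$ independent, zero-mean, bounded (complex) random variables indexed by $\tau$; in the real case the elementary inequality $|\mathrm{Re}(V(\tau))|\le|V(\tau)|$ lets me reduce to the complex case at the cost of only a constant. Since each of the $m$ independent summands of $V(\tau)$ is bounded by $\lesssim\eta\mu_{2}/m$ and the variance computation preceding \eqref{eq:Epoint} gives $\operatorname{Var}[V(\tau)]\le\eta^{2}\mu_{1}^{2}/m$, the two concentration parameters $\mu_{1}$ (a variance scale) and $\mu_{2}$ (a uniform bound) enter in exactly the roles they play in \eqref{eq:U}.

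First I would record a per-increment Bernstein estimate. For $\tau,\tau'\in T$ the increment $V(\tau)-V(\tau')$ is again a zero-mean sum; using $|e^{\iunit\omega\tau}-e^{\iunit\omega\tau'}|\le\min(2,|\omega||\tau-\tau'|)$ and $|\omega|\le|\Omega|/2$ on $\Omega$, its variance proxy is at most $(\eta\mu_{1}/\sqrt{m})^{2}\rho(\tau,\tau')^{2}$ and each summand is bounded by $(\eta\mu_{2}/m)\rho(\tau,\tau')$, where $\rho(\tau,\tau'):=\min(2,\tfrac{1}{2}|\Omega||\tau-\tau'|)$. Bernstein's inequality then gives a tail of the mixed sub-Gaussian/sub-exponential form, and the key observation is that, at a target confidence $\delta'$, inverting this tail yields a deviation of a single ``effective sub-Gaussian'' size,
\begin{equation*}
|V(\tau)-V(\tau')|\;\lesssim\;\rho(\tau,\tau')\cdot\max\!\left(\frac{\eta\mu_{1}}{\sqrt{m}},~\frac{\eta\mu_{2}}{m}\sqrt{\log(1/\delta')}\right)\cdot\sqrt{\log(1/\delta')},
\end{equation*}
with probability at least $1-\delta'$. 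This is precisely the source of the maximum in \eqref{eq:U}: the first argument is the variance-dominated regime and the second is the bounded-summand (sub-exponential) regime of Bernstein's inequality.

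With this effective sub-Gaussian increment bound in hand, I would re-run the chaining from the proof of Theorem~\ref{th:Emain}. Because $T$ is an interval of length $|T|$ and $\rho$ saturates at the scale $1/|\Omega|$, the covering numbers obey $N(T,\rho,\epsilon)\lesssim|\Omega||T|/\epsilon$; so building nested nets at geometrically decreasing scales and applying the increment bound with a union bound over the $\lesssim\TO$ links at each level produces a uniform tail. Allocating the total failure probability $\delta$ across the levels turns each $\log(1/\delta')$ into $\log(12\TO/\delta)$ for the chaining/entropy factor and into $\log(4/\delta)$ for the confidence factor that sits inside the maximum, and the geometric summation over scales collapses to a single $\sqrt{\log(12\TO/\delta)}$ exactly as the entropy integral collapsed in Theorem~\ref{th:Emain}. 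Multiplying the effective scale by this factor and absorbing the numerical constants into $C_{1}$ yields the stated bound $U$.

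The main obstacle is the careful bookkeeping in the last two steps: carrying the \emph{two} Bernstein regimes (the $\mu_{1}$ variance term and the $\mu_{2}$ bound term) simultaneously through every chaining level so that they separate cleanly into the single maximum in \eqref{eq:U}, rather than accumulating across scales. A secondary technical point is establishing the Bernstein increment bound for \emph{complex}-valued sums --- handled by splitting into real and imaginary parts and paying a constant, which is the likely origin of the constants $4$ and $12$ --- together with verifying the covering-number estimate for $T$ under $\rho$. I would also note that one could instead combine the expectation bound of Theorem~\ref{th:Emain} with a concentration inequality for suprema of empirical processes of Talagrand--Bousquet type, but re-running the chaining keeps the constants explicit and reproduces the precise form of \eqref{eq:U}.
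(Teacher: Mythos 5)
Your overall architecture --- per-increment tail bounds, dyadic nets on $T$ at scales tied to $1/|\Omega|$, union bounds over the links, geometric collapse of the entropy sum --- matches the paper's, and you correctly identify the roles of $\mu_1$ (variance scale) and $\mu_2$ (uniform bound on the summands). But there is a genuine gap at exactly the point you flag as ``bookkeeping'': direct Bernstein chaining does \emph{not} produce the form of $U$ in \eqref{eq:U}. When you union-bound a mixed sub-Gaussian/sub-exponential Bernstein tail over the $\sim 2^{j}\TO$ links at level $j$, \emph{both} logarithmic factors in the inverted tail become $\log(2^{j}\TO/\delta)$ --- the sub-exponential ($\mu_2$) regime pays the full entropy just as the sub-Gaussian one does. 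Summing over scales then yields a deviation of order
$\frac{\eta\mu_{1}}{\sqrt{m}}\sqrt{\log(\TO/\delta)}+\frac{\eta\mu_{2}}{m}\log(\TO/\delta)$,
whose second term exceeds the claimed $\frac{\eta\mu_{2}}{m}\sqrt{\log(4/\delta)\log(12\TO/\delta)}$ by a factor of roughly $\sqrt{\log(\TO/\delta)/\log(4/\delta)}$, which is unbounded as $\TO$ grows. There is no allocation of the failure probability across levels that keeps the factor \emph{inside} the maximum at $\sqrt{\log(4/\delta)}$ while letting only the outer factor grow to $\sqrt{\log(12\TO/\delta)}$; so your argument, carried out honestly, proves the theorem only with a strictly weaker $U$.

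The paper obtains the stated form by decoupling the two sources of randomness instead of mixing them inside every increment. It first symmetrizes (Lemmas~\ref{lem:Y to Z} and~\ref{lem:Y to Z2}, producing $Z'(\tau)=A\sum_{k}\epsilon_{k}(\cdots)$) and then conditions on the frequencies $\{\omega_{k}\},\{\omega_{k}'\}$. Conditionally, the increments are Rademacher sums, hence \emph{purely sub-Gaussian} by Hoeffding (Lemma~\ref{lem:inctails}) with a variance proxy $M_{3}^{2}=\sum_{k}|\widehat{s}(\omega_{k})|^{4}+|\widehat{s}(\omega_{k}')|^{4}$ that is a single random variable not depending on $\tau$. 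Chaining the conditional sub-Gaussian process produces the lone $\sqrt{\log(12\TO/\delta)}$ in \eqref{eq:tail bound noise-free cond on omega_k}, and Bernstein's inequality is applied exactly once --- to $M_{3}^{2}$, with failure probability $\delta/4$ --- which is the sole origin of the $\sqrt{\log(4/\delta)}$ sitting inside the maximum: $M_{3}\lesssim\max\left(M_{1},M\sqrt{\log(4/\delta)}\right)$ translates, after multiplying by $|A||\Omega|/(2\pi m)$, into $\max\left(\eta\mu_{1}/\sqrt{m},\,(\eta\mu_{2}/m)\sqrt{\log(4/\delta)}\right)$. To prove the theorem as stated you need this (or an equivalent) decoupling of the bounded part of the randomness from the chained part; otherwise you must weaken $U$, which would also propagate into the measurement bounds of Corollaries~\ref{corl:main} and~\ref{corl:noise}.
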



\subsection{Example: A Gaussian Pulse}

\label{sec:gaussianpulse}

A concrete example will help illustrate what Theorems~\ref{th:Emain}
and~\ref{th:Pmain} are telling us about the effectiveness of the
compressive matched filter. Suppose that $s_{0}(t)$ is a real-valued
Gaussian pulse with unit energy, \begin{equation}
s_{0}(t)=\pi^{-1/4}a^{-1/2}e^{-t^{2}/2a^{2}},\label{eq:gaussian}\end{equation}
 We will assume that this pulse is received with a time-of-arrival
$\tau_{0}$ in the interval $T=[0,1]$, and that it is scaled by an
unknown real-valued amplitude $A$. We will also assume that the width
$a$ of the pulse is much less than $1$, and so to estimate $\tau_{0}$
reliably from samples in the time domain, we would need on the order
of $1/a$ samples on $T$. Figure~\ref{fig:s0gauss}(a) shows
an example received signal $A \cdot s_{0}(t-\tau_{0})$ for $A=1$, $\tau_{0}=0.4$
and $a=1/200$.

The Fourier transform of $s_{0}$ is \[
\widehat{s}_{0}(\omega)=\sqrt{2a}\pi^{1/4}e^{-a^{2}\omega^{2}/2}.\]
 We will take as our sampling domain $\Omega=[-3/a,3/a]$; $\widehat{s}$
is simply $\widehat{s}_{0}$ bandlimited to this interval. The bandlimited
signal $\widehat{s}$ is nearly identical to $\widehat{s}_{0}$; we
can calculate \[
\|\widehat{s}_{0}\|_{2}^{2}=2\pi,\quad\|\widehat{s}_{0}\|_{4}^{2}=(2\pi)^{3/4}\sqrt{a},\]
 and standard bounds for integrating the tails of $e^{-a^{2}\omega^{2}}$
show us that $\|\widehat{s}\|_{2}^{2}$ is within $1.4\cdot10^{-4}$
of $\|\widehat{s}_{0}\|_{2}^{2}$ and $\|\widehat{s}\|_{4}^{2}$ is
within $10^{-8}$ of $\|\widehat{s}_{0}\|_{4}^{2}$. We can safely
say that $\mu_{1}\leq1.6$ and $\mu_{2}\leq3.4$ (these values are
the same for all choices of $a$). The Fourier transform for $a=1/200$
over the range $\Omega=[-600,600]$ is shown in Figure~\ref{fig:s0gauss}(b).

For a received signal with parameters $A=1$, $\tau_{0}=0.4$, and $a=1/200$, Figure~\ref{fig:Rsshat} shows the estimate $\widetilde{R}_{ss}(\tau)$ of the scaled, shifted autocorrelation function based on $m=10$, $20$, and $50$ random frequency domain samples, along with the true scaled, shifted autocorrelation function $A\cdot R_{ss}(\tau-\tau_{0})\approx e^{-\left(\tau-\tau_{0}\right)^{2}/(4a^{2})}$. In all cases, we see that $|\widetilde{R}_{ss}(\tau)|$ reaches its peak exactly at $\tau_{0}$; as noted in Section~\ref{sub:mainclean}, this is to be expected in the case of noiseless measurements. However, we also see a gap between the peak at $\tau_{0}$ and the remainder of the estimate that becomes larger as $m$ increases.

To see how this behavior is supported by our theory, note that for a Gaussian pulse with $A=1$, we know that the mean function $R_{ss}(\tau-\tau_{0})=1$ for $\tau=\tau_{0}$ and $R_{ss}(\tau-\tau_{0})\leq0.1054$ for $|\tau-\tau_{0}|>3a$. (For simplicity, these calculations assume $s = s_{0}$ exactly.) If $U$ is the value from
\eqref{eq:U} in Theorem~\ref{th:Pmain}, then we are guaranteed
that the difference between the peak value $\widetilde{R}_{ss}(\tau_{0})$
and any $\widetilde{R}_{ss}(\tau)$ for $|\tau-\tau_{0}|\ge3a$ is
at least $\epsilon$ when $1-U\geq0.1054+U+\epsilon$, i.e., when $U\leq\frac{0.8946-\epsilon}{2}$.
Note that $U$ can be made small enough for large enough $m$, namely $m\gtrsim\log(1/a)$. Thus using the compressive matched filter, we can reliably infer the time-of-arrival
from $\sim\log(1/a)$ randomly chosen samples in the frequency domain
as opposed to $\sim1/a$ equally spaced samples in the time domain.

\begin{figure}[t]
 \centering \begin{tabular}{cc}
\includegraphics[height=1.4in]{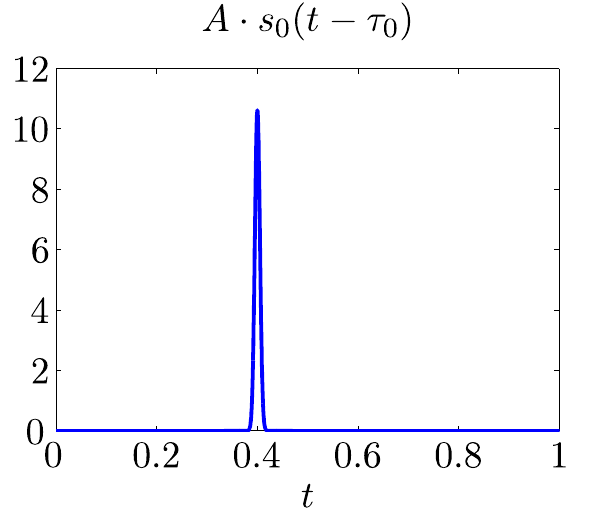}   & \includegraphics[height=1.4in]{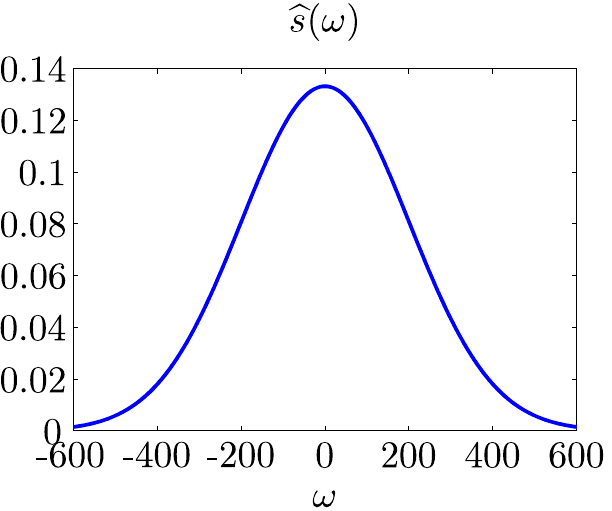} \tabularnewline
\small (a)   & \small (b) \tabularnewline
\end{tabular}\vspace{-2mm}\caption{\sl\small{(a) The return signal for Gaussian pulse from \eqref{eq:gaussian}
with $a=1/200$ and return parameters $A=1$ and $\tau_{0}=0.4$.
(b) The Fourier transform $\widehat{s}(\omega)$ on $\Omega=[-600,600]$.
Since $\widehat{s}$ is relatively diffuse over $\Omega$, both measures
of frequency concentration are not too large: $\mu_{1}\leq1.6,~\mu_{2}\leq3.4$. }}
\label{fig:s0gauss}
\end{figure}

\begin{figure}[t]
 \centering \begin{tabular}{ccc}
\includegraphics[height=1.4in]{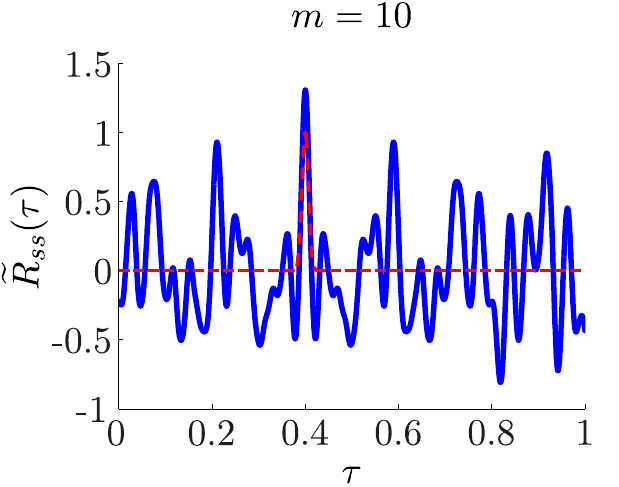}   & \includegraphics[height=1.4in]{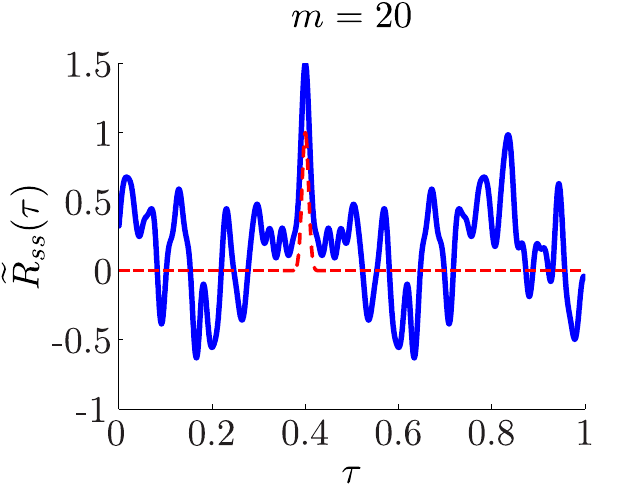}   & \includegraphics[height=1.4in]{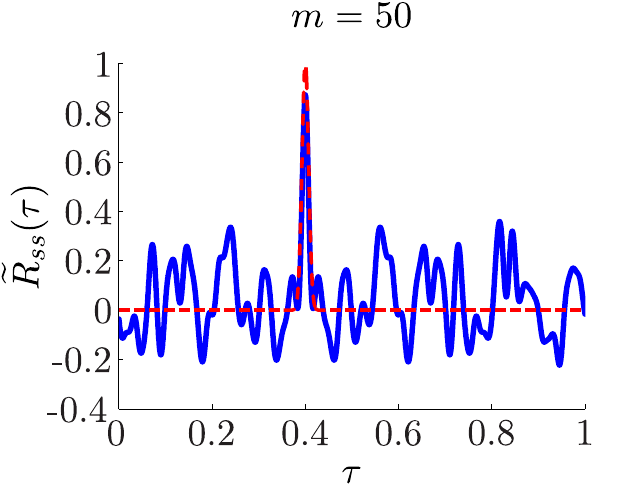} \tabularnewline
\small (a)   & \small (b)   & \small (c) \tabularnewline
\end{tabular}\vspace{-2mm}\caption{\sl\small{Estimated scaled, shifted autocorrelation function $\widetilde{R}_{ss}(\tau)$
(solid blue line) and true scaled, shifted autocorrelation function
$A\cdot R_{ss}(\tau-\tau_{0})$ (dashed red line) for (a) $m=10$,
(b) $m=20$, (c) and $m=50$ measurements. $\widetilde{R}_{ss}(\tau)$
is a random process whose mean is $A\cdot R_{ss}(\tau-\tau_{0})$;
as the number of measurements increases, this process deviates less
from its mean (see Theorems~\ref{th:Emain} and~\ref{th:Pmain}).}}
\label{fig:Rsshat}
\end{figure}


\subsection{General Noiseless Performance Characterization}
\label{sec:gnpc}

Our statements about about quantifying the number of samples needed
to ensure a clear separation between the peak of $|\widetilde{R}_{ss}(\tau)|$
and the function away from the peak are easily generalized. The statements
in this section can be interpreted as a condition on the number of
samples needed to ensure the successful operation of the compressive
matched filter. The result below is interesting when the underlying
autocorrelation function $R_{ss}(\tau)$ has one main peak (a {}``main
lobe'') centered at $\tau=0$, and is relatively small away from
the origin. This situation is typical, but similar statements could
be formulated depending on the assumptions one wishes to impose on
$s(t)$ (and its autocorrelation function).

\begin{corl} \label{corl:main} Suppose there exist constants $\alpha_{1}\in[0,1)$ and $\alpha_{2}>0$ such that $|R_{ss}(\tau)|\le\alpha_{1}R_{ss}(0)$
for all $|\tau|>\alpha_{2}$, and choose $\epsilon \in [0,1-\alpha_1]$. Suppose also that $\left|\Omega\right|\left|T\right|\ge3$
and that
%
\begin{equation}
m > C_2 \cdot \max\left(\frac{\log(12\TO/\delta)}{(1-\alpha_{1}-\epsilon)^2}\cdot \mu_{1}^{2} ,\; \frac{\sqrt{\log(4/\delta)\log(12\TO/\delta)}}{1-\alpha_1-\epsilon}\cdot\mu_{2} \right),\label{eq:mbound2new}
\end{equation}
where $C_2$ is a known universal constant. Then with probability at least $1-\delta$, $|\widetilde{R}_{ss}(\tau_{0})| > |\widetilde{R}_{ss}(\tau)| + \epsilon \eta$ for all $\tau$ such that $|\tau-\tau_{0}| > \alpha_2$.
\end{corl}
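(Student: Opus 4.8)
The plan is to derive this corollary as a deterministic consequence of the uniform tail bound in Theorem~\ref{th:Pmain}. First I would condition on the event that the conclusion of Theorem~\ref{th:Pmain} holds; by that theorem this event has probability at least $1-\delta$, and on it every value of the process lies within $U$ of its mean, i.e. $|\widetilde{R}_{ss}(\tau)-AR_{ss}(\tau-\tau_{0})|\le U$ simultaneously for all $\tau\in T$, where $U$ is the quantity defined in \eqref{eq:U}. The entire remaining argument is then purely analytic.

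On this event I would bound the two relevant quantities. At the true delay, the reverse triangle inequality gives $|\widetilde{R}_{ss}(\tau_{0})|\ge|AR_{ss}(0)|-U=\eta-U$. For any $\tau$ with $|\tau-\tau_{0}|>\alpha_{2}$, the hypothesis $|R_{ss}(\tau-\tau_{0})|\le\alpha_{1}R_{ss}(0)$ together with the (ordinary) triangle inequality and the identity $\eta=|A|R_{ss}(0)$ yields $|\widetilde{R}_{ss}(\tau)|\le|A|\,|R_{ss}(\tau-\tau_{0})|+U\le\alpha_{1}\eta+U$. Subtracting, the gap obeys $|\widetilde{R}_{ss}(\tau_{0})|-|\widetilde{R}_{ss}(\tau)|\ge(1-\alpha_{1})\eta-2U$ uniformly over all such $\tau$. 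Hence the desired conclusion $|\widetilde{R}_{ss}(\tau_{0})|>|\widetilde{R}_{ss}(\tau)|+\epsilon\eta$ holds as soon as $(1-\alpha_{1})\eta-2U>\epsilon\eta$, that is, as soon as $U<\tfrac{1}{2}(1-\alpha_{1}-\epsilon)\eta$.

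The last step is to translate this single inequality on $U$ into the stated bound \eqref{eq:mbound2new} on $m$. Substituting \eqref{eq:U} and dividing through by $\eta$ (the $\eta$ factors cancel, which is why the final condition is independent of the unknown amplitude $A$ and signal energy), the requirement becomes $C_{1}\max\!\big(\mu_{1}/\sqrt{m},\,(\mu_{2}/m)\sqrt{\log(4/\delta)}\big)\sqrt{\log(12\TO/\delta)}<\tfrac{1}{2}(1-\alpha_{1}-\epsilon)$. Using that $\max(a,b)<c$ is equivalent to $a<c$ and $b<c$, I would split this into two conditions: the $\mu_{1}$ term forces $m>4C_{1}^{2}\,\mu_{1}^{2}\log(12\TO/\delta)/(1-\alpha_{1}-\epsilon)^{2}$, and the $\mu_{2}$ term forces $m>2C_{1}\,\mu_{2}\sqrt{\log(4/\delta)\log(12\TO/\delta)}/(1-\alpha_{1}-\epsilon)$. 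Taking $C_{2}=\max(4C_{1}^{2},2C_{1})$ makes hypothesis \eqref{eq:mbound2new} imply both, and therefore $U<\tfrac{1}{2}(1-\alpha_{1}-\epsilon)\eta$, which completes the proof.

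I do not anticipate a genuine obstacle here, since all the probabilistic content is already packaged in Theorem~\ref{th:Pmain}. The only points requiring care are applying the \emph{reverse} triangle inequality at $\tau_{0}$ and the ordinary one away from it, so that the two-sided peak separation is controlled by the single uniform deviation $U$, and the bookkeeping of constants when converting $U<\tfrac{1}{2}(1-\alpha_{1}-\epsilon)\eta$ into a lower bound on $m$ — in particular confirming that the $\eta$ (hence $A$) dependence cancels and that the two branches of the $\max$ map cleanly onto the two terms in \eqref{eq:mbound2new}.
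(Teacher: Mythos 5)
Your proposal is correct and follows essentially the same route as the paper's proof: condition on the event of Theorem~\ref{th:Pmain}, bound $|\widetilde{R}_{ss}(\tau_{0})|$ from below and $|\widetilde{R}_{ss}(\tau)|$ from above by $\eta-U$ and $\alpha_{1}\eta+U$ respectively, reduce everything to $U<\tfrac{1}{2}(1-\alpha_{1}-\epsilon)\eta$, and convert that into \eqref{eq:mbound2new} with $C_{2}=\max(4C_{1}^{2},2C_{1})$. The constant bookkeeping matches the paper's exactly, so nothing further is needed.
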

\begin{proof} Supposing we have the concentration suggested
by (\ref{eq:pconc}), we will have $|\widetilde{R}_{ss}(\tau_{0})|\ge |A|R_{ss}(0)-U$
and $|\widetilde{R}_{ss}(\tau)|\le\alpha_{1}|A|R_{ss}(0)+U$ for all
$\tau$ such that $|\tau-\tau_{0}|>\alpha_{2}$. If \eqref{eq:mbound2new}
is satisfied with $C_2 = \max(4 C_1^2,2C_1)$, then
$U < \frac{1}{2} |A|R_{ss}(0) (1-\alpha_1-\epsilon)$ and
it follows that $|A|R_{ss}(0)-U > \alpha_{1}|A|R_{ss}(0)+U+\epsilon |A|R_{ss}(0)$.
A slightly stronger version of this corollary also holds if one omits $C_2$ and chooses constants of $4 C_1^2$ and $2C_1$ for the first and second terms in \eqref{eq:mbound2new}, respectively.
\end{proof}

For the case of noiseless measurements, Corollary~\ref{corl:main} ensures that no values of $\tau$ far from $\tau_{0}$ can give $|\widetilde{R}_{ss}(\tau)|$ close to $|\widetilde{R}_{ss}(\tau_{0})|$. This behavior will become particularly relevant in Section~\ref{sec:Robustness}, where we introduce noise into the measurement process.

We can reveal some of the intuition behind the measurement bound (\ref{eq:mbound2new})
by considering three special cases for the signal $s$. First, consider
$s(t)$ for which $|\widehat{s}(\omega)|$ is uniform over $\Omega$.
In this case, we have $|\widehat{s}(\omega)|=|\Omega|^{-1/2}\|\widehat{s}\|_{2}$
for all $\omega\in\Omega$ and so $\mu_{1}=\mu_{2}=1$. This means
that the requisite number of random measurements (\ref{eq:mbound2new})
for successful operation of the compressive matched filter scales
as $m\sim\log(|\Omega||T|)$.

Alternatively, consider the case where $|\widehat{s}(\omega)|$ is
not perfectly uniform over $\Omega$, but rather we assume that for
some $\beta\ge1$ it obeys $|\widehat{s}(\omega)|\le\beta|\Omega|^{-1/2}\|\widehat{s}\|_{2}$
for all $\omega\in\Omega$, and so $\mu_{2}\leq\beta^{2}$. Using
the fact that $\|\widehat{s}\|_{4}^{4}\le\|\widehat{s}\|_{2}^{2}\|\widehat{s}\|_{\infty}^{2}$
gives us the estimate $\mu_{1}\leq \beta$.  Therefore, (\ref{eq:mbound2new})
now demands that $m\sim\beta^{2}\log(|\Omega||T|)$ --- the factor
of $\beta^{2}$ is the price we pay for the non-uniformity of $\widehat{s}$.

As a final example, consider the special case where $s$ is bandlimited
to some interval $\Omega_{B}\subseteq\Omega$ and $|\widehat{s}(\omega)|$
is uniform over $\Omega_{B}$, i.e., \[
|\widehat{s}(\omega)|=\left\{ \begin{array}{ll}
\frac{\|\widehat{s}\|_{2}}{\sqrt{|\Omega_{B}|}}, & \omega\in\Omega_{B}\\
0, & \omega\in\Omega\backslash\Omega_{B},\end{array}\right.\]
 One way we could interpret this situation is that we have chosen
the sampling domain $\Omega$ to be too large. In this case, we have
$\mu_{1}=\sqrt{|\Omega|/|\Omega_{B}|}$ and $\mu_{2}=|\Omega|/|\Omega_{B}|$,
and so (\ref{eq:mbound2new}) now demands that $m\sim(|\Omega|/|\Omega_{B}|)\log(|\Omega||T|)$.
The penalty $|\Omega|/|\Omega_{B}|$ is a natural oversampling factor
since, on average, only $|\Omega_{B}|$ out of every $|\Omega|$ random
Fourier samples will carry any information about the signal.


\subsection{Robustness in the Presence of Measurement Noise}

\label{sec:Robustness}

We now extend our analysis to account for additive complex-valued
noise in our observations. For random frequencies $\{\omega_{k}\}$
taken uniformly from $\Omega$, we assume that the noisy measurement
vector, $y_{n}$, is formed  as \[
y_{n}[k]=Ae^{-\iunit\omega_{k}\tau_{0}}\widehat{s}_{0}(\omega_{k})+n_{k},\quad k=1,2,\ldots,m,\]
 where the additive noise terms $\{n_{k}\}$ are independent zero-mean
complex-valued Gaussian random variables%
\footnote{That is, the real and imaginary parts of each $n_{k}$ are independent,
real-valued zero-mean Gaussian random variables with variance $\frac{\sigma_{n}^{2}}{2}$.%
}{} with variance $\sigma_{n}^{2}$, and the noise vector is $n:=[n_{1},n_{2},...,n_{m}]^{T}$.
Computing the inner product of $y_{n}$ with the test vector $\psi_{\tau}$
for all $\tau\in T$ leads us to the process \begin{equation}
X_{n}(\tau):=\left\langle y_{n},\psi_{\tau}\right\rangle =X(\tau)+N(\tau),\label{eq:xndef}\end{equation}
 where $X(\tau)$ is as defined in Section \ref{sub:mainclean},
and $N(\tau)$ is the noise process that quantifies the effect of
additive noise in our analysis: \[
N(\tau):=\left\langle n,\psi_{\tau}\right\rangle =\sum_{k=1}^{m}n_{k}\widehat{s}_{0}^{*}(\omega_{k})e^{\iunit\omega_{k}\tau}=\sum_{k=1}^{m}n_{k}\widehat{s}^{*}(\omega_{k})e^{\iunit\omega_{k}\tau}.\]
The noise process is zero-mean, i.e., $\E N(\tau)=0$.

Thus, in the case of noisy observations, we can estimate the ideal
autocorrelation function $R_{ss}(\cdot)$ (up to the unknown amplitude
$A$ and translation $\tau_{0}$) simply by rescaling the noisy random
process $X_{n}(\tau)$. Let us define \begin{equation}
\widetilde{N}(\tau):=\left\{ \begin{array}{ll}
\frac{|\Omega|}{2\pi m}\cdot\mathrm{Re}(N(\tau)), & \mathrm{real~case},\\[1mm]
\frac{|\Omega|}{2\pi m}\cdot N(\tau), & \mathrm{complex~case},\end{array}\right.\label{eq:ntildeboth}\end{equation}
 and note that in either case, $\E[\widetilde{N}(\tau)]=0$. Then, if
we set \[
\widetilde{R}_{ss,n}(\tau):=\left\{ \begin{array}{ll}
\frac{|\Omega|}{2\pi m}\cdot\mathrm{Re}(X_{n}(\tau)), & \mathrm{real~case},\\[1mm]
\frac{|\Omega|}{2\pi m}\cdot X_{n}(\tau), & \mathrm{complex~case},\end{array}\right.\]
 it follows in either case that \begin{equation}
\widetilde{R}_{ss,n}(\tau)=\widetilde{R}_{ss}(\tau)+\widetilde{N}(\tau),\label{eq:rrn}\end{equation}
 where $\widetilde{R}_{ss}(\tau)$ is as defined in~(\ref{eq:acfestboth}).
This function provides an unbiased estimate of the shifted, scaled autocorrelation
function of $s(t)$, since $\mathbb{E}[\widetilde{R}_{ss,n}(\tau)]=AR_{ss}(\tau-\tau_{0})$.

We can gain some intuition for how the noise is hindering the estimation
process with a quick estimate on its expected size at a fixed point
$\tau$. In both the real and complex cases, the variance of $\widetilde{N}(\tau)$
is bounded by \begin{align*}
\operatorname{Var}\left[\widetilde{N}(\tau)\right] & \le\frac{|\Omega|^{2}}{4\pi^{2}m^{2}}\E\left|\sum_{k=1}^{m}n_{k}\widehat{s}^{*}(\omega_{k})e^{\iunit\omega_{k}\tau}\right|^{2}\\
 & =\frac{|\Omega|^{2}}{4\pi^{2}m^{2}}\sum_{k_{1}=1}^{m}\sum_{k_{2}=1}^{m}\E[n_{k_{1}}n_{k_{2}}^{*}]\E[\widehat{s}^{*}(\omega_{k_{1}})\widehat{s}(\omega_{k_{2}})e^{\iunit(\omega_{k_{1}}-\omega_{k_{2}})\tau}]\\
 & =\frac{|\Omega|^{2}}{4\pi^{2}m^{2}}\sum_{k=1}^{m}\E|n_{k}|^{2}\E|\widehat{s}(\omega_{k})|^{2}\\
 & =\frac{|\Omega|}{4\pi^{2}m}\sigma_{n}^{2}\|\widehat{s}\|_{2}^{2},\end{align*}
 and so, using Jensen's inequality, we obtain \begin{equation}
\E|\widetilde{N}(\tau)|~\leq~\frac{1}{2\pi}\cdot\sigma_{n}\cdot\frac{\sqrt{|\Omega|}\|\widehat{s}\|_{2}}{\sqrt{m}}.\label{eq:noisepoint}\end{equation}
Recall that the peak of the noiseless estimate $|\widetilde{R}_{ss}(\tau)|$
is on the order of $\left|A\right|R_{ss}(0)=\left|A\right|(2\pi)^{-1}\|\widehat{s}\|_{2}^{2}$.
Thus the noise process will overwhelm the peak of the noiseless estimate
when \begin{equation}
\sigma_{n}~\sim~\left|A\right|\|\widehat{s}\|_{2}\sqrt{\frac{m}{|\Omega|}}.\label{eq:sigmarough}\end{equation}
 Theorems~\ref{thm:noiseexp} and \ref{thm:mainNoisenew} below show
that for $m$ large enough, we will have essentially the same bound
as \eqref{eq:noisepoint} hold {\em uniformly} over the entire
search interval $T$. As a result, the amount of noise (size of $\sigma_{n}^{2}$)
the compressive matched filter can withstand is essentially (to within
constant and $\log$ factors) the same as in \eqref{eq:sigmarough}.

We start with a bound on the expected maximum of the noise process. The following result is proved in Section~\ref{sec:noiseproofs}.
\begin{thm}
	\label{thm:noiseexp}
	Suppose that $\left|\Omega\right|\left|T\right|\ge 3$.  Then the noise process $\widetilde{N}(\tau)$, as defined in (\ref{eq:ntildeboth}), obeys
	\begin{align*}
		\E\sup_{\tau\in T}\left|\widetilde{N}(\tau)\right|
	&\le \sigma_{n}\cdot\frac{\sqrt{|\Omega|}\|\widehat{s}\|_{2}}{\sqrt{m}}\cdot\left({0.199\sqrt{\log(|\Omega||T|)}}+ 0.166\right)\\
    &\le0.36\cdot\sigma_{n}\cdot\frac{\sqrt{|\Omega|}\|\widehat{s}\|_{2}}{\sqrt{m}}\cdot{\sqrt{\log(|\Omega||T|)}}.
	\end{align*}
\end{thm}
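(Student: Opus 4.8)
The plan is to exploit a structural feature that the noiseless process of Theorem~\ref{th:Emain} lacks: conditioned on the draw of frequencies $\{\omega_k\}$, the map $\tau\mapsto N(\tau)=\sum_{k}n_k\s^*(\omega_k)e^{\iunit\omega_k\tau}$ is a linear image of the Gaussian vector $n$ and hence an honest (complex) Gaussian process. I would therefore run a two-stage argument: first condition on $\{\omega_k\}$ and bound $\E_n\sup_\tau|N(\tau)|$ by chaining with respect to the canonical Gaussian metric, and then take the expectation over $\{\omega_k\}$. Because $|\mathrm{Re}(z)|\le|z|$, the same bound controls both the real and complex definitions of $\widetilde N$ in~(\ref{eq:ntildeboth}) simultaneously, exactly as in the variance computation preceding~(\ref{eq:noisepoint}).

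For the inner step, the natural metric is $d(\tau,\tau')^2=\E_n|N(\tau)-N(\tau')|^2=\sigma_n^2\sum_k|\s(\omega_k)|^2\,|e^{\iunit\omega_k\tau}-e^{\iunit\omega_k\tau'}|^2$. Two elementary estimates on $|e^{\iunit\omega_k\tau}-e^{\iunit\omega_k\tau'}|$ --- the derivative bound $\le|\omega_k|\,|\tau-\tau'|\le\tfrac{|\Omega|}{2}|\tau-\tau'|$ and the trivial bound $\le 2$ --- give a Lipschitz estimate $d(\tau,\tau')\le \tfrac{|\Omega|}{2}\rho\,|\tau-\tau'|$ and a diameter bound $\mathrm{diam}(T,d)\le 2\rho$, where $\rho:=\sigma_n\sqrt{\sum_k|\s(\omega_k)|^2}$. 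Since $T$ is an interval, the Lipschitz estimate yields covering numbers $\mathcal N(T,d,u)\lesssim |T|\tfrac{|\Omega|}{2}\rho/u$, so the Dudley entropy integral $\int_0^{\rho}\sqrt{\log\mathcal N(T,d,u)}\,du$ is, after the substitution $u=\rho w$ and the inequality $\sqrt{a+b}\le\sqrt a+\sqrt b$, at most $\rho\bigl(\sqrt{\log(\TO/2)}+\tfrac{\sqrt\pi}{2}\bigr)$. Combining with $\E_n|N(\tau_{\mathrm{ref}})|\le\rho$ to pass from increments to the supremum of the magnitude (this is the source of the additive $0.166$ term) gives $\E_n\sup_\tau|N(\tau)|\lesssim \rho\bigl(\sqrt{\log\TO}+\mathrm{const}\bigr)$. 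Rather than invoke Dudley as a black box, I would reuse the customized chaining already developed for Theorem~\ref{th:Emain}, which both handles the complex-valued increments and produces the explicit constants $0.199$ and $0.166$.

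For the outer step, everything now depends on $\{\omega_k\}$ only through $\rho$. Since $t\mapsto\sqrt t$ is concave, Jensen gives $\E_\omega\rho\le\sigma_n\sqrt{\E_\omega\sum_k|\s(\omega_k)|^2}=\sigma_n\sqrt{\tfrac{m}{|\Omega|}}\|\s\|_2$, using $\E|\s(\omega_k)|^2=\tfrac{1}{|\Omega|}\|\s\|_2^2$ as in~(\ref{eq:MF mean function}). Finally I would rescale by the factor $\tfrac{|\Omega|}{2\pi m}$ relating $\widetilde N$ to $N$; since $\tfrac{|\Omega|}{2\pi m}\cdot\sqrt{\tfrac{m}{|\Omega|}}=\tfrac{1}{2\pi}\tfrac{\sqrt{|\Omega|}}{\sqrt m}$, this produces exactly the prefactor $\sigma_n\sqrt{|\Omega|}\|\s\|_2/\sqrt m$ (with the $\tfrac{1}{2\pi}$ absorbed into the stated constants) and completes the bound.

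The main obstacle is not any single estimate but the interaction between the two stages: the Gaussian metric $d$ is itself random (it depends on the sampled frequencies through $\rho$), so the chaining must be carried out conditionally and the $\omega$-dependence deferred to the final Jensen step. A secondary technical point, shared with Theorem~\ref{th:Emain}, is that $N$ is complex-valued, so the standard real-valued Dudley chaining must be adapted --- either by splitting into real and imaginary parts or by chaining the magnitude directly --- while keeping the constants sharp enough to reach $0.199$ and $0.166$.
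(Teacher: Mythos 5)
Your proposal is correct and follows essentially the same route as the paper: condition on $\{\omega_k\}$ so that $N(\tau)$ becomes a genuine complex Gaussian process, bound $\E_{n_k}\sup_\tau|N(\tau)|$ by reusing the dyadic chaining structure from Theorem~\ref{th:Emain} with the Rayleigh/subgaussian tail bounds for $|N(\tau)|$ and its increments (your $\rho$ is exactly $\sigma_n M_4$), and then remove the conditioning via Jensen to get $\E_{\omega_k}M_4\le M_2=\sqrt{m/|\Omega|}\,\|\s\|_2$ before rescaling by $|\Omega|/(2\pi m)$. The only minor imprecision is attributing the additive constant to a single anchor point $\tau_{\mathrm{ref}}$; in the paper it arises from the expected maximum over the full coarsest grid $T_0$ (which has roughly $\TO$ points) together with the multiscale sum, but this does not affect the validity of the argument.
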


The next theorem shows that, given $m$ large enough, the maximum
of the noise process will not be too much larger than its mean with
high probability. The following result is also proved in Section~\ref{sec:noiseproofs}.
\begin{thm} Fix $\delta>0$. Suppose that $\left|\Omega\right|\left|T\right|\ge3$
	and that
	\begin{equation}
		m\geq C_3 \cdot\max\left(\mu_{1}^{2},~\mu_{2}\right)\cdot\log(1/\delta),\label{eq:mbound1noisenew}
	\end{equation}
    where $C_3$ is a known universal constant. 
	Then the noise process
	$\widetilde{N}(\tau)$, as defined in (\ref{eq:ntildeboth}), obeys
	\begin{equation}
		\P{ \sup_{\tau\in T}\left|\widetilde{N}\left(\tau\right)\right|\ge C_4 \cdot \sigma_{n}\cdot\frac{\sqrt{|\Omega|}\|\widehat{s}\|_{2}}{\sqrt{m}}\cdot\max\left(\sqrt{\log(\TO)},\sqrt{\log(2/\delta)}\right)} ~\le~\delta,
	\label{eq:noisetailnew}
	\end{equation}
    where $C_4$ is a known universal constant. 
\label{thm:mainNoisenew}
\end{thm}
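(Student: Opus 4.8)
The plan is to exploit the fact that, once the random sampling frequencies $\{\omega_k\}$ are fixed, $\widetilde{N}(\tau)$ becomes (conditionally) a Gaussian process indexed by $\tau\in T$, since it is a fixed linear combination of the complex Gaussian noise variables $\{n_k\}$. This separates the two sources of randomness and lets me handle each with a dedicated tool: a Bernstein inequality for the frequencies and Gaussian concentration for the noise. The hypothesis \eqref{eq:mbound1noisenew} will be used exactly to keep the realized test-vector energy near its mean.

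First I would condition on $\{\omega_k\}$ and record that the common test-vector energy $Q:=\|\psi_\tau\|_2^2=\sum_{k=1}^{m}|\widehat{s}(\omega_k)|^2$ is independent of $\tau$, with $\E Q=m\|\widehat{s}\|_2^2/|\Omega|$. The summands $|\widehat{s}(\omega_k)|^2$ are independent, bounded by $\|\widehat{s}\|_\infty^2$, and have second moment $\frac{1}{|\Omega|}\|\widehat{s}\|_4^4$, so a Bernstein inequality with $t=\E Q$ gives an exponent of order $\min(m/\mu_1^2,\,m/\mu_2)=m/\max(\mu_1^2,\mu_2)$. Hence $Q\le 2\,\E Q=2m\|\widehat{s}\|_2^2/|\Omega|$ on an event $\mathcal{E}_1$ of probability at least $1-\delta/2$ precisely when $m\ge C_3\max(\mu_1^2,\mu_2)\log(2/\delta)$: the variance proxy supplies the $\mu_1^2=|\Omega|\|\widehat{s}\|_4^4/\|\widehat{s}\|_2^4$ term and the boundedness proxy supplies the $\mu_2=|\Omega|\|\widehat{s}\|_\infty^2/\|\widehat{s}\|_2^2$ term, matching \eqref{eq:mbound1noisenew}.

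Next, working conditionally on a realization in $\mathcal{E}_1$, I would bound the conditional expected supremum $\E[\sup_{\tau\in T}|\widetilde{N}(\tau)|\mid\{\omega_k\}]$ by reusing the Dudley computation behind Theorem~\ref{thm:noiseexp}. The canonical metric of the conditional Gaussian process obeys $\E[|N(\tau)-N(\tau')|^2\mid\{\omega_k\}]=\sigma_n^2\sum_k|\widehat{s}(\omega_k)|^2|e^{\iunit\omega_k\tau}-e^{\iunit\omega_k\tau'}|^2\le\sigma_n^2\,\omega_{\mathrm{max}}^2\,Q\,|\tau-\tau'|^2$, so covering numbers of $T$ scale like $|\Omega||T|/\epsilon$ and the entropy integral yields a bound of order $\frac{|\Omega|\sigma_n\sqrt{Q}}{m}\sqrt{\log(|\Omega||T|)}$; substituting $\sqrt{Q}\le\sqrt{2m/|\Omega|}\,\|\widehat{s}\|_2$ on $\mathcal{E}_1$ gives the target order $\sigma_n\frac{\sqrt{|\Omega|}\|\widehat{s}\|_2}{\sqrt{m}}\sqrt{\log(|\Omega||T|)}$. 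For the deviation of $\sup_\tau|\widetilde{N}(\tau)|$ above this conditional mean, I would invoke the Gaussian concentration inequality for Lipschitz functions (Borell--TIS): as a function of the standardized noise vector, $\sup_\tau|\widetilde{N}(\tau)|$ is Lipschitz with constant equal to the conditional weak standard deviation $\sup_\tau(\E[|\widetilde{N}(\tau)|^2\mid\{\omega_k\}])^{1/2}=\frac{|\Omega|\sigma_n}{2\pi m}\sqrt{Q}$ (a short Cauchy--Schwarz argument on $|\widetilde{N}(\tau)|=\sqrt{\mathrm{Re}(\widetilde{N})^2+\mathrm{Im}(\widetilde{N})^2}$ identifies this constant), which on $\mathcal{E}_1$ is again $\lesssim\sigma_n\frac{\sqrt{|\Omega|}\|\widehat{s}\|_2}{\sqrt{m}}$. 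Thus, conditionally, $\sup_\tau|\widetilde{N}(\tau)|$ exceeds its conditional mean by more than $\sigma_n\frac{\sqrt{|\Omega|}\|\widehat{s}\|_2}{\sqrt{m}}\sqrt{2\log(2/\delta)}$ with probability at most $\delta/2$.

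Finally I would assemble the pieces. A union bound over $\mathcal{E}_1^c$ and the conditional concentration failure (integrated against the law of $\{\omega_k\}$ via the tower property) keeps the total failure probability at $\delta$, while adding the conditional-mean bound to the deviation bound produces a quantity of order $\sigma_n\frac{\sqrt{|\Omega|}\|\widehat{s}\|_2}{\sqrt{m}}\big(\sqrt{\log(|\Omega||T|)}+\sqrt{\log(2/\delta)}\big)$, which is absorbed into $C_4\cdot\max(\sqrt{\log(|\Omega||T|)},\sqrt{\log(2/\delta)})$. The real case is identical with $\mathrm{Re}(N(\tau))$ in place of $N(\tau)$, whose conditional variance is only smaller, so the same constants apply. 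I expect the main obstacle to be the bookkeeping in the conditional Dudley step --- making sure the entropy integral is expressed in terms of the realized energy $Q$ rather than its mean so that it meshes cleanly with the Bernstein event --- together with the real-versus-complex factors in pinning down the correct Lipschitz constant for the concentration inequality.
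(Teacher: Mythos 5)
Your proposal follows essentially the same route as the paper's proof: condition on $\{\omega_k\}$ so that $N(\tau)$ is a centered Gaussian process, bound its conditional expected supremum via the chaining/Dudley computation in terms of the realized energy $M_4^2=\sum_k|\widehat{s}(\omega_k)|^2$, apply Gaussian concentration for Lipschitz functions with weak variance $\sigma_n^2M_4^2$ (the paper's Lemma~\ref{lem:Gaussian tail bound}), and control $M_4^2$ by Bernstein's inequality under hypothesis \eqref{eq:mbound1noisenew}, finishing with a union bound over the two failure events. The decomposition, the tools, and the way the $\mu_1^2$ and $\mu_2$ terms arise from the variance and boundedness proxies in Bernstein all match the paper's argument.
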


These two theorems, taken in conjunction with Theorems~\ref{th:Emain}
and \ref{th:Pmain}, give us a bound on how far the estimate $\widetilde{R}_{ss,n}(\tau)$
created from noisy samples will vary from its mean. With high probability,
we will have
$$
\left|\widetilde{R}_{ss,n}(\tau)-AR_{ss}(\tau-\tau_{0})\right|~\lesssim~
\max\left(\frac{\eta\mu_{1}}{\sqrt{m}},~\frac{\eta\mu_{2}}{m},~\frac{\sigma_{n}\sqrt{|\Omega|}\|\widehat{s}\|_{2}}{\sqrt{m}}\right)\cdot \sqrt{\log(|\Omega||T|)}.
 $$
 Just as in the noiseless case, the bound on this deviation of our
estimate of the autocorrelation function can be translated directly
into a performance guarantee for the compressive matched filter. This
is codified in the following corollary.

\begin{corl} \label{corl:noise} Suppose there exist constants $\alpha_{1}\in[0,1)$
and $\alpha_{2}>0$ such that $|R_{ss}(\tau)|\le\alpha_{1}R_{ss}(0)$
for all $|\tau|>\alpha_{2}$. Suppose also that $\left|\Omega\right|\left|T\right|\ge3$,
that (\ref{eq:mbound1noisenew}) is satisfied, and that 
\begin{align}
m > & \; C_5 \cdot \max\left( \frac{\log(12\TO/\delta)}{(1-\alpha_{1})^2}\cdot \mu_{1}^{2} ,\; \frac{\sqrt{\log(4/\delta)\log(12\TO/\delta)}}{1-\alpha_1}\cdot\mu_{2},\nonumber \right.
\\
& ~~~~~~~~~~~~~ \left. 
\frac{ \max\left(\log(\TO),\log(2/\delta)\right)}{(1-\alpha_{1})^{2}}\cdot\frac{\sigma_{n}^{2}|\Omega|}{|A|^{2}\|\widehat{s}\|_{2}^{2}}  \right),
\label{eq:mbound2newnoise}
\end{align}
where $C_5$ is a known universal constant. Then with probability at least $1-2\delta$,
the maximum value of $|\widetilde{R}_{ss,n}(\tau)|$ must be attained
for some $\widehat{\tau}_{0}$ within the interval $[\tau_{0}-\alpha_{2},\tau_{0}+\alpha_{2}]$.
\end{corl}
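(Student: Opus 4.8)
The plan is to mirror the proof of Corollary~\ref{corl:main}, but now to control the full noisy deviation $\widetilde{R}_{ss,n}-AR_{ss}$ by combining Theorem~\ref{th:Pmain} with Theorem~\ref{thm:mainNoisenew}. First I would invoke the decomposition~(\ref{eq:rrn}), $\widetilde{R}_{ss,n}(\tau)=\widetilde{R}_{ss}(\tau)+\widetilde{N}(\tau)$, so that by the triangle inequality
\[
\sup_{\tau\in T}\left|\widetilde{R}_{ss,n}(\tau)-AR_{ss}(\tau-\tau_{0})\right|\le\sup_{\tau\in T}\left|\widetilde{R}_{ss}(\tau)-AR_{ss}(\tau-\tau_{0})\right|+\sup_{\tau\in T}\left|\widetilde{N}(\tau)\right|.
\]
Let $U$ denote the quantity in~(\ref{eq:U}) and let $V$ denote the bound on $\sup_{\tau}|\widetilde{N}(\tau)|$ appearing inside~(\ref{eq:noisetailnew}). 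Theorem~\ref{th:Pmain} guarantees the first supremum is at most $U$ with probability at least $1-\delta$, and Theorem~\ref{thm:mainNoisenew}---whose hypothesis~(\ref{eq:mbound1noisenew}) is assumed in the corollary---guarantees the second is at most $V$ with probability at least $1-\delta$. A union bound then yields, with probability at least $1-2\delta$, the uniform estimate $\sup_{\tau\in T}|\widetilde{R}_{ss,n}(\tau)-AR_{ss}(\tau-\tau_{0})|\le U+V=:W$.

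On this event I would argue exactly as in the proof of Corollary~\ref{corl:main}. Since $R_{ss}(0)=\|s\|_{2}^{2}$ is real and positive, the mean at the true delay has magnitude $|AR_{ss}(0)|=\eta$, so $|\widetilde{R}_{ss,n}(\tau_{0})|\ge\eta-W$. For every $\tau$ with $|\tau-\tau_{0}|>\alpha_{2}$, the main-lobe hypothesis $|R_{ss}(\tau-\tau_{0})|\le\alpha_{1}R_{ss}(0)$ gives $|\widetilde{R}_{ss,n}(\tau)|\le\alpha_{1}\eta+W$. Consequently the maximizer of $|\widetilde{R}_{ss,n}|$ is forced into $[\tau_{0}-\alpha_{2},\tau_{0}+\alpha_{2}]$ as soon as $\eta-W>\alpha_{1}\eta+W$, i.e. whenever $W<\tfrac{1}{2}(1-\alpha_{1})\eta$.

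The only delicate step is the constant bookkeeping showing that the three-term measurement bound~(\ref{eq:mbound2newnoise}) forces $W<\tfrac{1}{2}(1-\alpha_{1})\eta$. I would split the budget and demand $U<\tfrac{1}{4}(1-\alpha_{1})\eta$ and $V<\tfrac{1}{4}(1-\alpha_{1})\eta$ separately. Substituting~(\ref{eq:U}) and solving each branch of its max for $m$ reproduces the first two terms of~(\ref{eq:mbound2newnoise}); this is precisely the computation from Corollary~\ref{corl:main} specialized to $\epsilon=0$, giving the $\mu_{1}^{2}\log(12\TO/\delta)/(1-\alpha_{1})^{2}$ and $\mu_{2}\sqrt{\log(4/\delta)\log(12\TO/\delta)}/(1-\alpha_{1})$ requirements. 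For the noise term I would plug the explicit bound from~(\ref{eq:noisetailnew}) together with $\eta=|A|\|\widehat{s}\|_{2}^{2}/(2\pi)$ into $V<\tfrac{1}{4}(1-\alpha_{1})\eta$ and solve for $m$, which yields the third term $\tfrac{\max(\log(\TO),\log(2/\delta))}{(1-\alpha_{1})^{2}}\cdot\tfrac{\sigma_{n}^{2}|\Omega|}{|A|^{2}\|\widehat{s}\|_{2}^{2}}$. Choosing the universal constant $C_{5}$ to dominate all three resulting constants (essentially $16C_{1}^{2}$, $4C_{1}$, and a multiple of $C_{4}^{2}$) completes the argument. The main obstacle is thus purely this constant tracking: exhibiting a single $C_{5}$ that simultaneously absorbs the constants from the noiseless tail bound and the noise tail bound, all while keeping~(\ref{eq:mbound1noisenew}) in force so that Theorem~\ref{thm:mainNoisenew} may legitimately be applied.
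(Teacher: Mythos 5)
Your proposal is correct and follows essentially the same route as the paper: the decomposition via (\ref{eq:rrn}), a union bound over the events in (\ref{eq:pconc}) and (\ref{eq:noisetailnew}), the peak-versus-sidelobe comparison $\eta - W > \alpha_1\eta + W$, and the same split of the deviation budget that produces the constants $16C_1^2$, $4C_1$, and a multiple of $C_4^2$ (the paper takes $C_5=\max(16C_1^2,\,4C_1,\,64\pi^2C_4^2)$). No gaps.
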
 \begin{proof} Using (\ref{eq:rrn}), we have that \[
\sup_{\tau}|\widetilde{R}_{ss,n}(\tau)-AR_{ss}(\tau-\tau_{0})|\le\sup_{\tau}|\widetilde{R}_{ss}(\tau)-AR_{ss}(\tau-\tau_{0})|+\sup_{\tau}|\widetilde{N}(\tau)|,\]
 where $\widetilde{R}_{ss}(\tau)$ is defined in (\ref{eq:acfestboth}).
With probability at least $1-2\delta$, both
(\ref{eq:pconc}) and (\ref{eq:noisetailnew}) will be satisfied,
and so we will have \begin{equation}
|\widetilde{R}_{ss,n}(\tau_{0})|\ge\left|A\right|\frac{\|\widehat{s}\|_{2}^{2}}{2\pi}-U-\frac{C_4  \sigma_{n}\sqrt{|\Omega|}\|\widehat{s}\|_{2}\max\left(\sqrt{\log(\TO)},\sqrt{\log(2/\delta)}\right)}{\sqrt{m}}
\label{eq:ncorl1}\end{equation}
 and \begin{equation}
|\widetilde{R}_{ss,n}(\tau)|\le\alpha_{1}\left|A\right|\frac{\|\widehat{s}\|_{2}^{2}}{2\pi}+U+\frac{C_4 \sigma_{n}\sqrt{|\Omega|}\|\widehat{s}\|_{2}\max\left(\sqrt{\log(\TO)},\sqrt{\log(2/\delta)}\right)}{\sqrt{m}}
\label{eq:ncorl2}\end{equation}
 for all $\tau$ such that $|\tau-\tau_{0}|>\alpha_{2}$. If \eqref{eq:mbound2newnoise}
is satisfied with $C_5 = \max(16 C_1^2, 4C_1, 64\pi^2 C_4^2)$, then it follows that the right hand side of (\ref{eq:ncorl1})
must exceed the right hand side of (\ref{eq:ncorl2}). As with Corollary~\ref{corl:main}, one can slightly strengthen this result by choosing differing constants for the three terms appearing in \eqref{eq:mbound2newnoise}.
\end{proof}

To within a constant factor, the first two terms in \eqref{eq:mbound2newnoise} are the same as
in Corollary~\ref{corl:main}; we might think of these terms as {}``activation''
conditions for when the compressive matched filter will be well-behaved
in the absence of noise. After these conditions are met, it can withstand
noise levels up to a size \begin{equation}
\sigma_{n}~\sim~\left|A\right|\|\widehat{s}\|_{2}\sqrt{\frac{m}{|\Omega|\log(|\Omega||T|)}}.\label{eq:sigprecise}\end{equation}
 We can interpret \eqref{eq:sigmarough} as the noise level at which
the operation of the compressive matched filter will fall apart completely,
and \eqref{eq:sigprecise} (which is only a log factor smaller) as
the noise level at which we have guaranteed accuracy.

Three examples of the estimated autocorrelation function $\widetilde{R}_{ss,n}(\tau)$
are shown in Figure~\ref{fig:Rssnoise} for the Gaussian pulse example
from Section \ref{sec:gaussianpulse} with a fixed number $m=50$
of samples and various values of $\sigma_{n}^{2}$ (the noiseless
estimates $\widetilde{R}_{ss}(\tau)$ are overlaid). For the same
number of measurements, Figure~\ref{fig:Pecentage-of-correct} shows
the average performance of the compressive matched filter versus the
noise level. For various noise levels $\sigma_{n}$ between $0$ and
$|A|\|\widehat{s}\|_{2}\sqrt{m/|\Omega|}$, we run 1000 experiments
generating random sample frequencies and random noise, and estimate
$\tau_{0}$ by identifying the peak of $|\widetilde{R}_{ss,n}(\tau)|$.
The figure indicates the percentage of trials in which the delay was
estimated to within a distance $2a$ of the correct value $\tau_{0}$.
We see in these experiments that the estimator begins to lose effectiveness
roughly when $\sigma_{n}\sim0.25\cdot |A|\|\widehat{s}\|_{2}\sqrt{m/|\Omega|}$.

\begin{figure}[t]
\centering \begin{tabular}{ccc}
\includegraphics[height=1.4in]{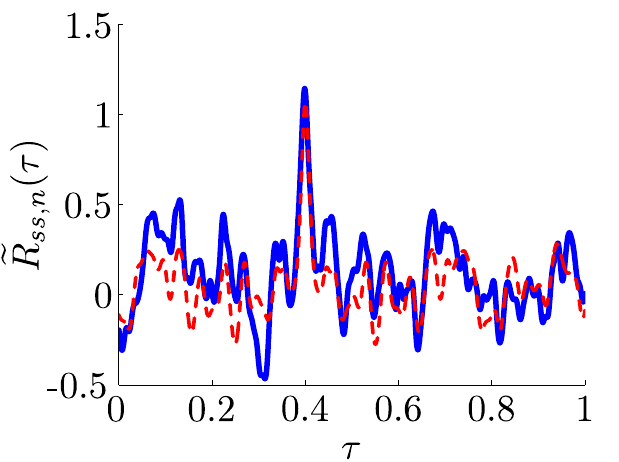}   & \includegraphics[height=1.4in]{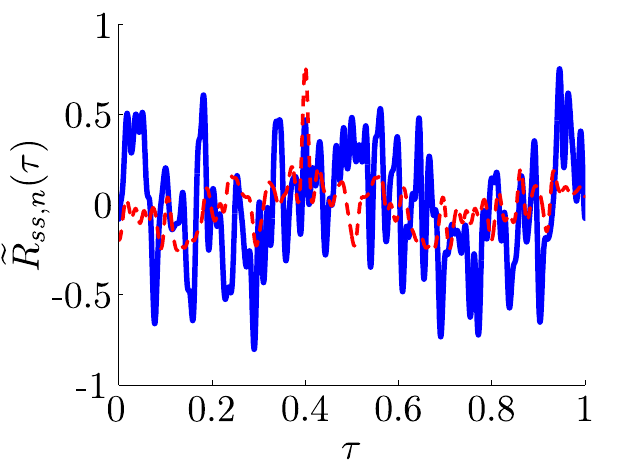}   & \includegraphics[height=1.4in]{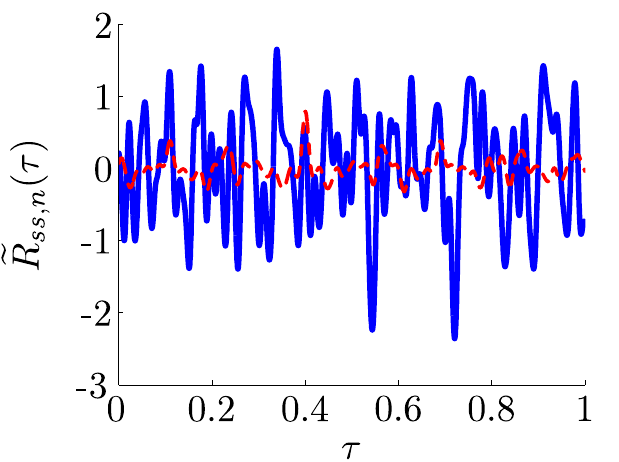} \tabularnewline
\small (a)   & \small (b)   & \small (c) \tabularnewline
\end{tabular}\vspace{-2mm}\caption{\sl\small{Estimated scaled, shifted autocorrelation function $\widetilde{R}_{ss,n}(\tau)$
(solid blue line) obtained from noisy samples, and for the sake of
comparison, the estimate $\widetilde{R}_{ss}(\tau)$ (dashed red line)
that would have been obtained without noise. For all experiments,
the number of samples $m=50$, and the noise level (a) $\sigma_{n}=0.2\cdot |A|\|\widehat{s}\|_{2}\sqrt{m/|\Omega|}$,
(b) $\sigma_{n}=0.5\cdot |A|\|\widehat{s}\|_{2}\sqrt{m/|\Omega|}$,
and (c) $\sigma_{n}=|A|\|\widehat{s}\|_{2}\sqrt{m/|\Omega|}$. Overall,
the time-of-arrival estimation is reliable in the first case, tenuous
in the second case, and completely unreliable in the third case.}}
\label{fig:Rssnoise}
\end{figure}

\begin{figure}[t]
 \center\includegraphics[height=1.4in]{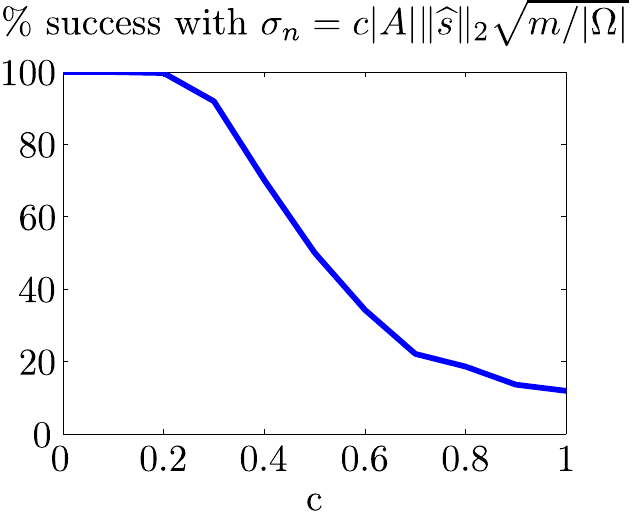} \vspace{-2mm}\caption{\sl\small{Percentage of correct time-of-arrival estimation ($|\widehat{\tau}_{0}-\tau_{0}|\le2a$)
over 1000 trials, as a function of the noise level $\sigma_{n}=c\cdot |A|\|\widehat{s}\|_{2}\sqrt{m/|\Omega|}$,
$0\le c\le1$.}}
\label{fig:Pecentage-of-correct}
\end{figure}

It is worth recalling that a user may have some control over selecting the observation interval $\Omega$. In most cases it would be natural to choose $\Omega$ roughly equal to the essential bandwidth of $s_0$. Taking $\Omega$ larger than this will increase $\mu_1$, $\mu_2$, and the sensitivity to noise; taking $\Omega$ smaller than this will generally increase the width of the main lobe of the autocorrelation function $R_{ss}(\tau)$ and thus limit the resolution to which $\tau_0$ can be estimated.

Finally, we can compare the noise levels in \eqref{eq:sigmarough} and \eqref{eq:sigprecise}
to the noise levels at which a digital matched filter working from
a set of samples taken in the time domain at the Nyquist rate $|\Omega|/2\pi$
will stop being effective. Suppose we sample the (bandlimited) return
signal $As(t-\tau_{0})$ at the Nyquist rate, and noise is added to
these samples. We observe \[
y_{d}[\ell]=As_{d,\tau_{0}}[\ell]+n_{d}[\ell],\quad\text{where}\quad s_{d,\tau_{0}}[\ell]=\left.s(t-\tau_{0})\right|_{t=\ell2\pi/|\Omega|},\]
 and $n_{d}[\ell]$ is sequence of independent zero-mean Gaussian
random variables with variances $\sigma_{n}^{2}|\Omega|/2\pi$. This
variance is chosen to make the noise process similar to that analyzed
in the compressive case; it corresponds to samples of a continuous-time
process that has a power spectral density equal to $\sigma_{n}^{2}$
on $\Omega$ and zero elsewhere.

Focusing just on the complex case for the sake of brevity, once we
have collected $y_{d}$, we can estimate the scaled, shifted autocorrelation
function using \begin{equation}
\widetilde{R}_{d}(\tau)=\<y_{d},s_{d,\tau}\>=A\<s_{d,\tau_{0}},s_{d,\tau}\>+\<n_{d},s_{d,\tau}\>\label{eq:Rnyq}\end{equation}
 and choose as our estimate of $\tau_{0}$ the maximizer of $|\widetilde{R}_{d}(\tau)|$
over all $\tau$. At the correct shift $\tau_{0}$, the first inner
product in \eqref{eq:Rnyq} is given by
\begin{equation*}
\<s_{d,\tau_{0}},s_{d,\tau_{0}}\>=\sum_{\ell}|s_{d,\tau_{0}}[\ell]|^{2}=\frac{|\Omega|}{2\pi}\|s(t-\tau_{0})\|_{2}^{2}=\frac{|\Omega|}{4\pi^{2}}\|\widehat{s}\|_{2}^{2},
\end{equation*}
 where the second equality comes from the fact that $s(\cdot)$ is
bandlimited and we are sampling at the Nyquist rate. The second inner
product is a Gaussian random variable with
\begin{equation*}
\operatorname{Var}\left[\<n_{d},s_{d,\tau}\>\right]~=~\frac{\sigma_{n}^{2}|\Omega|}{2\pi}\sum_{\ell}|s_{d,\tau}[\ell]|^{2}~=~\frac{\sigma_{n}^{2}|\Omega|^{2}}{8\pi^{3}}\|\widehat{s}\|_{2}^{2},
\end{equation*}
 and so
\[
\E|\<n_{d},s_{d,\tau}\>|~\leq~\frac{\sigma_{n}|\Omega|}{2\pi\sqrt{2\pi}}\cdot\|\widehat{s}\|_{2}.\]
 Roughly speaking, then, the Nyquist sampled matched filter will be
overwhelmed by the noise when \begin{equation}
\sigma_{n}~\sim~\left|A\right|\|\widehat{s}\|_{2}.\label{eq:sigmaroughnyq}\end{equation}

Comparing \eqref{eq:sigmaroughnyq} to the compressive matched filter
results \eqref{eq:sigmarough} and \eqref{eq:sigprecise}, we can
interpret the factor of $\sqrt{m/|\Omega|}$ as a sort of {\em undersampling}
penalty; as the number of samples gets smaller, the noise tolerance
gets worse. When $m\gtrsim|\Omega|$, the performance of the two
schemes will be similar. (A similar undersampling penalty arises in standard CS~\cite{davenport2011pros}, where the noise variance that can be tolerated for a given recovery error decreases as the number of measurements gets smaller.)

\section{\label{sec:Pure-Tone-Estimation}Pure Tone Estimation}

As discussed in Section~\ref{sec:exchtf}, the roles of time and frequency are completely interchangeable
in our settings. This allows us to apply the compressive matched filter
to the problem of finding the carrier frequency of a modulated signal
from time-domain samples with minimal effort. One particular such
application is studied here: estimating the frequency of a pure tone
from random samples in time.

Formally the problem under study is described as follows. A pure exponential
$Ae^{\iunit\omega_{0}t}$ with fixed---but unknown---frequency $\omega_{0}\in\Omega$,
amplitude $|A|$, and phase $\measuredangle A$ is observed on $T=[-t_{\mathrm{max}},t_{\mathrm{max}}]$.
Let $y$ be the vector of observations at sampling times $t_{1},t_{2},...,t_{m}\in T$,
which are randomly chosen from a uniform distribution on $T$, i.e.,
\[
y=A\left[\begin{array}{c}
e^{\iunit\omega_{0}t_{1}}\\
e^{\iunit\omega_{0}t_{2}}\\
\vdots\\
e^{\iunit\omega_{0}t_{m}}\end{array}\right].\]

Given $y\in\C^{m}$, we are interested in estimating $\omega_{0}\in\Omega$
and $A\in\C$. A natural approach to solving this problem is to find
the find $\omega_{0}$ and $A$ which best explain the measurements
in a least-squares sense. More formally, we define \begin{equation}
(\widehat{\omega}_{0},\widehat{A}):=\arg\min_{\omega,A}\sum_{k=1}^{m}\left|y[k]-A\cdot e^{\iunit\omega t_{k}}\right|^{2}=\arg\min_{\omega,A}\|y-A\psi_{\omega}\|_{2}^{2},\label{eq:lsmaintone}\end{equation}
 where for any $\omega\in\Omega$, the test vector $\psi_{\omega}\in\mathbb{C}^{m}$
is given by: \[
\psi_{\omega}[k]=e^{\iunit\omega t_{k}},\quad k=1,2,\ldots,m.\]
 The least-squares solution for $\omega_{0}$ is given by \begin{equation}
\widehat{\omega}_{0}=\arg\max_{\omega\in\Omega}\left|\langle y,\psi_{\omega}\rangle\right|,\label{eq:tonels}\end{equation}
 and subsequent to estimating $\omega_{0}$, the least-squares estimate
for $A$ can be computed as $\widehat{A}=\langle y,\psi_{\widehat{\omega}_{0}}\rangle\|\psi_{\widehat{\omega}_{0}}\|_{2}^{-2}$.

\subsection{Analytical Framework}

Equation (\ref{eq:tonels}) suggests a correlation-based strategy
for estimating the unknown frequency $\omega_{0}$. In order to study
the performance of such an estimator, let us define the random process
$X(\omega):=\left\langle y,\psi_{\omega}\right\rangle $ on $\Omega$,
which has the mean function \begin{align*}
\E X(\omega) & =A\E\sum_{k=1}^{m}e^{\iunit(\omega_{0}-\omega)t_{m}}\\
 & =A\sum_{k=1}^{m}\E e^{\iunit(\omega_{0}-\omega)t_{m}}\\
 & =mA\left|T\right|^{-1}\int_{T}e^{\iunit(\omega_{0}-\omega)t}~dt\\
 & =mA|T|^{-1}\cdot|T|\mbox{sinc}\left(\frac{1}{2}\left|T\right|(\omega_{0}-\omega)\right),\end{align*}
 where $\mbox{sinc}\left(\alpha\right):=\sin\left(\alpha\right)/\alpha$.
One way to interpret $X(\omega)$ is that we have approximated the continuous-time
inner product between two time-limited complex sinusoids as a discrete
sum with samples taken at random locations; the above tells us that
this estimate is unbiased.

Further attention reveals that we are facing the same problem as in
the complex case of Section~\ref{sec:Matched filter noise free},
where the roles of time and frequency have been interchanged: the
frequency domain becomes the {}``shift domain,'' while the time
domain becomes the {}``observation domain.'' More precisely, we
may define $\widehat{s}_{0}(\omega)=2\pi\delta(\omega)$ which has
the inverse Fourier transform $s_{0}(t)=1$. Our received signal can
be expressed in the frequency domain as $A\cdot\widehat{s}_{0}(\omega-\omega_{0})$
for some $\omega_{0}\in\Omega$. However, we will observe $m$ samples
of this signal in the time domain, acquiring values of $Ae^{\iunit\omega_{0}t}s_{0}(t)=Ae^{\iunit\omega_{0}t}$
at times $t_{1},t_{2},\dots,t_{m}\in T$.

Now, in the observation (time) domain, we define $s(t)$ to be the
time-limited version of $s_{0}(t)$, i.e., $s(t):=\mathbb{I}_{t\in T}$
where $\mathbb{I}$ denotes the indicator function. Returning to the
shift (frequency) domain, we have $\widehat{s}(\omega)=|T|\mbox{sinc}\left(\frac{1}{2}\left|T\right|\omega\right)$.
Up to a constant factor, this expression equals its own autocorrelation function, i.e., $R_{\widehat{s}\widehat{s}}(\omega)=2\pi\widehat{s}(\omega)$.

Therefore, we can estimate the ideal autocorrelation function $R_{\widehat{s}\widehat{s}}(\cdot)$
(up to the unknown complex amplitude $A$ and translation $\omega_{0}$)
by rescaling the random process $X(\omega)$: \begin{equation}
\widetilde{R}_{\widehat{s}\widehat{s}}(\omega):=\frac{2\pi|T|}{m}X(\omega).\label{eq:acfesttone}\end{equation}
 This estimate is unbiased since $\mathbb{E}[\widetilde{R}_{\widehat{s}\widehat{s}}(\omega)]=A\cdot R_{\widehat{s}\widehat{s}}(\omega-\omega_{0})$.
It is clear that solving the least-squares problem (\ref{eq:tonels})
is equivalent to searching for the maximizer of $|\widetilde{R}_{\widehat{s}\widehat{s}}(\omega)|$.
Since the the main lobe of $R_{\widehat{s}\widehat{s}}(\omega)=2\pi|T|\mbox{sinc}\left(\frac{1}{2}\left|T\right|\omega\right)$
is centered at the origin (with $R_{\widehat{s}\widehat{s}}(0)=2\pi|T|$),
we informally expect that, on average, finding the maximum of $|\widetilde{R}_{\widehat{s}\widehat{s}}(\omega)|$
correctly estimates $\omega_{0}$.

\subsection{Noiseless Analysis}

To study the concentration of $\widetilde{R}_{\widehat{s}\widehat{s}}(\omega)$
about its mean, we may follow the same arguments as in Sections~\ref{sub:mainclean}
and~\ref{sec:gnpc} while simply exchanging the roles of time and
frequency. In particular, we note that the problem of pure tone estimation corresponds to the first {}``special case'' studied in
Section~\ref{sec:gnpc}, because the windowed signal template has
uniform magnitude in the observation domain. Thus, we have $\mu_{1}=\mu_{2}=1$.
This leads us to the following result for the case of noiseless observations.
\begin{corl} \label{corl:Pmaintone}  Fix $\delta > 0$ and let
$$
U = 2\pi C_1 |A||T|\cdot \max\left(\frac{1}{\sqrt{m}},~\frac{\sqrt{\log(4/\delta)}}{m}\right)\cdot\sqrt{\log(12\TO/\delta)}.
$$
If $\left|\Omega\right|\left|T\right|\ge3$, then the estimate of the autocorrelation function in \eqref{eq:acfesttone} obeys
$$
 \Pr\left\{ \sup_{\omega\in\Omega}\left|\widetilde{R}_{\widehat{s}\widehat{s}}(\omega)-AR_{\widehat{s}\widehat{s}}(\omega-\omega_{0})\right|> U\right\} \le \delta.
$$
\end{corl}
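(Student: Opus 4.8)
The plan is to obtain this result as a direct corollary of Theorem~\ref{th:Pmain}, invoking the time--frequency duality described in Section~\ref{sec:exchtf}. Under this duality the roles of the observation and shift domains are exchanged: here the observation domain is the time interval $T$ (where the $m$ random samples $t_k$ are drawn) and the shift domain is the frequency interval $\Omega$ (over which the supremum is taken). The template in the observation domain is $s(t)=\mathbb{I}_{t\in T}$, and the process $X(\omega)=\langle y,\psi_{\omega}\rangle$ plays exactly the role that $X(\tau)$ did in Section~\ref{sub:mainclean}, with $\widehat{s}(\omega_{k})$ replaced by $s(t_{k})$ and the pair $(\tau,\tau_{0})$ replaced by $(\omega,\omega_{0})$.

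First I would verify that $X(\omega)$ has the same structure as the process governed by Theorem~\ref{th:Pmain}. Since $|s(t_{k})|^{2}=1$ for every sample $t_{k}\in T$, we have $X(\omega)=A\sum_{k=1}^{m}e^{\iunit(\omega_{0}-\omega)t_{k}}$, which matches the form of \eqref{eq:X} once $\widehat{s}(\omega_{k})$ is replaced by the unit-magnitude template value $s(t_{k})$; the sign flip in the exponent is immaterial because the sampling distribution on $T=[-t_{\mathrm{max}},t_{\mathrm{max}}]$ is symmetric. Consequently the rescaled process $\widetilde{R}_{\widehat{s}\widehat{s}}(\omega)$ of \eqref{eq:acfesttone} and its mean $AR_{\widehat{s}\widehat{s}}(\omega-\omega_{0})$ stand in for $\widetilde{R}_{ss}(\tau)$ and $AR_{ss}(\tau-\tau_{0})$, and the hypothesis $\TO\ge 3$ transfers verbatim because the time--bandwidth product is symmetric under the exchange.

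Next I would compute the three quantities that enter the statement of Theorem~\ref{th:Pmain}. Because the observation-domain template $s(t)=\mathbb{I}_{t\in T}$ has uniform magnitude, a direct calculation (the first special case of Section~\ref{sec:gnpc}) gives $\mu_{1}=\mu_{2}=1$. The peak of the mean function is $\eta=|A|R_{\widehat{s}\widehat{s}}(0)=2\pi|A||T|$, using $R_{\widehat{s}\widehat{s}}(0)=2\pi|T|$. Substituting $\eta=2\pi|A||T|$ and $\mu_{1}=\mu_{2}=1$ into the expression \eqref{eq:U} for $U$ reproduces exactly the threshold $U=2\pi C_{1}|A||T|\cdot\max(m^{-1/2},\,m^{-1}\sqrt{\log(4/\delta)})\cdot\sqrt{\log(12\TO/\delta)}$ appearing in the statement, and the tail bound \eqref{eq:pconc} then yields the claim. (We invoke the complex case of Theorem~\ref{th:Pmain}, since $A\in\C$ and the sinusoid is complex-valued.)

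The only genuine subtlety, and the step I would be careful to address, is that the duality is introduced through the distributional identity $\widehat{s}_{0}(\omega)=2\pi\delta(\omega)$, $s_{0}(t)=1$. This is a purely heuristic device: the actual probabilistic argument never touches the delta function but depends only on the honest, square-integrable observation-domain template $s(t)=\mathbb{I}_{t\in T}$ and its autocorrelation $R_{\widehat{s}\widehat{s}}$ in the shift domain. Once one checks that every quantity entering Theorem~\ref{th:Pmain} is finite and computed from $s$ rather than from $s_{0}$, the application is rigorous and the corollary follows with no new estimates required.
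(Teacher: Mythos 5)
Your proposal is correct and follows essentially the same route as the paper, whose proof is simply the observation that the corollary follows immediately from Theorem~\ref{th:Pmain} once the time--frequency roles are exchanged and one substitutes $\mu_{1}=\mu_{2}=1$ and $\eta=|A|R_{\widehat{s}\widehat{s}}(0)=2\pi|A||T|$ into \eqref{eq:U}. Your additional remarks on the symmetry of the sampling interval and on the delta-function template being only a heuristic device are consistent with, and slightly more explicit than, the paper's treatment.
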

This corollary follows immediately from Theorem~\ref{th:Pmain}.

A close inspection of the definition of $\widetilde{R}_{\widehat{s}\widehat{s}}(\omega)$
reveals that $\omega_{0}$ is guaranteed to be a maximizer of $|\widetilde{R}_{\widehat{s}\widehat{s}}(\omega)|$,
with $|\widetilde{R}_{\widehat{s}\widehat{s}}(\omega_{0})|=2\pi|A||T|$.
What Corollary~\ref{corl:Pmaintone} ensures is that even for small values
of $m$, no other values of $\omega$ far from $\omega_{0}$ can give
$|\widetilde{R}_{\widehat{s}\widehat{s}}(\omega)|$ equal (or even
close to) $|\widetilde{R}_{\widehat{s}\widehat{s}}(\omega_{0})|$.
This fact is not only useful when we introduce nonidealities into the observation
process (see Section~\ref{sec:tonerobust}) but also in guiding a
computational method to search for the peak of $|\widetilde{R}_{\widehat{s}\widehat{s}}(\omega)|$.
We investigate this issue in Section~\ref{sec:grid} below.

\subsection{A Grid Search Approach}

\label{sec:grid}

In practice, in order to find the peak of $|\widetilde{R}_{\widehat{s}\widehat{s}}(\omega)|$,
one might hope to simply sample this function over a uniformly spaced
grid of frequencies drawn from $\Omega$. Because $\widetilde{R}_{\widehat{s}\widehat{s}}(\omega)$
is guaranteed to remain close to $AR_{\widehat{s}\widehat{s}}(\omega-\omega_{0})$,
which decays sharply away from $\omega_{0}$, it is possible to guarantee
that as long as the grid is chosen sufficiently fine, then the empirical
maximum over the grid points will occur very close to the true peak.

To illustrate this fact with some specific but arbitrary values, let us note that
for $|\omega|\le\pi|T|^{-1}$, $|R_{\widehat{s}\widehat{s}}(\omega)|\ge 0.636\cdot 2\pi|T|$.
Moreover, for $|\omega|\ge2\pi|T|^{-1}$, $|R_{\widehat{s}\widehat{s}}(\omega)|\le0.218\cdot 2\pi|T|$.
Following the techniques used to prove Corollary~\ref{corl:main},
we can ensure that $0.636\cdot 2\pi\left|A\right||T|- U >0.218\cdot 2\pi\left|A\right||T|+U$
with probability at least $1-\delta$ by taking
\[
m \ge C_2 \cdot \max\left(\frac{\log(12\TO/\delta)}{(0.636-0.218)^{2}},\;\frac{\sqrt{\log(4/\delta)\log(12\TO/\delta)}}{(0.636-0.218)}\right).
\]
It follows that, if we initially search for the maximum of $|R_{\widehat{s}\widehat{s}}(\omega)|$
on a grid with resolution $2\pi|T|^{-1}$ (note that this is the so-called
grid of Nyquist frequencies, given $T$), we are guaranteed that the
empirical maximum will occur at a grid point $\widehat{\omega}_{0}$
such that $|\widehat{\omega}_{0}-\omega_{0}|<2\pi|T|^{-1}$.

After this initial grid search, it is actually straightforward to refine the accuracy of the estimate $\widehat{\omega}_{0}$ using a local concave
ascent. Note that
\begin{align*}
\left|\widetilde{R}_{\widehat{s}\widehat{s}}(\omega)\right|^{2} & =\left(\frac{2\pi|T|}{m}\right)^2\left|\left\langle y,\psi_{\omega}\right\rangle \right|^{2}\\
 & =\left(\frac{2\pi|T||A|}{m}\right)^2\sum_{i,j}e^{\iunit\left(\omega_{0}-\omega\right)\left(t_{i}-t_{j}\right)}\\
 & =\frac{\left(2\pi|T||A|\right)^2}{m}+\left(\frac{2\pi|T||A|}{m}\right)^2\sum_{i\ne j}\cos\left(\left(\omega_{0}-\omega\right)\left(t_{i}-t_{j}\right)\right).\end{align*}
Since $|t_i-t_j|\le|T|$, $|\widetilde{R}_{\widehat{s}\widehat{s}}(\omega)|^{2}$
is guaranteed to be a concave function of $\omega$ when $|\omega-\omega_{0}|\le\frac{\pi}{2}\cdot|T|^{-1}$.
Therefore, if we have an estimate $\widehat{\omega}_{0}$ sufficiently
close to the true $\omega_{0}$ (i.e., $|\widehat{\omega}_{0}-\omega_{0}|\le\frac{\pi}{2}\cdot|T|^{-1}$),
a standard concave maximization (akin to convex minimization) procedure
beginning at $|\widetilde{R}_{\widehat{s}\widehat{s}}(\widehat{\omega}_{0})|$
will give us the {\em exact} value for $\omega_{0}$.
Since the grid search above guarantees that $|\widehat{\omega}_{0}-\omega_{0}|<2\pi|T|^{-1}$, one could ensure success by running four concave maximizations starting from the points $\widehat{\omega}_{0} \pm \frac{\pi}{2}\cdot|T|^{-1}$ and $\widehat{\omega}_{0} \pm \frac{3\pi}{2}\cdot|T|^{-1}$.

\subsection{Robustness}

\label{sec:tonerobust}

It is also possible to consider nonidealities in the observation process.
Following the same set of arguments as in Section~\ref{sec:Robustness}
(but exchanging the roles of time and frequency), we arrive at the
following result.

\begin{corl} \label{corl:tonenoise} Let $N(\omega)$ denote the random process induced by additive complex-valued Gaussian measurement noise having
variance $\sigma_{n}^{2}$, and define
\[
\widetilde{R}_{\widehat{s}\widehat{s},n}(\omega):=\frac{2\pi|T|}{m}(X(\omega)+N(\omega))
\]
to be the estimate of the autocorrelation function formed using the
noisy samples. Let $\delta>0$. Suppose that $\left|\Omega\right|\left|T\right|\ge3$,
that $m\geq C_{3}\log(1/\delta)$, and that
$$
m\ge  C_5 \cdot \max\left( \frac{\log(12\TO/\delta)}{(1-0.218)^2} , \; \frac{\sqrt{\log(4/\delta)\log(12\TO/\delta)}}{1-0.218},\; \frac{ \max\left(\log(\TO),\log(2/\delta)\right)}{(1-0.218)^{2}}\cdot\frac{\sigma_{n}^{2}}{|A|^{2}}  \right).
$$
Then with probability at least $1-2\delta$, the maximum value of $|\widetilde{R}_{\widehat{s}\widehat{s},n}(\omega)|$
must be attained for some $\widehat{\omega}_{0}$ within the interval $[\omega_{0}-2\pi|T|^{-1},\omega_{0}+2\pi|T|^{-1}]$.
\end{corl}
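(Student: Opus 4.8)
The plan is to recognize Corollary~\ref{corl:tonenoise} as the exact time-frequency dual of Corollary~\ref{corl:noise}, specialized to the pure-tone template, and to reproduce the short separation argument from the proof of Corollary~\ref{corl:noise} under the appropriate dictionary of correspondences. Under the exchange of time and frequency established in the preceding subsections, the observation domain has measure $|T|$, the shift domain is $\Omega$, the windowed template is $s(t)=\mathbb{I}_{t\in T}$ with $\widehat{s}(\omega)=|T|\,\mathrm{sinc}(\tfrac12|T|\omega)$, and $R_{\widehat{s}\widehat{s}}(\omega)=2\pi|T|\,\mathrm{sinc}(\tfrac12|T|\omega)$ with peak $R_{\widehat{s}\widehat{s}}(0)=2\pi|T|$. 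Because the template has uniform magnitude over the observation domain, we have $\mu_{1}=\mu_{2}=1$, so the activation condition \eqref{eq:mbound1noisenew} collapses to $m\ge C_{3}\log(1/\delta)$, which is exactly the stated hypothesis. The $\mathrm{sinc}$ bounds already recorded in Section~\ref{sec:grid} supply the main-lobe constants: I would take $\alpha_{1}=0.218$ and $\alpha_{2}=2\pi|T|^{-1}$, so that $|R_{\widehat{s}\widehat{s}}(\omega)|\le 0.218\cdot 2\pi|T|=\alpha_{1}R_{\widehat{s}\widehat{s}}(0)$ for all $|\omega-\omega_{0}|>\alpha_{2}$.

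With these identifications in place, I would first invoke Corollary~\ref{corl:Pmaintone} (the dual of Theorem~\ref{th:Pmain}) for the noiseless concentration of $\widetilde{R}_{\widehat{s}\widehat{s}}$ about its mean, and then the time-frequency exchanged version of Theorem~\ref{thm:mainNoisenew} for the tail bound on $\sup_{\omega}|\widetilde{N}(\omega)|$. Writing $\widetilde{R}_{\widehat{s}\widehat{s},n}(\omega)=\widetilde{R}_{\widehat{s}\widehat{s}}(\omega)+\widetilde{N}(\omega)$ and taking a union bound over the two failure events, both concentration statements hold simultaneously with probability at least $1-2\delta$. On this good event I would mimic \eqref{eq:ncorl1}--\eqref{eq:ncorl2}: lower-bound $|\widetilde{R}_{\widehat{s}\widehat{s},n}(\omega_{0})|$ by $2\pi|A||T|$ minus the two deviation terms, and upper-bound $|\widetilde{R}_{\widehat{s}\widehat{s},n}(\omega)|$ for $|\omega-\omega_{0}|>\alpha_{2}$ by $0.218\cdot 2\pi|A||T|$ plus the same two terms. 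The displayed sample-complexity bound is then precisely what is needed to force the lower bound at $\omega_{0}$ to strictly exceed the upper bound away from $\omega_{0}$, so no maximizer of $|\widetilde{R}_{\widehat{s}\widehat{s},n}|$ can lie outside $[\omega_{0}-2\pi|T|^{-1},\omega_{0}+2\pi|T|^{-1}]$.

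The one place that requires genuine care is tracking how the noise scale translates. In \eqref{eq:mbound2newnoise} the third term carries the factor $\sigma_{n}^{2}|\Omega|/(|A|^{2}\|\widehat{s}\|_{2}^{2})$, where $|\Omega|$ is the observation-domain measure and $\|\widehat{s}\|_{2}^{2}$ the observation-domain signal energy; under the exchange these become $|T|$ and $\|s\|_{2}^{2}=|T|$ respectively, so the factor collapses to $\sigma_{n}^{2}/|A|^{2}$, matching the third term of the stated bound. Concretely, I would recompute $\mathrm{Var}[\widetilde{N}(\omega)]$ directly for $N(\omega)=\sum_{k}n_{k}e^{\iunit\omega t_{k}}$ (using $s(t_{k})\equiv 1$ and $\E|n_{k}|^{2}=\sigma_{n}^{2}$) to confirm the per-point scale $2\pi|T|\sigma_{n}/\sqrt{m}$, and then verify that dividing the noise tail by the peak value $2\pi|A||T|$ cancels the $|T|$ dependence, leaving the clean $\sigma_{n}^{2}/|A|^{2}$ ratio.

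The main obstacle, such as it is, lies not in the separation argument (which is verbatim that of Corollary~\ref{corl:noise}) but in confirming that the dual of Theorem~\ref{thm:mainNoisenew} genuinely applies: one must check that the Gaussian noise process $\widetilde{N}(\omega)$ on $\Omega$ has the same covering-number and increment structure, under the metric induced by the process, as the process $\widetilde{N}(\tau)$ on $T$ did in Section~\ref{sec:Robustness}. Since $\Omega$ and $T$ are both intervals and the entire construction is symmetric in the two domains, the covering numbers of $\Omega$ match those used in the original proof with $\TO$ left unchanged, and the transfer is essentially bookkeeping; nonetheless this is the step that must be stated carefully before the universal constants $C_{3}$, $C_{4}$, and $C_{5}$ can be reused without modification.
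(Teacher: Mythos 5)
Your proposal is correct and follows essentially the same route as the paper, which itself only sketches this corollary by appeal to the time--frequency duality with Corollary~\ref{corl:noise}: you correctly identify $\mu_{1}=\mu_{2}=1$, $\alpha_{1}=0.218$, $\alpha_{2}=2\pi|T|^{-1}$, and the collapse of the noise factor $\sigma_{n}^{2}|\Omega|/(|A|^{2}\|\widehat{s}\|_{2}^{2})$ to $\sigma_{n}^{2}/|A|^{2}$ under the exchange of domains. Your added verification of the noise scaling and of the transfer of the chaining structure to $\Omega$ is consistent with, and somewhat more careful than, what the paper records.
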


Finally, let us note that with some additional work, we believe it would be possible to extend our analysis to account for multiple tones (or multiple translated pulses in the context of Section~\ref{sec:Matched filter noise free}). The problem becomes that of detecting the true peaks in a noisy sum of $\mathrm{sinc}$ functions. For tones that are well-separated, one could argue that the interference in the random process is minimal and that any prominent peak in $|\widetilde{R}_{\widehat{s}\widehat{s},n}(\omega)|$ indicates the presence of a tone. Tones that are very close may be impossible to discriminate (this is true even with Nyquist-rate samples), while tones that are moderately separated may be possible to discriminate by employing a greedy, iterative estimation procedure.

\subsection{Stylized Application: Chirp Time-of-Arrival Estimation}
\label{subsec:chirp}

We close by noting that the ability to estimate a pure tone's frequency from random time samples can also be parlayed into a technique for estimating a chirp signal's time-of-arrival from random time samples. For this discussion, suppose we receive a chirp signal
$$
x(t) = A \exp\left(j\left(\omega_c(t-t_0) + \frac{\alpha}{2}(t-t_0)^2\right)\right)
$$
over some time interval, where $\omega_c$ denotes the known starting frequency, $\alpha$ denotes the known chirp rate, $A$ denotes the complex amplitude, and $t_0$ denotes the unknown time-of-arrival. We can ``de-chirp'' this signal over this interval, computing
$$
\widetilde{x}(t) = x(t) \exp\left(-j \left(\omega_c t + \frac{\alpha}{2}t^2\right)\right) = \tilde{A}e^{-j\alpha t_0  t},
$$
where $\tilde{A}$ is a complex amplitude. The signal $\widetilde{x}(t)$ is merely a complex sinusoid (in this case, with frequency $\alpha t_0$). We have argued in this section that it is possible to estimate a pure tone's frequency from random samples in time, and in this case that means that it is possible to estimate the time-of-arrival parameter $t_0$ from random samples of $\widetilde{x}(t)$ in time. It is important to note that time samples of $\widetilde{x}(t)$ can be computed easily from time samples of $x(t)$ itself, since the two signals are related via point-wise multiplication.


\section{Theory}
\label{sec:theory}


\subsection{Proofs of Theorems~\ref{th:Emain} and~\ref{th:Pmain} (Noiseless
Analysis)}
\label{sec:mainproofs}

Let us begin by noting that, in the real case, both Theorems~\ref{th:Emain}
and~\ref{th:Pmain} are concerned with bounding
\begin{subequations}
\begin{eqnarray}
\left|\widetilde{R}_{ss}(\tau)-AR_{ss}(\tau-\tau_{0})\right| & = & \left|\widetilde{R}_{ss}(\tau)-A\cdot\mathrm{Re}(R_{ss}(\tau-\tau_{0}))\right|\nonumber \\
 & = & \left|\frac{|\Omega|}{2\pi m}\cdot\mathrm{Re}(X(\tau))-\frac{|\Omega|}{2\pi m}\cdot\mathrm{Re}(\E X(\tau))\right|\nonumber \\
 & = & \frac{|\Omega|}{2\pi m}\left|\mathrm{Re}\left(X(\tau)-\E X(\tau)\right)\right|\nonumber \\
 & \le & \frac{|\Omega|}{2\pi m}\left|X(\tau)-\E X(\tau)\right|.\label{eq:rssxreal}
\end{eqnarray}
In the complex case, both theorems are concerned with bounding
\begin{eqnarray}
\left|\widetilde{R}_{ss}(\tau)-AR_{ss}(\tau-\tau_{0})\right| & = & \frac{|\Omega|}{2\pi m}\left|X(\tau)-\E X(\tau)\right|.\label{eq:rssxcpx}
\end{eqnarray}
\end{subequations}
Thus, to cover both cases, it suffices to focus on bounding $\left|X(\tau)-\E X(\tau)\right|$.

\subsubsection{Setup}

The first step in our approach to bounding $\left|X(\tau)-\E X(\tau)\right|$ is to define the centered process
\[
Y(\tau):=X(\tau)-\E X(\tau)=A\sum_{k=1}^{m}\left|\widehat{s}\left(\omega_{k}\right)\right|^{2}e^{\iunit\omega_{k}\left(\tau-\tau_{0}\right)}-2\pi Am|\Omega|^{-1}R_{ss}(\tau-\tau_{0}).
\]
Our goal is to bound $\sup_{\tau}|Y(\tau)|$, but to do this, we relate the random process to one that is more easily bounded.
First, we symmetrize $Y(\tau)$ in the standard way. Create an independent
copy $Y'(\tau)$ (generated from an independent set of samples $\omega_{1}',\omega_{2}'\ldots,\omega_{m}'$),
and define
\begin{align}
	Z(\tau) & :=Y(\tau)-Y'(\tau)\nonumber \\
	& =A\sum_{k=1}^{m}|\widehat{s}(\omega_{k})|^{2}e^{\iunit\omega_{k}(\tau-\tau_{0})}-|\widehat{s}(\omega_{k}')|^{2}e^{\iunit\omega_{k}'(\tau-\tau_{0})}.\label{eq:Z}
\end{align}
Each term in \eqref{eq:Z} is a symmetric random variable, and so
$Z(\tau)$ has the same distribution as
\[
Z'(\tau):=A\sum_{k=1}^{m}\epsilon_{k}\left(\left|\widehat{s}(\omega_{k})\right|^{2}e^{\iunit\omega_{k}(\tau-\tau_{0})}-\left|\widehat{s}(\omega_{k}')\right|^{2}e^{\iunit\omega_{k}'(\tau-\tau_{0})}\right),
\]
where $\epsilon_{1},\epsilon_{2},...,\epsilon_{m}$ is a Rademacher
sequence independent of everything.%
\footnote{A Rademacher sequence is a sequence of independent random variables
taking $\pm1$ values with equal probabilities.%
}

We can control $\E\sup_\tau|Y(\tau)|$ through $\E\sup_\tau|Z'(\tau)|$ using the following simple result, which is proved in Appendix~\ref{sec:Y to Z}.
\begin{lem} \label{lem:Y to Z}
$\E\sup_{\tau}|Y(\tau)|\le\E\sup_{\tau}|Z'(\tau)|$.
\end{lem}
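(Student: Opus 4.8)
The plan is to carry out the textbook symmetrization argument from the theory of empirical processes, in two stages: first replacing the centered process $Y$ by the difference $Z=Y-Y'$ of two independent copies, and then inserting the Rademacher signs to pass to $Z'$. For the first stage, I would begin from the observation that the independent copy $Y'(\tau)=X'(\tau)-\E X(\tau)$ is itself centered, so that $\E_{Y'}[Y'(\tau)]=0$ for every $\tau$. Holding the realization of $Y$ fixed and taking expectation only over the primed variables, this lets me write $Y(\tau)=\E_{Y'}[\,Y(\tau)-Y'(\tau)\,]$. Taking suprema and using that the absolute value of a conditional expectation is at most the conditional expectation of the absolute value (Jensen), together with the elementary bound $\sup_\tau \E_{Y'}[\,\cdot\,]\le \E_{Y'}\sup_\tau[\,\cdot\,]$, I obtain $\sup_\tau|Y(\tau)|\le \E_{Y'}\sup_\tau|Y(\tau)-Y'(\tau)|$ pointwise in the realization of $Y$. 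Taking the outer expectation over $Y$ and recalling the definition $Z(\tau)=Y(\tau)-Y'(\tau)$ then yields $\E\sup_\tau|Y(\tau)|\le\E\sup_\tau|Z(\tau)|$.

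For the second stage, I would argue that $Z$ and $Z'$ are equal in distribution, so that their suprema have the same expectation. Writing $Z(\tau)=\sum_{k=1}^m V_k(\tau)$ with $V_k(\tau):=A\bigl(|\s(\omega_k)|^2 e^{\iunit\omega_k(\tau-\tau_0)}-|\s(\omega_k')|^2 e^{\iunit\omega_k'(\tau-\tau_0)}\bigr)$, the random elements $V_1,\dots,V_m$ (each viewed as a random function of $\tau$) are independent, and each is \emph{symmetric}: interchanging $\omega_k$ with $\omega_k'$ sends $V_k$ to $-V_k$, and since the pair $(\omega_k,\omega_k')$ is exchangeable this gives $V_k\stackrel{d}{=}-V_k$ as a process. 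Consequently $V_k\stackrel{d}{=}\epsilon_k V_k$ for an independent Rademacher sign $\epsilon_k$, and independence across $k$ upgrades this to $\sum_k V_k\stackrel{d}{=}\sum_k\epsilon_k V_k=Z'$. Hence $\E\sup_\tau|Z(\tau)|=\E\sup_\tau|Z'(\tau)|$, which combined with the first stage gives the claim.

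The main obstacle is not the algebra but the measurability of the suprema and the interchange of supremum and expectation over the continuum index set $T$. I would address this by noting that, for each fixed realization of the samples, $Y(\tau)$, $Z(\tau)$, and $Z'(\tau)$ are finite trigonometric sums and hence continuous (indeed analytic) in $\tau$; therefore the supremum over the interval $T$ coincides with the supremum over any countable dense subset, which makes each supremum a genuine random variable and legitimizes both the Jensen step and the distributional identity above. The symmetry claim $V_k\stackrel{d}{=}-V_k$ must be read as equality of the laws of entire sample paths rather than merely of one-dimensional marginals, but this is immediate from the exchangeability of the i.i.d.\ pair $(\omega_k,\omega_k')$, so no additional work is needed.
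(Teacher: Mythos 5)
Your proposal is correct and follows essentially the same route as the paper: the first stage is the identical centering-plus-Jensen symmetrization ($Y(\tau)=\E_{Y'}[Y(\tau)-Y'(\tau)]$, then swap $\sup$ and $\E$), and the second stage is the same distributional identity $Z\stackrel{d}{=}Z'$ that the paper asserts via the symmetry of each summand, which you merely spell out in more detail (along with the separability/measurability point the paper leaves implicit).
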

Furthermore, the deviation of $\sup_\tau |Y(\tau)|$ from its average can be controlled through the corresponding deviation of $\sup_\tau |Z'(\tau)|$.
The following is proved in Appendix~\ref{sec:Y to Z2}.
\begin{lem} \label{lem:Y to Z2} For any $\lambda\ge0$,
\[
	\P{\sup_{\tau}|Y(\tau)| > 2\E\sup_\tau|Y(\tau)| + \lambda} ~\leq~
	2\P{\sup_{\tau}|Z'(\tau)| > \lambda}.
\]
\end{lem}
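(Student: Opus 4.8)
The plan is to establish the one-sided comparison $\P{\sup_\tau|Z(\tau)|>\lambda}\ge\tfrac12\,\P{\sup_\tau|Y(\tau)|>2\E\sup_\tau|Y(\tau)|+\lambda}$ and then to invoke the fact recorded above that $Z$ and $Z'$ have the same distribution, so that $\P{\sup_\tau|Z(\tau)|>\lambda}=\P{\sup_\tau|Z'(\tau)|>\lambda}$. Rearranging the comparison and substituting $Z'$ for $Z$ gives the claim. The mechanism is the standard ``second symmetrization'' argument: the shift by $2\E\sup_\tau|Y(\tau)|$ is precisely what is needed to absorb the independent copy $Y'$, because $Y'$ is small with probability at least one half.

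First I would record a Markov step. Since $Y'(\tau)$ is an independent copy of $Y(\tau)$, the variable $\sup_\tau|Y'(\tau)|$ has the same law as $\sup_\tau|Y(\tau)|$, and in particular the same mean; Markov's inequality then gives $\P{\sup_\tau|Y'(\tau)|\le 2\E\sup_\tau|Y(\tau)|}\ge\tfrac12$. Next, on the event $B:=\{\sup_\tau|Y(\tau)|>2\E\sup_\tau|Y(\tau)|+\lambda\}$ --- which depends only on the first sample set $\omega_{1},\dots,\omega_{m}$ --- I would select a point $\tau^\star$ with $|Y(\tau^\star)|>2\E\sup_\tau|Y(\tau)|+\lambda$; crucially $\tau^\star$ is a function of the first sample set alone, hence independent of $Y'$. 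Conditioning on the first sample set (so that $\tau^\star$ and $Y(\tau^\star)$ are frozen), the triangle inequality $|Z(\tau^\star)|=|Y(\tau^\star)-Y'(\tau^\star)|\ge|Y(\tau^\star)|-\sup_\tau|Y'(\tau)|$ shows that $|Z(\tau^\star)|>\lambda$ on the sub-event $\{\sup_\tau|Y'(\tau)|\le 2\E\sup_\tau|Y(\tau)|\}$. By independence of $Y'$ from the first sample set, this sub-event has conditional probability at least $\tfrac12$, so on $B$ the conditional probability of $\{\sup_\tau|Z(\tau)|>\lambda\}$ is at least $\tfrac12$. Taking expectations over the first sample set (Fubini) yields $\P{\sup_\tau|Z(\tau)|>\lambda}\ge\tfrac12\,\P{B}$, as desired.

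I expect the only genuine subtlety to be the measurable selection of the witness $\tau^\star$ and, relatedly, the measurability of $\sup_\tau|Y(\tau)|$. This is benign here: for each fixed realization of the samples, $\tau\mapsto Y(\tau)$ is a finite sum of continuous functions of $\tau$ on the compact interval $T$, hence continuous, so the supremum is attained and a measurable maximizer may be chosen (for instance, the infimum of the set of maximizers). This continuity is exactly what lets the strict-inequality version of the statement come out cleanly; if one instead used an $\epsilon$-approximate maximizer and let $\epsilon\downarrow0$, the same argument would yield the conclusion with the harmless replacement of $>\lambda$ by $\ge\lambda$ on the right-hand side. Beyond this point everything reduces to the triangle inequality, Markov's inequality, and Fubini.
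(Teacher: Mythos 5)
Your argument is correct and is essentially the paper's own proof: both rest on the second symmetrization trick, using Markov's inequality to show $\sup_\tau|Y'(\tau)|\le 2\E\sup_\tau|Y(\tau)|$ with probability at least $\tfrac12$, exploiting the independence of the two sample sets, and passing from $Z$ to $Z'$ by equality of distributions. The only cosmetic difference is that you select a measurable witness $\tau^\star$ and condition on the first sample set, whereas the paper sidesteps the selection issue entirely via the pointwise-free chain $\sup_\tau|Y(\tau)|-\sup_\tau|Y'(\tau)|\le\sup_\tau\bigl(|Y(\tau)|-|Y'(\tau)|\bigr)\le\sup_\tau|Z(\tau)|$.
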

The above results allow us to focus on developing expectation and tail bounds for $\sup_{\tau}|Z'(\tau)|$.
We establish such bounds in the following subsections.

To ease the notation below, we make the following definitions for
quantities that will appear often: \[
M=M(s):=\sup_{\omega\in\Omega}|\widehat{s}(\omega)|^{2}=\|\widehat{s}\|_{\infty}^{2},\]
 \[
\bigmone=\bigmone(s,m,\Omega):=\sqrt{\frac{m}{|\Omega|}}\|\widehat{s}\|_{4}^{2},\]
 \[
\bigmtwo=\bigmtwo(s,m,\Omega):=\sqrt{\frac{m}{|\Omega|}}\|\widehat{s}\|_{2},\]
 \[
\bigmthree=\bigmthree(s,\omega_{1},\omega_{2},\dots,\omega_{m},\omega_{1}',\omega_{2}',\dots,\omega_{m}'):=\sqrt{\sum_{k=1}^{m}|\widehat{s}(\omega_{k})|^{4}+|\widehat{s}(\omega_{k}')|^{4}},\]
 and \[
\bigmfour=\bigmfour(s,\omega_{1},\omega_{2},\dots,\omega_{m}):=\sqrt{\sum_{k=1}^{m}|\widehat{s}(\omega_{k})|^{2}}.\]
We will also frequently use the following convenient facts. For any $a$ and $b$, we have
\begin{equation}
	e^{\iunit a}-e^{\iunit b}=
	2\iunit\sin\left(\frac{a-b}{2}\right)e^{\iunit(a+b)/2},
	\label{eq:conv fact 1}
\end{equation}
and
\begin{equation}
	\left|a+b\right|^{2}\le2\left|a\right|^{2}+2\left|b\right|^{2}.
	\label{eq:conv fact 2}
\end{equation}
Also, for any $c,u>0$, the following inequality follows from a standard Gaussian tail bound~\cite{CntUnivarDists}:
\begin{equation}
	\int_{x\ge u}e^{-\frac{x^2}{c^2}} dx\le \frac{c^2}{2u}e^{-\frac{u^2}{c^2}}.
\label{eq:conv fact 3}
\end{equation}

\subsubsection{Chaining}
\label{sec:pfchain}

We start by bounding $\sup_{\tau}\left|Z'(\tau)\right|$ conditioned on the choice of $\{\omega_{k}\}$ and $\{\omega'_{k}\}$.
To this end, we will use a chaining argument similar to what is used to prove the general Dudley inequality \cite{dudley1967sizes}, but optimized for our particular process (this will allow us to tightly control the constants).

The following tail bounds for $Z'(\tau)$ and its increments are proved in Appendix~\ref{sec:inctails}.
\begin{lem}
	\label{lem:inctails}
	For a fixed $\tau\in \R$ and any $\lambda\ge0$, $Z'(\tau)$ obeys
	\begin{equation}
		\operatorname{P}_{\epsilon_k}\left\{|Z'(\tau)| > \lambda\right\} ~\leq~ 2\exp\left(-\frac{\lambda^2}{4|A|^2M_3^2}\right),
    \label{eq:tailinc1}
	\end{equation}
    where $\operatorname{P}_{\epsilon_k}$ denotes probability with respect to $\{\epsilon_k\}$ conditioned on fixed $\{\omega_{k}\}$ and $\{\omega'_{k}\}$. Also, for fixed $\tau_1,\tau_2 \in \R$,
	\begin{equation}
		\operatorname{P}_{\epsilon_k}\left\{|Z'(\tau_1)-Z'(\tau_2)| > \lambda\right\}
		~\leq~ 2\exp\left(-\frac{\lambda^2}{|A|^2M_3^2|\Omega|^2|\tau_1-\tau_2|^2}\right).
	\label{eq:tailinc2}
    \end{equation}
\end{lem}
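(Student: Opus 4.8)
The plan is to fix the samples $\{\omega_k\}$ and $\{\omega_k'\}$ throughout, so that both $Z'(\tau)$ and the increment $Z'(\tau_1)-Z'(\tau_2)$ become \emph{Rademacher sums} with deterministic complex coefficients, and then to apply a subgaussian tail bound for such sums. Concretely, $Z'(\tau)=\sum_{k=1}^m \epsilon_k a_k$ with $a_k = A\left(|\widehat{s}(\omega_k)|^2 e^{\iunit\omega_k(\tau-\tau_0)}-|\widehat{s}(\omega_k')|^2 e^{\iunit\omega_k'(\tau-\tau_0)}\right)$, and likewise $Z'(\tau_1)-Z'(\tau_2)=\sum_{k=1}^m\epsilon_k d_k$ for suitable $d_k$. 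The whole lemma then reduces to (i) a single \emph{complex Hoeffding inequality} controlling the modulus of a Rademacher sum in terms of $\sum_k|a_k|^2$, and (ii) two routine coefficient estimates bounding $\sum_k|a_k|^2$ and $\sum_k|d_k|^2$ by multiples of $M_3^2$.

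For the coefficient estimates I would proceed as follows. For \eqref{eq:tailinc1}, apply \eqref{eq:conv fact 2} termwise to get $|a_k|^2\le 2|A|^2\left(|\widehat{s}(\omega_k)|^4+|\widehat{s}(\omega_k')|^4\right)$, and sum to obtain $\sum_k|a_k|^2\le 2|A|^2 M_3^2$ by definition of $M_3$. For the increment \eqref{eq:tailinc2}, I would first use \eqref{eq:conv fact 1} to write each phase difference as $e^{\iunit\omega_k(\tau_1-\tau_0)}-e^{\iunit\omega_k(\tau_2-\tau_0)}=2\iunit\sin\!\big(\tfrac{\omega_k(\tau_1-\tau_2)}{2}\big)e^{\iunit\omega_k(\tau_1+\tau_2-2\tau_0)/2}$, bound $|\sin x|\le|x|$ together with $|\omega_k|\le|\Omega|/2$ (since $\Omega=[-\omega_{\mathrm{max}},\omega_{\mathrm{max}}]$), and then apply \eqref{eq:conv fact 2} exactly as before; this yields $|d_k|^2\le \tfrac{1}{2}|A|^2|\Omega|^2|\tau_1-\tau_2|^2\left(|\widehat{s}(\omega_k)|^4+|\widehat{s}(\omega_k')|^4\right)$ and hence $\sum_k|d_k|^2\le \tfrac{1}{2}|A|^2|\Omega|^2|\tau_1-\tau_2|^2 M_3^2$.

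The core step is the complex Hoeffding inequality: for fixed $a_1,\dots,a_m\in\C$ and a Rademacher sequence, $\operatorname{P}_{\epsilon_k}\!\left\{\,\big|\sum_k\epsilon_k a_k\big|>\lambda\,\right\}\le 2\exp\!\big(-\lambda^2/(2\sum_k|a_k|^2)\big)$. Substituting the two coefficient bounds into this inequality produces exactly the denominators $4|A|^2M_3^2$ and $|A|^2 M_3^2|\Omega|^2|\tau_1-\tau_2|^2$ claimed in \eqref{eq:tailinc1} and \eqref{eq:tailinc2}. To prove the inequality I would start from Hoeffding's lemma applied to each projected term: for any $z\in\C$, $\mathbb{E}\exp\!\big(\epsilon_k\,\mathrm{Re}(\bar z a_k)\big)=\cosh(\mathrm{Re}(\bar z a_k))\le \exp\!\big(\tfrac12\mathrm{Re}(\bar z a_k)^2\big)$, so multiplying over $k$ and using $\sum_k \mathrm{Re}(\bar z a_k)^2\le |z|^2\sum_k|a_k|^2$ shows that every real projection $\mathrm{Re}(\bar z S)$ of $S=\sum_k\epsilon_k a_k$ is subgaussian with proxy $|z|^2\sum_k|a_k|^2$.

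I expect the passage from this per-direction subgaussian bound to a tail bound on the modulus $|S|$ --- with the sharp leading constant $2$ and the sharp variance factor --- to be the main obstacle. A naive union bound over the real and imaginary parts, or a net over the circle of directions $z$, loses a multiplicative constant (and, in the net case, degrades the exponent), so neither route yields the stated constants. The clean constant instead comes from treating $(\mathrm{Re}\,S,\mathrm{Im}\,S)$ as a single subgaussian vector in $\R^2$ whose coefficient covariance has trace $\sum_k|a_k|^2$, and carefully carrying out a Chernoff bound on $|S|^2$ that stays tight across all anisotropies of this covariance (interpolating between the degenerate real case, where two-sided Hoeffding is essentially exact, and the isotropic case, where the modulus is even more concentrated). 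Handling this conversion uniformly --- rather than separately for real and imaginary parts --- is the delicate point; everything else is bookkeeping with \eqref{eq:conv fact 1} and \eqref{eq:conv fact 2}.
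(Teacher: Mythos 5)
Your decomposition into conditional Rademacher sums, both coefficient estimates (with exactly the constants $\sum_k|a_k|^2\le 2|A|^2M_3^2$ and $\sum_k|d_k|^2\le\tfrac12|A|^2M_3^2|\Omega|^2|\tau_1-\tau_2|^2$ via \eqref{eq:conv fact 1}, \eqref{eq:conv fact 2}, $|\sin x|\le|x|$, and $|\omega_k|\le|\Omega|/2$), and the final substitution into the complex Hoeffding bound are precisely the paper's proof in Appendix~\ref{sec:inctails}. The one step you single out as the main obstacle --- the modulus tail bound $2\exp(-\lambda^2/(2\sum_k|b_k|^2))$ for complex Rademacher sums --- is not derived in the paper at all but simply cited as \cite[Prop.~6.11]{ra09-1}, so your (still only sketched) derivation of that inequality is the sole piece you would need to either finish carefully or replace with the citation.
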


We will consider the values of $Z'(\tau)$ on a series of discrete grids of points that are essentially localized on the interval $T=[\tau_{\mathrm{min}},\tau_{\mathrm{max}}]$. For each integer $j \ge 0$, let $T_j$ be a grid of points spaced $2^{-j}|\Omega|^{-1}$ apart:
\begin{equation}
	\label{eq:Tj}
	T_j = \{\tau_{\mathrm{min}} + 2^{-j-1}|\Omega|^{-1} + k2^{-j}|\Omega|^{-1},~k = 0,1,\ldots, \lfloor 2^{j}\TO \rfloor \}.
\end{equation}
All points in $T_j$ belong to $T$, except possibly the final point in $T_j$, which may exceed $\tau_{\mathrm{max}}$ by no more than $2^{-j-1}|\Omega|^{-1}$.
Moreover, if we denote by $\pi_j(\tau)$ the closest point in $T_j$ to a given point $\tau$, then $|\tau-\pi_j(\tau)|\leq 2^{-j-1}|\Omega|^{-1}$ for all $\tau\in T$.
The points in the $T_j$ are arranged like nodes in a dyadic tree, with each ``parent'' in $T_j$ having two ``children'' in $T_{j+1}$ (the two points that are closer to the parent than to any other point in $T_{j}$); the only exception to this rule occurs if $|T_{j+1}|$ is odd, in which case the final point in $T_j$ has only one child in $T_{j+1}$.

We define $L_j$ to be the set of ``links'' that connect the parents in $T_j$ to their children in $T_{j+1}$:
\begin{equation}
	\label{eq:Lj}
	L_j = \{(p,q)\in(T_j,T_{j+1})
	~|~\pi_j(\tau)=p~\text{and}~\pi_{j+1}(\tau)=q~\text{for some $\tau\in T$} \}.
\end{equation}
Because of the one-dimensional structure of $T$ and the particular arrangement of $T_j$'s, we observe that every child in $T_{j+1}$ is associated with only one link, and thus  $\#L_j=\#T_{j+1}\le2^{j+1}\TO+1$. Furthermore, the length of every link is half of the distance between consecutive points on $T_{j+1}$; that is $|q_j-p_j|=2^{-j-2}|\Omega|^{-1}$ for all $(p_j, q_j) \in L_j$.

For almost every $\tau\in T$~\cite[(6.46)]{ra09-1}, we can decompose $Z'(\tau)$ as a sum of the differences between approximations at different scales, writing the telescoping sum
\begin{equation*}
	Z'(\tau) = Z'(\pi_0(\tau)) + \sum_{j\geq 0} Z'(\pi_{j+1}(\tau)) - Z'(\pi_j(\tau)).
\end{equation*}
Thus
\[
	|Z'(\tau)| ~\leq~
	|Z'(\pi_0(\tau))| + \sum_{j\geq 0} |Z'(\pi_{j+1}(\tau)) - Z'(\pi_j(\tau))|,
\]
and
\[
	\sup_{\tau\in T}|Z'(\tau)| ~\leq~
	\max_{p_0\in T_0}|Z'(p_0)| +
	\sum_{j\geq 0} \max_{(p_j,q_j)\in L_j}|Z'(q_j) - Z'(p_j)|.
\]	
Therefore, for any $\lambda_1,\lambda_2\ge0$ and any sequence of positive numbers $\{ u_j\}$ such that $\sum_{j\geq0}u_j\leq1$, we have
\begin{align}
	& \PE{\sup_{\tau\in T} |Z'(\tau)| > \lambda_1 + \lambda_2} \nonumber \\
    &~~~~~\leq
	\PE{ \left \{\max_{p_0\in T_0}|Z'(\tau)| > \lambda_1 \right\} \cup \left\{\sum_{j\geq0} \max_{(p_j,q_j)\in L_j}|Z'(q_j)- Z'(p_j)| > \lambda_2\right\}} \nonumber \\
    &~~~~~ \leq\PE{\max_{p_0\in T_0}|Z'(\tau)| > \lambda_1} +
	\PE{\sum_{j\geq0} \max_{(p_j,q_j)\in L_j}|Z'(q_j)- Z'(p_j)| > \lambda_2} \nonumber \\
	&~~~~~\le \PE{\max_{p_0\in T_0}|Z'(\tau)| > \lambda_1} +
	\sum_{j\geq0} \PE{ \max_{(p_j,q_j)\in L_j}|Z'(q_j)- Z'(p_j)| > \lambda_2 u_j}. \label{eq:twoPartSup}
\end{align}
To bound the first term in (\ref{eq:twoPartSup}), we apply (\ref{eq:tailinc1}) along with the union bound and the fact that $\#T_0\leq \TO + 1$ to obtain
$$
	\PE{\max_{p_0\in T_0}|Z'(\tau)| > \lambda_1} \leq 2(\#T_0)\exp\left(\frac{-\lambda_1^2}{4|A|^2M_3^2}\right)
       \leq 2(\TO+1)\exp\left(\frac{-\lambda_1^2}{4|A|^2M_3^2}\right).
$$
To bound the second term in (\ref{eq:twoPartSup}), take $u_j = \sqrt{j+3}~2^{-j-2}$ and assume that $\lambda_2 \ge \lambda_0/3$, where $\lambda_0:=3|A|M_3\sqrt{\log{(2\TO)}}$. 
Then, for every $j\ge0$, we have
\begin{align*}
	\PE{\max_{(p_j,q_j)\in L_j}|Z'(q_j)- Z'(p_j)| > \lambda_2 u_j}
	& \le 2(\#L_j)\exp\left(\frac{-\lambda_2^2 u_j^2}{|A|^2M_3^2|\Omega|^2(2^{-j-2}|\Omega|^{-1})^2}\right)\\
    &= 2(\#L_j)\exp\left(\frac{-(j+2)\lambda_2^2}{|A|^2M_3^2}\right)
	 \exp\left(\frac{-\lambda_2^2}{|A|^2M_3^2}\right)\\
	&\leq 2(\#L_j)(2\TO)^{-j-2}\exp\left(\frac{-\lambda_2^2}{|A|^2M_3^2}\right) \\
	&\le \left((\TO)^{-j-1}+2^{-j-1}(\TO)^{-j-2}\right)\exp\left(\frac{-\lambda_2^2}{|A|^2M_3^2}\right),
\end{align*}
where the first line above follows from applying (\ref{eq:tailinc2}) along with the union bound and the fact that $|q_j-p_j|=2^{-j-2}|\Omega|^{-1}$ for all $(p_j, q_j) \in L_j$, the third line uses the assumption that $\lambda_2 \ge \lambda_0/3$, and the fourth line follows because $\#L_j\leq 2^{j+1}\TO + 1$.
If we assume that $\TO\geq 3$, it follows that
\begin{align*}
	\sum_{j\geq0} \PE{ \max_{(p_j,q_j)\in L_j}|Z'(q_j)- Z'(p_j)| > \lambda_2 u_j} & \le \sum_{j\ge0} \left(3^{-j-1}+2^{-j-1}\cdot3^{-j-2}\right)\exp\left(\frac{-\lambda_2^2}{|A|^2M_3^2}\right)\\
	&\le(17/30)\exp\left(\frac{-\lambda_2^2}{|A|^2M_3^2}\right).
\end{align*}

Putting together our bounds for the first and second terms in (\ref{eq:twoPartSup}), for any $\lambda \ge \lambda_0$ we may take $\lambda_1 = 2\lambda/3$ and $\lambda_2 = \lambda/3$ to conclude that
\begin{align}
	\PE{\sup_{\tau\in T} |Z'(\tau)| > \lambda} &\leq
	(2\TO+2)\exp\left(\frac{-\lambda^2}{9|A|^2M_3^2}\right) +
	(17/30)\exp\left(\frac{-\lambda^2}{9|A|^2M_3^2}\right) \nonumber\\
	&\leq \left(2\TO+2.57\right)\exp\left(\frac{-\lambda^2}{9|A|^2M_3^2}\right)\nonumber\\
	&\leq 3\TO \exp\left(\frac{-\lambda^2}{9|A|^2M_3^2}\right).\label{eq:tail bound noise-free cond on omega_k}	
\end{align}
The third line above follows from our assumption that $\TO\ge3$. In the subsections that follow, we translate this conditional tail bound into unconditional expectation and tail bounds for $\sup_{\tau}|Z'(\tau)|$.

\subsubsection{Completing the Proof of Theorem~\ref{th:Emain} (Expectation)}

\label{sub:expectation noise-free process}

Conditioned on the choice of $\{\omega_{k}\}$ and $\{\omega'_{k}\}$, we can integrate the tail bound developed above to obtain an upper bound for
$\operatorname{E}_{\epsilon_k}\sup_{\tau}\left|Z'(\tau)\right|$. Note that, for any nonnegative random variable $V$, we have~\cite[Prop.~6.1]{ra09-1}
\begin{equation}
\label{eq:tailtoE}
	\E V = \int_0^\infty \P{V > u} du.
\end{equation}

Once we have bounded the average of the supremum of the conditioned process, it is then straightforward to extend this to a bound for $\E\sup_{\tau}\left|Z'(\tau)\right|$
by removing the conditioning on $\{\omega_{k}\}$ and $\{\omega'_{k}\}$.

Recall that $\lambda_0=3|A|M_3\sqrt{\log{(2\TO)}}$. Then it follows from the identity above that
\begin{align}
	\mathbb{E}_{\epsilon_k}\sup_{\tau\in T}|Z'(\tau)|
	&= \int_{0}^\infty \operatorname{P}_{\epsilon_k} \left\{\sup_{\tau\in T} |Z'(\tau)|>\lambda\right\} d\lambda \nonumber\\
	&= \int_{0}^{\lambda_0} \operatorname{P}_{\epsilon_k}\left\{\sup_{\tau\in T} |Z'(\tau)|>\lambda\right\} d\lambda
	+\int_{\lambda_0}^{\infty} \operatorname{P}_{\epsilon_k}\left\{\sup_{\tau\in T} |Z'(\tau)|>\lambda\right\} d\lambda \nonumber\\
	&\leq \lambda_0+\left(2\TO+2.57\right)\int_{\lambda_0}^{\infty} \exp\left(\frac{-\lambda^2}{9|A|^2M_3^2}\right) d\lambda \nonumber\\
	&\leq \lambda_0+\left(2\TO+2.57\right)\cdot \frac{9|A|^2M_3^2}{2\lambda_0}\exp\left(\frac{-\lambda_0^2}{9|A|^2M_3^2}\right) \nonumber\\
	&= 3|A|M_3\sqrt{\log{(2\TO)}}+\left(1+\frac{1.285}{\TO}\right)\frac{3|A|M_3}{2\sqrt{\log(2 \TO)}}\nonumber\\
	&\le 3|A|M_3\sqrt{\log{(2\TO})}+\left(1+\frac{1.285}{3}\right)\frac{3|A|M_3}{2\sqrt{\log6}}\nonumber\\
	&\le |A|M_3\left(3\sqrt{\log{(2\TO)}}+1.61\right).\label{eq:main bnd on E cond on omega_k}
\end{align}
The fourth line above follows from \eqref{eq:conv fact 3}, and the sixth line follows from our assumption that  $\TO\ge3$. Now it remains to remove the conditioning by taking the average over $\left\{ \omega_k \right\}$ and $\left\{ \omega'_k \right\}$. First note that, by Jensen's inequality, we have
\begin{align}
	\mathbb{E}_{\omega_k,\omega'_k}M_3 &\leq \sqrt{\mathbb{E}_{\omega_k,\omega'_k}M_3^2}\nonumber\\
	&=\sqrt{\E_{\omega_k} \sum_{k=1}^m |\widehat{s}(\omega_k)|^4  + \E_{\omega_k'} \sum_{k=1}^m|\widehat{s}(\omega_k')|^4 } \nonumber\\
	&=\sqrt{2m\mathbb{E}_{\omega}|\widehat{s}(\omega)|^4}\nonumber\\
	&=\sqrt{2m|\Omega|^{-1}\left\Vert\widehat{s}\right\Vert^4_4}\nonumber\\
	&=\sqrt{2}M_1.\label{eq:average of M_3}
\end{align}
Now combining the above inequality with \eqref{eq:main bnd on E cond on omega_k} brings us to
\begin{align*}
	\E \sup_{\tau\in T}|Z'(\tau)| &=\operatorname{E}_{\omega_k,\omega'_k}\operatorname{E}_{\epsilon_k} \sup_{\tau\in T}
	 |Z'(\tau)|\\
	&\le \operatorname{E}_{\omega_k,\omega'_k} |A|M_3\left(3\sqrt{\log{(2\TO)}}+1.61\right)\\
	&\le |A|M_1\left(4.25\sqrt{\log{(2\TO)}}+2.28\right).
\end{align*}
The final link to the random process of interest is via Lemma~\ref{lem:Y to Z}:
\begin{align}
	\E\sup_{\tau}|Y(\tau)|&\le\E\sup_{\tau}|Z'(\tau)|\nonumber\\
	&\le |A|M_1\left(4.25\sqrt{\log{(2\TO)}}+2.28\right)\label{eq:bnd on average of the noisefree process Pre}\\
	&\le |A|M_1\left(4.25\sqrt{\log{(2\TO)}}+\frac{2.28}{\sqrt{\log 6}}\sqrt{\log{(2\TO)}}\right)\nonumber\\
	&\le 5.96|A|M_1\sqrt{\log{(2\TO)}}, \label{eq:bnd on average of the noisefree process}
\end{align}
where the third line follows because we assumed that $\TO\ge3$. This completes the proof of Theorem \ref{th:Emain}, after plugging \eqref{eq:bnd on average of the noisefree process Pre} and  \eqref{eq:bnd on average of the noisefree process} into~\eqref{eq:rssxreal} or~\eqref{eq:rssxcpx}.

\subsubsection{Completing the Proof of Theorem~\ref{th:Pmain} (Tail Bound)}

\label{sub:tail}

Recall the tail bound obtained in \eqref{eq:tail bound noise-free cond on omega_k}. In this section, we remove the conditioning on $\{\omega_k\}$ and $\{\omega'_k\}$
to obtain a tail bound for the supremum of $\left|Y(\tau)\right|$ on $T$.
From \eqref{eq:average of M_3}, recall that $M_3^2$ is a sum of independent bounded random variables, and thus it
is closely concentrated about its average $\E M_3^2 = 2M_1^2$. To quantify this, we will use the classic Bernstein inequality, which is restated
below for convenience.
\begin{lem} \label{lem:Bernstein ineq} {\em \cite{bennett1962probability}}
Consider a sequence of independent zero-mean random variables $V_{1},V_{2},\ldots,V_{m}$
with $|V_{k}|\leq B$ for $k=1,2,...,m$. Then for any $\lambda \geq 0$, the following holds:
\[
	\P{ \sum_{k=1}^{m}V_{k}>\lambda} ~\leq~\exp\left(\frac{-\lambda^{2}}{2\rho^{2}+2B\lambda/3}\right),
\]
 where $\rho^{2}=\sum_{k=1}^{m}\mathbb{E}V_{k}^{2}$.
 \end{lem}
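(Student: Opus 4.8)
The plan is to prove this via the classical Cram\'{e}r--Chernoff exponential moment method. First I would fix a parameter $t>0$ and apply Markov's inequality to the random variable $e^{t\sum_k V_k}$, which gives
\[
\P{\sum_{k=1}^m V_k > \lambda} \le e^{-t\lambda}\, \E\, e^{t\sum_k V_k} = e^{-t\lambda}\prod_{k=1}^m \E\, e^{tV_k},
\]
where the factorization uses independence of the $V_k$. The entire problem then reduces to controlling the moment generating function $\E\, e^{tV_k}$ of each individual bounded, zero-mean summand, and afterwards choosing $t$ optimally.

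The key step --- and the one I expect to be the main obstacle, since the precise constant $3$ in the denominator of the claimed bound is dictated by it --- is the per-term moment generating function estimate. I would expand $\E\, e^{tV_k} = 1 + t\,\E V_k + \sum_{p\ge2} \frac{t^p \E V_k^p}{p!}$, observe that the linear term vanishes because $\E V_k = 0$, and bound the remaining moments using $|\E V_k^p| \le \E|V_k|^p \le B^{p-2}\,\E V_k^2$, which holds since $|V_k|\le B$. The resulting series $\sum_{p\ge2} \frac{t^p B^{p-2}\,\E V_k^2}{p!}$ is then tamed by the elementary factorial inequality $p! \ge 2\cdot 3^{p-2}$, which converts it into a geometric series summing to $\frac{t^2\,\E V_k^2}{2(1-tB/3)}$, valid for $0 < t < 3/B$. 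Applying $1+x\le e^x$ yields
\[
\E\, e^{tV_k} \le \exp\left(\frac{t^2\,\E V_k^2}{2(1-tB/3)}\right).
\]

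Finally I would multiply these bounds over $k$, using $\rho^2 = \sum_k \E V_k^2$, to obtain
\[
\P{\sum_{k=1}^m V_k > \lambda} \le \exp\left(-t\lambda + \frac{t^2\rho^2}{2(1-tB/3)}\right),
\]
and then optimize the free parameter. Rather than differentiate, I would substitute the explicit near-optimal choice $t = \frac{\lambda}{\rho^2 + B\lambda/3}$, which lies in $(0,3/B)$ for every $\lambda\ge0$; a short calculation shows $1 - tB/3 = \rho^2/(\rho^2 + B\lambda/3)$, and plugging this in collapses the exponent exactly to $\frac{-\lambda^2}{2\rho^2 + 2B\lambda/3}$, giving the stated inequality. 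Since this is a textbook result cited to Bennett, I anticipate no subtlety beyond careful bookkeeping of the constant in the factorial bound; the substitution for $t$ and the $p!\ge 2\cdot 3^{p-2}$ estimate are precisely what pin down the $2B\lambda/3$ term in the denominator.
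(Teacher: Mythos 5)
Your proof is correct. The paper does not prove this lemma at all --- it is stated as a known result with a citation to Bennett --- so there is no internal argument to compare against; your Chernoff/moment-generating-function derivation is the standard textbook proof of exactly this form of Bernstein's inequality, and every step checks out: the moment bound $\E|V_k|^p \le B^{p-2}\E V_k^2$, the factorial estimate $p! \ge 2\cdot 3^{p-2}$, the geometric summation valid for $0<t<3/B$, and the substitution $t = \lambda/(\rho^2 + B\lambda/3)$ (which indeed lies in $(0,3/B)$ and collapses the exponent to $-\lambda^2/(2\rho^2 + 2B\lambda/3)$) are all accurate.
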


Here, take $V_k=|\widehat{s}(\omega_k)|^4+|\widehat{s}(\omega'_k)|^4-2|\Omega|^{-1}\left\Vert\widehat{s}\right\Vert^4_4$ and note that
\begin{align*}
	\E V_k & =2\E \left(|\widehat{s}(\omega_k)|^4-|\Omega|^{-1}\left\Vert\widehat{s}\right\Vert^4_4\right)=0.
\end{align*}
Also,
\begin{align*}
	|V_k| &\leq \sup_{\omega,\omega'}\left| |\widehat{s}(\omega)|^4 + |\widehat{s}(\omega')|^4 - 2|\Omega|^{-1}\left\Vert\widehat{s}\right\Vert^4_4 \right| \\
	&\le \sup_{\omega,\omega'} ~ \max\left( |\widehat{s}(\omega)|^4 + |\widehat{s}(\omega')|^4, 2|\Omega|^{-1}\left\Vert\widehat{s}\right\Vert^4_4\right)\\
	&= \max\left( 2\sup_\omega |\widehat{s}(\omega)|^4 , 2|\Omega|^{-1}\left\Vert\widehat{s}\right\Vert^4_4\right)\\
	&= 2\max \left( M^2 , |\Omega|^{-1}\left\Vert\widehat{s}\right\Vert^4_4\right)\\
	&= 2M^2.
\end{align*}
The second line above follows from the convenient fact that $\left|a-b\right|\le\max\left(a,b\right)$ for any $a,b\ge0$, and the last line follows from the fact that $\left\Vert\widehat{s}\right\Vert_4^4 \le |\Omega|\sup_\omega|\widehat{s}(\omega)|^4$. We also have 
\begin{align*}
	\rho^2 &= \sum_{k=1}^m \E\left(|\widehat{s}(\omega_k)|^4 + |\widehat{s}(\omega'_k)|^4\right)^2 - 4|\Omega|^{-2}\left\Vert\widehat{s}\right\Vert_4^8\\
	&\leq \sum_{k=1}^m \E\left(|\widehat{s}(\omega_k)|^4 + |\widehat{s}(\omega_k')|^4\right)^2 \\
	&\leq \sum_{k=1}^m 2\E|\widehat{s}(\omega_k)|^8 + 2\E\left|\widehat{s}(\omega'_k)\right|^8 \\
	&= 4m\E|\widehat{s}(\omega)|^8 \\
	&\leq 4mM^2 \E|\widehat{s}(\omega)|^4 \\
	&=4mM^2|\Omega|^{-1}\|\widehat{s}\|^4_4\\
	&= 4M^2M_1^2.
\end{align*}
The first line above follows because $\E(V-\E V)^2=\E V^2-(\E V)^2$ for any real-valued random variable $V$. 
The third line is implied by \eqref{eq:conv fact 1}. Now we can apply the Bernstein inequality to $\sum_{k=1}^mV_k$ for any $\lambda\ge0$ and obtain
\begin{equation}
	\P{M_3^2 > 2M_1^2 + \lambda}
	\le \exp\left(\frac{-\lambda^2}{8M^2M_1^2+(4/3)M^2\lambda}\right).\label{eq:final bernstein for M_3^2}
\end{equation}
Assume that $M_1\geq M\sqrt{\log(4/\delta)}$.  Then \eqref{eq:final bernstein for M_3^2} implies that
\[
	\P{M_3^2 > 2M_1^2 + \lambda} ~\leq~ \exp\left(\frac{-\lambda^2\log(4/\delta)}{8M_1^4 + 4M_1^2\lambda/3}\right).
\]
Take $\lambda = 3.58 M_1^2$.  Then
\begin{equation}
	\P{M_3^2 > 5.58M_1^2}\leq \delta/4.\label{eq:M3 tail part 1}
\end{equation}

Now assume that $M_1\leq M\sqrt{\log(4/\delta)}$.  Then
\begin{align*}
	\P{M_3^2 > 2M^2\log(4/\delta) + \lambda}
	&\leq \P{M_3^2 > 2M_1^2 + \lambda}\\
	&\le \exp\left(\frac{-\lambda^2}{8M^4\log(4/\delta) + (4/3)M^2\lambda}\right).
\end{align*}

Take $\lambda = 3.58 M^2\log(4/\delta)$.  Then
\begin{equation}
	\P{M_3^2 > 5.58M^2\log(4/\delta)}
	\leq\delta/4.\label{eq:M3 tail part 2}
\end{equation}

Therefore, combining \eqref{eq:M3 tail part 1} and \eqref{eq:M3 tail part 2}, we arrive at
\begin{equation*}
	\P{M_3 > 2.37 \max\left(M_1,M\sqrt{\log(4/\delta)}\right) } ~\leq~\delta/4.
\end{equation*}
Let $\mathcal{E}$ denote the event that $M_3\le2.37 \max\left(M_1,M\sqrt{\log(4/\delta)}\right)$. Then clearly,
\begin{equation}
\label{eq:tail bnd on noise free part1}
	\P{\mathcal{E}}\ge 1-\delta/4.
\end{equation}

On the other hand, taking $\lambda=3|A|M_3\sqrt{\log(12\TO/\delta)}$, \eqref{eq:tail bound noise-free cond on omega_k} implies that
\begin{equation}\label{eq:tail bnd on noise free part2}
	\PE{\sup_\tau|Z'(\tau)| > 3|A|M_3\sqrt{\log(12\TO/\delta)}} ~\leq~
	\delta/4,
 \end{equation}

Now we can combine \eqref{eq:tail bnd on noise free part1} and \eqref{eq:tail bnd on noise free part2} as follows. For notational convenience, set  $b:=3|A|M_3\sqrt{\log(12\TO/\delta)}$, define
\[
	u := 7.11|A| \max\left(M_1,M\sqrt{\log(4/\delta)}\right)
	\sqrt{\log(12|\Omega||T|/\delta)},
\]
and note that
\begin{align*}
	& \P{ \sup_{\tau\in T}\left|Z'\left(\tau\right)\right|> u} \\
	& \le\P{ \sup_{\tau\in T}\left|Z'\left(\tau\right)\right|> u\,\middle|\, \mathcal{E}} \P{\mathcal{E}} +\P{ \mathcal{E}^C} \\
	& \le\P{ \sup_{\tau\in T}\left|Z'\left(\tau\right)\right|> b\,\middle|\, \mathcal{E}} \P{\mathcal{E}} +\frac{\delta}{4}\\
	& =\P{ \mathcal{E}}\cdot\frac{1}{\P{ \mathcal{E}}}\int_{\mathcal{E}}\PE{ \sup_{\tau\in T}\left|Z'\left(\tau\right)\right|> b\,\middle|\,\left\{ \omega_{k}\right\}, \left\{ \omega_{k}'\right\} } \, d\mu\left(\left\{ \omega_{k}\right\}, \left\{ \omega_{k}'\right\} \right)+\frac{\delta}{4}\\
	& \le\int_{\mathcal{E}}\frac{\delta}{4}\, d\mu\left(\omega_{1},\omega_2,\cdots,\omega_m \right)+\frac{\delta}{4}\\
	& =\frac{\delta}{2}.
\end{align*}
The fourth line above follows from the definition of conditional measure.
The final link to the random process of interest is via Lemma~\ref{lem:Y to Z2}:
\begin{align}
	\P{\sup_{\tau}|Y(\tau)| > 8.5|A|M_1\sqrt{\log(2\TO)}+4.56|A|M_1 + u} &
	\le\P{\sup_{\tau}|Y(\tau)| > 2\E\sup_\tau|Y(\tau)| + u} \nonumber\\
	&\le 2\P{\sup_{\tau}|Z'(\tau)| > u}\nonumber \\
	&\le \delta,\label{eq:tail bnd on noise free process}
\end{align}
where we used \eqref{eq:bnd on average of the noisefree process Pre} in the first line above. This completes the proof of Theorem~\ref{th:Pmain}, as
\begin{align*}
	&8.5|A|M_1\sqrt{\log(2\TO)}+4.56|A|M_1 + u\\
	&\le  |A|\max\left(M_1,M\sqrt{\log(4/\delta)}\right) \left(8.5\sqrt{\log(2\TO)}+4.56+7.11\sqrt{\log(12|\Omega||T|/\delta)}\right)\\
	&\le  |A|\max\left(M_1,M\sqrt{\log(4/\delta)}\right) \left(15.61\sqrt{\log(12|\Omega||T|/\delta)}+4.56\right)\\
	&\le  |A|\max\left(M_1,M\sqrt{\log(4/\delta)}\right) \left(15.61\sqrt{\log(12|\Omega||T|/\delta)}+\frac{4.56}{\sqrt{\log 36}}\sqrt{\log(12|\Omega||T|/\delta)} \right)\\
	&\le  C_1|A|\max\left(M_1,M\sqrt{\log(4/\delta)}\right) 	\sqrt{\log(12|\Omega||T|/\delta)},
\end{align*}
where the fourth line follows from the assumption that $\TO\ge3$, and the fifth line holds by taking $C_1 = 18.02$. Actually, we see from the above that we may slightly improve upon the value of $U$ specified in \eqref{eq:U} by taking
	\begin{equation*}
		U=\max\left(\frac{\eta\mu_{1}}{\sqrt{m}},~~\frac{\eta\mu_{2}}{m}\cdot\sqrt{\log(4/\delta)}\right)\cdot \left(15.61\sqrt{\log(12|\Omega||T|/\delta)}+4.56\right).
	\end{equation*}


\subsection{Proofs of Theorems~\ref{thm:noiseexp} and~\ref{thm:mainNoisenew}
(Noisy Measurements)}
\label{sec:noiseproofs}

Let us begin by noting that, in the real case, both Theorems~\ref{thm:noiseexp}
and~\ref{thm:mainNoisenew} are concerned with bounding $|\widetilde{N}(\tau)|=\frac{|\Omega|}{2\pi m}\left|\mathrm{Re}(N(\tau))\right|\le\frac{|\Omega|}{2\pi m}\left|N(\tau)\right|$.
In the complex case, both theorems are concerned with bounding $|\widetilde{N}(\tau)|=\frac{|\Omega|}{2\pi m}\left|N(\tau)\right|$.
Thus, to cover both cases, it suffices to focus on bounding $\left|N(\tau)\right|$.

We first bound $\E\sup_{\tau}\left|N(\tau)\right|$, and then we show that
$\sup_{\tau}\left|N(\tau)\right|$ is sharply concentrated about its
mean with high probability.

\subsubsection{Proof of Theorem~\ref{thm:noiseexp} (Expectation)
\label{sub:Dudley-Inequality for add noise}}

We begin by noting that conditioned on $\{\omega_{k}\}$,
$N(\tau)$ is a complex-valued Gaussian process to which we can apply a chaining argument similar to the one put forth in Section~\ref{sec:mainproofs}.  With the $\{\omega_k\}$ fixed, at each $\tau$ the real and imaginary parts of $N(\tau)$ have the same Gaussian distributions, and thus the magnitude $|N(\tau)|$ is a Rayleigh random variable with second moment
\begin{align}
	\E_{n_k}|N(\tau)|^2 & =\E_{n_k} \left|\sum_{k=1}^m n_k\widehat{s}^*(\omega_k)e^{\iunit \omega_k\tau}\right|^2\nonumber\\
	 &=\sum_{k=1}^m\E_{n_k}|n_k|^2 \left| \widehat{s}(\omega_k) \right|^2\nonumber\\
	&= \sigma_n^2\sum_{k=1}^m|\widehat{s}(\omega_k)|^2 \nonumber\\
	&=\sigma_n^2M_4^2,\label{eq:2nd moment of |N(t)|}
\end{align}
where the second line follows from the independence of $\{n_k\}$.
It is known that a Rayleigh random variable $V$ with $\E V^2=c^2$ satisfies $\P{V>\lambda}=\exp(-\lambda^2/c^2)$ \cite{CntUnivarDists}, and thus a tail bound for $|N(\tau)|$ follows directly from the above:
\begin{equation}
	\label{eq:Ntautail}
	\operatorname{P}_{n_k}\left\{|N(\tau)| > \lambda \right\} ~=~
	\exp\left(\frac{-\lambda^2}{\sigma_n^2M_4^2}\right).
\end{equation}
Likewise, the increment $|N(\tau_1)-N(\tau_2)|$ is Rayleigh with
\begin{align*}
	\E_{n_k}|N(\tau_1)-N(\tau_2)|^2 &=
	\E_{n_k} \left|\sum_{k=1}^m n_k\widehat{s}^*(\omega_k)\left(e^{\iunit \omega_k\tau_1}-e^{\iunit\omega_k\tau_2}\right) \right|^2\\
	 &=\sum_{k=1}^m \E_{n_k} |n_k|^2 \left|\widehat{s}(\omega_k)\right|^2 \left|e^{\iunit \omega_k\tau_1}-e^{\iunit\omega_k\tau_2}\right|^2\\
	&=\sigma_n^2\sum_{k=1}^m|\widehat{s}(\omega_k)|^2\left|e^{\iunit\omega_k\tau_1}-e^{\iunit\omega_k\tau_2}\right|^2 \\
	&= \sigma_n^2\sum_{k=1}^m4|\widehat{s}(\omega_k)|^2 \sin^2\left(\omega_k(\tau_1-\tau_2)/2\right) \\
	&\le \sigma_n^2\sum_{k=1}^m|\widehat{s}(\omega_k)|^2 |\omega_k|^2 |\tau_1-\tau_2|^2 \\
	&\le \frac{1}{4}\sigma_n^2|\Omega|^2|\tau_1-\tau_2|^2\sum_{k=1}^m|\widehat{s}(\omega_k)|^2   \\
	&= \frac{1}{4}\sigma_n^2M_4^2|\Omega|^2|\tau_1-\tau_2|^2.
\end{align*}
The second line above follows from the independence of $\{n_k\}$, and  the fourth line follows from \eqref{eq:conv fact 1}. The tail bound for the increment is then
\begin{equation}
	\label{eq:Ninctail}
	\operatorname{P}_{n_k}\left(|N(\tau_1)-N(\tau_2)| > \lambda\right) ~\leq~
	\exp\left(\frac{-4\lambda^2}{\sigma_n^2M_4^2|\Omega|^2|\tau_1-\tau_2|^2}\right).
\end{equation}

With the same definition of the sets $T_j$ and $L_j$ from \eqref{eq:Tj} and \eqref{eq:Lj} in Section~\ref{sec:pfchain}, we can write the telescoping sum for $N(\tau)$ and proceed similarly to obtain
\[
 	\sup_{\tau\in T}|N(\tau)| ~\leq~
	\max_{p_0\in T_0}|N(p_0)| + \sum_{j\geq 0}\max_{(p_j,q_j)\in L_j} |N(q_j)-N(p_j)|,
\]
and so it follows immediately that
\begin{equation}
 	\E_{n_k}\sup_{\tau\in T}|N(\tau)| ~\leq~
	\E_{n_k}\max_{p_0\in T_0}|N(p_0)| + \sum_{j\geq 0}\E_{n_k}\max_{(p_j,q_j)\in L_j} |N(q_j)-N(p_j)|.
\label{eq:telescope for N}
\end{equation}

We now use the following standard result that bounds the expected maximum of a finite set of subgaussian random variables. The proof is included in Appendix \ref{sec:Proof of Emaxsubg}.
\begin{prop}
	\label{prop:Emaxsubg}
	Let $V_1,V_2,\ldots,V_N$ be random variables with $\P{|V_i| > \lambda} ~\leq~ Ke^{-\lambda^2/2\sigma^2}$.
	Then
	\[
		\E\max_{1\leq i\leq N} |V_i| ~\leq~ \sigma\left(\sqrt{2\log(KN)} +
		\frac{1}{\sqrt{2\log(KN)}}\right).
	\]
\end{prop}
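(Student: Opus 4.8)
The plan is to bound $\E\max_i|V_i|$ by integrating the tail probability, exactly in the spirit of the identity \eqref{eq:tailtoE} used earlier. Since $\max_{1\le i\le N}|V_i|$ is a nonnegative random variable, we may write
\[
	\E\max_{1\le i\le N}|V_i| = \int_0^\infty \P{\max_{1\le i\le N}|V_i| > \lambda}\,d\lambda.
\]
First I would split this integral at a threshold $u\ge 0$ to be chosen, bounding the integrand by $1$ on $[0,u]$ and keeping the tail on $[u,\infty)$:
\[
	\E\max_{1\le i\le N}|V_i| ~\le~ u + \int_u^\infty \P{\max_{1\le i\le N}|V_i| > \lambda}\,d\lambda.
\]

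Next I would control the surviving tail integral. By the union bound together with the hypothesis $\P{|V_i|>\lambda}\le Ke^{-\lambda^2/2\sigma^2}$, we have $\P{\max_i|V_i|>\lambda}\le KN e^{-\lambda^2/2\sigma^2}$ for every $\lambda$. Substituting and applying the Gaussian tail estimate \eqref{eq:conv fact 3} with $c^2=2\sigma^2$ gives
\[
	\int_u^\infty KN e^{-\lambda^2/2\sigma^2}\,d\lambda ~\le~ KN\cdot\frac{\sigma^2}{u}\,e^{-u^2/2\sigma^2},
\]
so that $\E\max_i|V_i| \le u + KN\,\sigma^2 u^{-1} e^{-u^2/2\sigma^2}$. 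The final step is to choose the threshold to balance the two terms: taking $u=\sigma\sqrt{2\log(KN)}$ makes $e^{-u^2/2\sigma^2}=(KN)^{-1}$, which cancels the factor $KN$ and leaves $KN\,\sigma^2 u^{-1}(KN)^{-1}=\sigma^2/u=\sigma/\sqrt{2\log(KN)}$. Adding the two contributions yields precisely
\[
	\E\max_{1\le i\le N}|V_i| ~\le~ \sigma\sqrt{2\log(KN)} + \frac{\sigma}{\sqrt{2\log(KN)}} = \sigma\left(\sqrt{2\log(KN)} + \frac{1}{\sqrt{2\log(KN)}}\right),
\]
as claimed.

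This argument is essentially routine and I do not expect a genuine obstacle; the only point requiring a little care is the choice of threshold, which must be made to match the stated constants exactly rather than merely up to a universal factor. The one mild hypothesis to keep in mind is that the chosen $u$ requires $\log(KN)>0$, i.e.\ $KN\ge 1$; this holds in every application in the paper (where $K\ge1$ and $N\ge1$), and indeed $KN<1$ would make the right-hand side ill-defined, so the implicit assumption $KN\ge1$ is natural and may simply be noted.
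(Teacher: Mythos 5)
Your proposal is correct and follows essentially the same route as the paper's own proof: integrate the tail via \eqref{eq:tailtoE}, split the integral at the threshold $\sigma\sqrt{2\log(KN)}$, apply the union bound, and finish with the Gaussian tail estimate \eqref{eq:conv fact 3}. The only cosmetic difference is that you leave the threshold as a free parameter $u$ before optimizing, whereas the paper fixes $\lambda_0=\sigma\sqrt{2\log(KN)}$ at the outset; the computation and constants are identical.
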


Applying Proposition~\ref{prop:Emaxsubg} along with \eqref{eq:Ntautail} allows us to bound the first term in \eqref{eq:telescope for N}:
\begin{align*}
	\E_{n_k}\max_{p_0\in T_0}|N(p_0)| &\leq \frac{\sigma_nM_4}{\sqrt{2}}\left(\sqrt{2\log(\TO + 1)} +
		\frac{1}{\sqrt{2\log(\TO)}}\right) \\
	&\leq \frac{\sigma_nM_4}{\sqrt{2}}\left(\sqrt{2\log(\TO)} + \frac{\TO + 1}{\TO\sqrt{2\log(\TO)}}\right) \\
	&\leq \frac{\sigma_nM_4}{\sqrt{2}}\left(\sqrt{2\log(\TO)} + 0.9\right)\\
	&\leq \sigma_nM_4\left(\sqrt{\log(\TO)}+0.64\right).
\end{align*}
The first line uses the fact that $\TO \le \#T_0\le\TO+1$, the second line follows from the convenient fact that for any $a>1$ we have
\begin{equation}
\label{eq:conv fact 5}
	\sqrt{\log (a+1)}-\sqrt{\log a}\le \left(2a\sqrt{\log a}\right)^{-1},
\end{equation}
and the third line follows from the assumption that $\TO\ge3$.

For the multiscale sum of expected supremums in \eqref{eq:telescope for N}, we can apply Proposition \ref{prop:Emaxsubg} with
$
	\sigma =\frac{1}{2\sqrt{2}}\sigma_nM_4|\Omega||q_j-p_j|= \frac{2^{-j}}{8\sqrt{2}}\sigma_nM_4
$
for every $j\ge0$, and obtain
\begin{align*}
	\sum_{j\geq 0}\E_{n_k}\max_{(p_j,q_j)\in L_j} |N(p_j)-N(q_j)| &\leq
	\frac{\sigma_nM_4}{8\sqrt{2}}\sum_{j\geq 0} 2^{-j}\left(\sqrt{2\log(2^{j+1}\TO + 1)} + \frac{1}{\sqrt{2\log(2^{j+1}\TO)}}\right) \\
	&\leq \frac{\sigma_nM_4}{8\sqrt{2}}\sum_{j\geq 0} 2^{-j}\left(\sqrt{2\log(2^{j+1}\TO)} + \frac{2^{j+1}\TO+1}{2^{j+1}\TO\sqrt{2\log(2^{j+1}\TO)}}\right) \\
	&\leq \frac{\sigma_nM_4}{8\sqrt{2}}\sum_{j\geq 0} 2^{-j}\left(\sqrt{2\log(\TO)} +\sqrt{2(j+1)\log 2}+ 0.62\right)\\
	&\leq \frac{\sigma_nM_4}{8\sqrt{2}}\left(2\sqrt{2\log(\TO)} + 4.42\right)\\
	&\leq \sigma_nM_4\left(0.25\sqrt{\log\left(\TO\right)}+0.4\right),
\end{align*}
where the first line uses the fact that $2^{j+1}\TO\le\#L_j\le2^{j+1}\TO+1$, the second line uses \eqref{eq:conv fact 5}, and the third line uses the assumption that $\TO\ge3$. Consequently, in light of \eqref{eq:telescope for N}, we obtain
\begin{align}
	\E_{n_k}\sup_{\tau\in T}|N(\tau)| &\leq~
	\sigma_nM_4\left(1.25\sqrt{\log(\TO)}+1.04\right).
	\label{eq:ncondbound}
\end{align}

Removing the conditioning on $\{\omega_{k}\}$, we arrive at
\begin{align*}
	\E\sup_{\tau\in T}|N(\tau)| & = \E_{\omega_{k}}\E_{n_{k}}\sup_{\tau\in T}|N(\tau)|\\
	&\leq \sigma_n\left(1.25\sqrt{\log(\TO)}+1.04\right)\E_{\omega_k}\bigmfour \\
 	&\leq \sigma_n\left(1.25\sqrt{\log(\TO)}+1.04\right)\sqrt{\sum_{k=1}^{m}\E_{\omega_k}|\widehat{s}(\omega_{k})|^{2}}\\
 	&= \sigma_n\left(1.25\sqrt{\log(\TO)}+1.04\right)\sqrt{m|\Omega|^{-1}\int_{\Omega}|\widehat{s}(\omega)|^{2}\; d\omega}\\
 	&= \sigma_nM_2\left(1.25\sqrt{\log(\TO)}+1.04\right)\\
	&\leq 2.25\sigma_nM_2\sqrt{\log(\TO)},
\end{align*}
where the third line uses Jensen's inequality, and the last line follows from our assumption that $\TO\ge3$. This completes the proof of Theorem~\ref{thm:noiseexp}.

\subsubsection{Tail Bound (Proof of Theorem~\ref{thm:mainNoisenew})}

\label{sec:noisetail}

Recall that, conditioned on $\{\omega_{k}\}$, $N(\tau)$ is a centered complex-valued
Gaussian process. The following result, proved in Appendix \ref{sec:Proof-of-Gaussian Tail bound},
provides a sharp tail bound for the supremum of this random process.
\begin{lem}
\label{lem:Gaussian tail bound}Let $\{G(t),\, t\in\Delta\}$ be a
centered complex-valued Gaussian process. Define the weak variance
as $\nu^{2}:=\sup_{t\in\Delta}\mathbb{E}\left|G(t)\right|^{2}$.
 Then the following holds for any $\lambda\ge0$: \[
\P{ \sup_{t\in\Delta}\left|G\left(t\right)\right|\ge\mathbb{E}\sup_{t\in\Delta}\left|G\left(t\right)\right|+\lambda} \le\exp\left(-\frac{\lambda^{2}}{2\nu^{2}}\right).\]
 \end{lem}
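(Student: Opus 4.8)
The plan is to realize $\sup_{t\in\Delta}|G(t)|$ as a Lipschitz function of a standard Gaussian vector and then invoke the classical Gaussian concentration inequality for Lipschitz functions. In the setting of interest (the noise process $N(\tau)$), $G$ is a finite linear combination of the complex Gaussian samples $\{n_{k}\}$; more generally, after decomposing each complex Gaussian into its independent real and imaginary parts and normalizing variances, I can write $G(t)=\sum_{j}c_{j}(t)\,g_{j}$, where $g=(g_{1},g_{2},\dots)$ is a standard real Gaussian vector and the $c_{j}(t)$ are fixed complex coefficients. A direct computation using $\E[g_{j}g_{k}]=\delta_{jk}$ then gives $\E|G(t)|^{2}=\sum_{j}|c_{j}(t)|^{2}=\|c(t)\|_{2}^{2}$, so that $\nu^{2}=\sup_{t}\|c(t)\|_{2}^{2}$.

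Next I would show that the map $F(x):=\sup_{t\in\Delta}\bigl|\sum_{j}c_{j}(t)x_{j}\bigr|$ is Lipschitz in $x$ with constant $\nu$. Using the elementary inequalities $\bigl|\sup_{t}a(t)-\sup_{t}b(t)\bigr|\le\sup_{t}|a(t)-b(t)|$ and $\bigl||u|-|v|\bigr|\le|u-v|$, I obtain
\[
|F(x)-F(y)|\le\sup_{t}\Bigl|\sum_{j}c_{j}(t)(x_{j}-y_{j})\Bigr|\le\sup_{t}\|c(t)\|_{2}\,\|x-y\|_{2}=\nu\|x-y\|_{2},
\]
where the middle step is Cauchy--Schwarz applied to the complex coefficient vector $c(t)$ against the real displacement $x-y$ (this is the only place the complex modulus enters, and it causes no trouble). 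With $F$ shown to be $\nu$-Lipschitz, the standard concentration inequality for Lipschitz functions of a standard Gaussian vector yields
\[
\P{F(g)\ge\E F(g)+\lambda}\le\exp\!\left(-\frac{\lambda^{2}}{2\nu^{2}}\right),
\]
which is exactly the claim since $F(g)=\sup_{t}|G(t)|$.

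The main obstacle is handling the continuum index set $\Delta$ rigorously. To this end I would use that the process of interest has almost surely continuous sample paths, so that $\sup_{t\in\Delta}|G(t)|=\sup_{t\in D}|G(t)|$ for a countable dense $D\subseteq\Delta$; this guarantees measurability of the supremum and lets me first prove the bound for finite maxima $\sup_{t\in D_{N}}|G(t)|$ over increasing finite sets $D_{N}\uparrow D$ (each a genuine $\nu$-Lipschitz function of $g$, since a finite maximum of $\nu$-Lipschitz functions is $\nu$-Lipschitz) and then pass to the limit by monotone convergence. Because the Lipschitz constant is uniform in $N$, the concentration bound survives the limit. I would also note that the statement implicitly presumes $\E\sup_{t}|G(t)|<\infty$ for the right-hand side to be meaningful.

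Alternatively, one can bypass the finite-dimensional bookkeeping by writing $|G(t)|=\sup_{\theta\in[0,2\pi)}\mathrm{Re}\bigl(e^{-\iunit\theta}G(t)\bigr)$, so that $\sup_{t}|G(t)|$ is the supremum of the \emph{real-valued} centered Gaussian process $H(t,\theta):=\mathrm{Re}\bigl(e^{-\iunit\theta}G(t)\bigr)$ indexed by $\Delta\times[0,2\pi)$. Since $\E H(t,\theta)^{2}\le\E|G(t)|^{2}\le\nu^{2}$, the classical Borell--TIS inequality for real Gaussian processes applied to $H$ gives the stated bound with the same $\nu$ (using the larger weak variance $\nu^{2}$ only weakens the exponent in the harmless direction). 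Either route reduces the complex case to the one-dimensional Gaussian concentration phenomenon.
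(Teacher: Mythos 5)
Your proposal is correct and follows essentially the same route as the paper: realize $\sup_{t}|G(t)|$ as a $\nu$-Lipschitz function of a standard real Gaussian vector (the paper factors the covariance as $\Gamma=\Xi\Xi^{*}$ and bounds $\|\Xi\|_{\infty,2}$ by $\nu_N$; you work with the coefficient vectors $c(t)$ directly via Cauchy--Schwarz), apply Gaussian concentration for Lipschitz functions, and pass from finite index sets to $\Delta$ by monotone convergence. Your explicit real/imaginary decomposition and the Borell--TIS aside are fine variations on the same argument.
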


To apply this bound to $N(\tau)$ conditioned on $\{\omega_{k}\}$,
notice that \eqref{eq:2nd moment of |N(t)|} directly implies $\nu^{2}=\sigma_{n}^{2}M_{4}^{2}$.
Therefore, using Lemma~\ref{lem:Gaussian tail bound}, we obtain the following for
any $\lambda\ge0$:
\begin{equation}
	\PN{\sup_{\tau\in T}\left|N\left(\tau\right)\right|>\E_{n_k}\sup_{\tau\in T}\left|	N\left(\tau\right)\right| +\lambda}
	\le\exp\left(-\frac{\lambda^{2}}{2\sigma_{n}^{2}M_{4}^{2}}\right).
\label{eq:lamtailexp}
\end{equation}

Now, recall \eqref{eq:ncondbound} and note that due to our assumption that $\TO\ge3$, we have $\E_{n_k}\sup_{\tau\in T}|N(\tau)| \leq 2.25 \sigma_nM_4\sqrt{\log(\TO)}$. Combining this fact with \eqref{eq:lamtailexp}, take $\lambda=\sigma_{n}M_{4}\sqrt{2\log\left(2/\delta\right)}$ to get
\begin{align}
	&\PN{ \sup_{\tau\in T}\left|N\left(\tau\right)\right|> 2\sigma_nM_4\max\left(2.25\sqrt{\log\left(\TO\right)},\sqrt{2\log(2/\delta)}\right)}\nonumber\\
	&\le \PN{ \sup_{\tau\in T}\left|N\left(\tau\right)\right|> 2.25\sigma_nM_4\sqrt{\log\left(\TO\right)}+\sigma_nM_4\sqrt{2\log(2/\delta)})}\nonumber\\
	&\le \delta/2.
	\label{eq:noise tail bound cond on omega}
\end{align}

Now, we need only to show that $M_{4}$ is small with high probability.
For this, we will use the Bernstein inequality (Lemma \ref{lem:Bernstein ineq}).
To apply the Bernstein inequality, take $V_{k}=|\widehat{s}(\omega_{k})|^{2}-|\Omega|^{-1}\left\Vert \widehat{s}\right\Vert _{2}^{2}$.
Notice that
$$
	\mathbb{E}V_{k}=\mathbb{E}|\widehat{s}(\omega_{k})|^{2}-|\Omega|^{-1}\left\Vert \widehat{s}\right\Vert _{2}^{2}
	=\int_{\Omega}|\widehat{s}(\omega)|^{2}\,|\Omega|^{-1}d\omega-|\Omega|^{-1}\left\Vert \widehat{s}\right\Vert _{2}^{2}
	=|\Omega|^{-1}\left\Vert \widehat{s}\right\Vert _{2}^{2}-|\Omega|^{-1}\left\Vert \widehat{s}\right\Vert _{2}^{2}
	=0
$$
and also that
\begin{align*}
	|V_{k}| & \leq\sup_{\omega}\left||\widehat{s}(\omega)|^{2}-|\Omega|^{-1}\left\Vert \widehat{s}\right\Vert _{2}^{2}\right|\\
	& \le\sup_{\omega}~\max\left(|\widehat{s}(\omega)|^{2},|\Omega|^{-1}\left\Vert \widehat{s}\right\Vert _{2}^{2}\right)\\
	& =\max\left(\sup_{\omega}|\widehat{s}(\omega)|^{2},|\Omega|^{-1}\left\Vert \widehat{s}\right\Vert _{2}^{2}\right)\\
	& =\max\left(M,|\Omega|^{-1}\left\Vert \widehat{s}\right\Vert _{2}^{2}\right)\\
	& =M,\end{align*}
where the second line uses the convenient fact that $\left|a-b\right|\le\max\left(a,b\right)$ for any $a,b\ge0$, and the fifth line uses the fact that $\left\Vert \widehat{s}\right\Vert _{2}^{2} \le |\Omega|\sup_\omega|\widehat{s}(\omega)|^2$. In addition, we know that
$$
	\mathbb{E}V_{k}^{2} =\mathbb{E}|\widehat{s}(\omega_{k})|^{4}-|\Omega|^{-2}\left\Vert \widehat{s}\right\Vert _{2}^{4}
	 \le\mathbb{E}|\widehat{s}(\omega_{k})|^{4}
	 =|\Omega|^{-1}\|\widehat{s}\|_{4}^{4},
$$
 where the first equality follows from the convenient fact that $\mathbb{E}|V-\mathbb{\mathbb{E}}V|^{2}=\mathbb{E}|V|^{2}-|\mathbb{E}V|^{2}$ for any random variable $V$. Therefore, we have $\rho^{2}=\sum_{k=1}^{m}\mathbb{E}V_{k}^{2}\le m|\Omega|^{-1}\|\widehat{s}\|_{4}^{4}=M_{1}^{2}$.
Now we can apply the Bernstein inequality to $\sum_{k=1}^{m}V_{k}$
for any $\lambda\ge0$ and obtain
\[
	\P{ M_{4}^{2}>m|\Omega|^{-1}\left\Vert \widehat{s}\right\Vert _{2}^{2}+\lambda} ~\leq~\exp\left(-\frac{\lambda^{2}}{2M_{1}^{2}+2M\lambda/3}\right).
\]

Suppose first that $M_{1}\ge(M/3)\sqrt{\log(2/\delta)}$.
Then take $\lambda=aM_{1}\sqrt{\log(2/\delta)}$ for some $a\ge0$
to get
\begin{align*}
	\P{ M_{4}^{2}\ge m|\Omega|^{-1}\left\Vert \widehat{s}\right\Vert _{2}^{2}+aM_{1}\sqrt{\log(2/\delta)}}  & \le\exp\left(-\frac{a^{2}M_{1}^{2}\log(2/\delta)}{2M_{1}^{2}+2aMM_{1}\sqrt{\log(2/\delta)}/3}\right)\\
	& \le\exp\left(-\frac{a^{2}M_{1}^{2}\log(2/\delta)}{2M_{1}^{2}+2aM_{1}^{2}}\right)\\
	& \le\exp\left(-\frac{a^{2}}{2+2a}\log\left(2/\delta\right)\right)\\
	& \le\delta/2,
\end{align*}
 which is valid for $a\ge1+\sqrt{3}$.

Now suppose that $M_{1}<(M/3)\sqrt{\log(2/\delta)}$. Take $\lambda=aM\log(2/\delta)$.
Then
\begin{align*}
	  \P{ M_{4}^{2}\ge m|\Omega|^{-1}\left\Vert \widehat{s}\right\Vert _{2}^{2}+aM\log(2/\delta)}  & \leq\exp\left(-\frac{a^{2}M^{2}\log^{2}(2/\delta)}{2M_{1}^{2}+(2a/3)M^{2}\log(2/\delta)}\right)\\
	  & <\exp\left(-\frac{a^{2}M^{2}\log^{2}(2/\delta)}{(2/9)M^{2}\log(2/\delta)+(2a/3)M^{2}\log(2/\delta)}\right)\\
	  & =\exp\left(-\frac{a^{2}\log(2/\delta)}{2/9+2a/3}\right)\\
	  & \leq\delta/2,
\end{align*}
 which is valid for $a\ge\frac{1+\sqrt{3}}{3}$.

So with probability at least $1-\delta/2$ we will have
\begin{equation}
	M_{4}^{2}~\leq~m|\Omega|^{-1}\left\Vert \widehat{s}\right\Vert _{2}^{2}+a\max\left(M_{1}\sqrt{\log(2/\delta)},M\log(2/\delta)\right),\label{eq:nboundtwoterms}
\end{equation}
 for $a\ge1+\sqrt{3}$.

By the definition of $M_{1}$, note that $m|\Omega|^{-1}\left\Vert \widehat{s}\right\Vert _{2}^{2}\ge aM_{1}\sqrt{\log(2/\delta)}$
is equivalent to $m|\Omega|^{-1}\left\Vert \widehat{s}\right\Vert _{2}^{2}\ge a\sqrt{m}\sqrt{\left|\Omega\right|^{-1}}\left\Vert \widehat{s}\right\Vert _{4}^{2}\sqrt{\log(2/\delta)}$,
which is equivalent to $m\geq a^{2}\frac{\|\widehat{s}\|_{4}^{4}}{\|\widehat{s}\|_{2}^{4}}|\Omega|\log(2/\delta)$.
Also, $m|\Omega|^{-1}\left\Vert \widehat{s}\right\Vert _{2}^{2}\ge aM\log(2/\delta)$
is equivalent to $m\ge a\frac{M}{\|\widehat{s}\|_{2}^{2}}|\Omega|\log(2/\delta)$.
Therefore, in order for the first term on the right hand side of (\ref{eq:nboundtwoterms})
to be dominant we conveniently assume that
\begin{equation}
	m\geq a^{2}\max\left(\frac{|\Omega|\|\widehat{s}\|_{4}^{4}}{\|\widehat{s}\|_{2}^{4}},\frac{|\Omega|M}{\|\widehat{s}\|_{2}^{2}}\right)\log(2/\delta),\label{eq:conv cond on m}
\end{equation}
 where we used the fact that $a>1$. Under this assumption,
\begin{equation}
	M_{4}^{2}\leq2m|\Omega|^{-1}\left\Vert \widehat{s}\right\Vert _{2}^{2}=2\bigmtwo^{2}\label{eq:bound on M4}
\end{equation}
 with probability exceeding $1-\delta/2$. Let $\mathcal{E}$ denote
the event specified in \eqref{eq:bound on M4}; clearly if \eqref{eq:conv cond on m} is met, then $\P{ \mathcal{E}} \ge1-\delta/2$. Now, using \eqref{eq:noise tail bound cond on omega}, we have
\begin{align}
	& \P{ \sup_{\tau\in T}\left|N\left(\tau\right)\right|>2\sqrt{2}\sigma_nM_2\max\left(2.25\sqrt{\log\left(\TO\right)},\sqrt{2\log(2/\delta)}\right)\,\middle|\,\mathcal{E}} \nonumber \\
	& \le\P{ \sup_{\tau\in T}\left|N\left(\tau\right)\right|>2\sigma_nM_4\max\left(2.25\sqrt{\log\left(\TO\right)},\sqrt{2\log(2/\delta)}\right)\,\middle|\,\mathcal{E}} \nonumber \\
	& = \frac{1}{\P{ \mathcal{E}}}\int_{\mathcal{E}}\PN{ \sup_{\tau\in T}\left|N\left(\tau\right)\right|>2\sigma_nM_4\max\left(2.25\sqrt{\log\left(\TO\right)},\sqrt{2\log(2/\delta)}\right)\,\middle|\,\{\omega_{k}\}} \, d\mu(\{\omega_{k}\}) \nonumber \\
	& \le\frac{\delta}{2\P{ \mathcal{E}}}\int_{\mathcal{E}}d\mu(\{\omega_{k}\})\nonumber \\
	& =\frac{\delta}{2},\label{eq:noise tail bound cond on e}
\end{align}
where the third line above follows from the definition of conditional measure.

On the other hand, for any $u\ge0$, we have
\begin{align*}
	\P{ \sup_{\tau\in T}\left|N\left(\tau\right)\right|\ge u}  & =\P{ \sup_{\tau\in T}\left|N\left(\tau\right)\right|\ge u\,\middle|\,\mathcal{E}} \P{ \mathcal{E}} +\P{ \sup_{\tau\in T}\left|N\left(\tau\right)\right|\ge u\,\middle|\,\mathcal{E}^{C}} \P{ \mathcal{E}^{C}} \\
	& \le\P{ \sup_{\tau\in T}\left|N\left(\tau\right)\right|\ge u\,\middle|\,\mathcal{E}} +\frac{\delta}{2}.
\end{align*}
 Taking $u=2\sqrt{2}\sigma_nM_2\max\left(2.25\sqrt{\log\left(\TO\right)},\sqrt{2\log(2/\delta)}\right)$
and using \eqref{eq:noise tail bound cond on e}, we finally obtain
\begin{align*}
	& \P{ \sup_{\tau\in T}\left|N\left(\tau\right)\right|>2\sqrt{2}\sigma_nM_2\max\left(2.25\sqrt{\log\left(\TO\right)},\sqrt{2\log(2/\delta)}\right)} \\
	& \le\P{ \sup_{\tau\in T}\left|N\left(\tau\right)\right|>2\sqrt{2}\sigma_nM_2\max\left(2.25\sqrt{\log\left(\TO\right)},\sqrt{2\log(2/\delta)}\right)\,\middle|\mathcal{E}} +\frac{\delta}{2}\\
	& \le \delta,
\end{align*}
 which is valid when \eqref{eq:conv cond on m}
is met and $a\ge1+\sqrt{3}$. Setting $a=1+\sqrt{3}$, $C_3 = a^2$, and $C_4 = \frac{2.25 \sqrt{2}}{\pi}$, we complete the proof
of Theorem~\ref{thm:mainNoisenew}.

\appendix

\section{Proof of Lemma \ref{lem:Y to Z}\label{sec:Y to Z}}

Let $\E_{\omega_{k}}$ and $\E_{\omega'_{k}}$ denote expectation
with respect to $\{\omega_{k}\}$ and $\{\omega_{k}'\}$, respectively.
Then we can write
 \begin{align*}
\E\sup_{\tau}|Y(\tau)| & =\E_{\omega_{k}}\sup_{\tau}\left|Y(\tau)-\E_{\omega'_{k}}Y'(\tau)\right|\quad & \text{(\ensuremath{Y'(\tau)}\,\ is zero mean)}\\
 & =\E_{\omega_{k}}\sup_{\tau}\left|\E_{\omega'_{k}}\left[Y(\tau)-Y'(\tau)\right]\right|\quad & \text{(independence; \ensuremath{\E_{\omega'_{k}}Y=Y})}\\
 & \leq\E_{\omega_{k}}\E_{\omega'_{k}}\sup_{\tau}\left|Y(\tau)-Y'(\tau)\right|\quad & \text{(Jensen's inequality, sup\ensuremath{\left|\cdot\right|}\,\ is convex)}\\
 & =\E\sup_{\tau}\left|Y(\tau)-Y'(\tau)\right|\quad & \text{(iterated expectation)}\\
 & =\E\sup_{\tau}|Z(\tau)|.
\end{align*}
Finally, since $Z(\tau)$ has the same distribution as $Z'(\tau)$, $\E\sup_{\tau}|Z(\tau)| = \E\sup_{\tau}|Z'(\tau)|$.

\section{Proof of Lemma \ref{lem:Y to Z2}\label{sec:Y to Z2}}

Recall that, for every $\tau\in T$, $Y(\tau)$ was symmetrized by creating an independent copy $Y'(\tau)$ and defining $Z(\tau)=Y(\tau)-Y'(\tau)$.
Therefore, for any $a,\lambda>0$, the occurrence of the events $\left\{\sup_{\tau}|Y(\tau)|>a+\lambda\right\}$ and $\left\{\sup_{\tau}|Y'_\tau|< a\right\}$ imply that
$$
	\lambda <\sup_{\tau}|Y(\tau)|-\sup_{\tau}|Y'(\tau)|
	 \le \sup_{\tau}\left( |Y(\tau)|-|Y'(\tau)| \right)
	 \le \sup_{\tau} |Y(\tau)-Y'(\tau)|
	 =\sup_{\tau} |Z(\tau)|.
$$
Setting $a=2\E\sup_{\tau}|Y(\tau)|$ and recalling that $Z(\tau)$ and $Z'(\tau)$ have the same distribution, we conclude that
\begin{align*}
\P{\sup_{\tau}|Z'(\tau)|>\lambda} &\ge \P{ \left\{\sup_{\tau}|Y(\tau)|>a+\lambda\right\} \cap \left\{\sup_{\tau}|Y'(\tau)|< a\right\}} \\
& = \P{\sup_{\tau}|Y(\tau)|>2\E\sup_{\tau}|Y(\tau)|+\lambda} \, \P{\sup_{\tau}|Y(\tau)|< 2\E\sup_{\tau}|Y(\tau)|}\\
& \ge \frac{1}{2} \P{\sup_{\tau}|Y(\tau)|>2\E\sup_{\tau}|Y(\tau)|+\lambda},
\end{align*}
where the second line follows from the fact that $Y$ and $Y'$ are independent copies of the same random process, and the third line follows from applying the Markov inequality, which states that for any nonnegative random variable $V$ and any $c>0$, we have $\P{V\ge c}\le\ c^{-1}\E V$.

\section{Proof of Lemma \ref{lem:inctails}}
\label{sec:inctails}

Hoeffding's inequality for Rademacher sums~\cite[Prop.~6.11]{ra09-1} states that if $b_1,b_2,\ldots,b_N$ are complex numbers and $\epsilon_1,\epsilon_2,\ldots,\epsilon_N$ is a Rademacher series, then for every $\lambda\ge0$ we have
\begin{equation}
	\label{eq:complexhding}
	\P{\left|\sum_{k=1}^N \epsilon_kb_k\right| > \lambda} ~\leq~ 2\exp\left(\frac{-\lambda^2}{2\sum_k|b_k|^2}\right).
\end{equation}
Conditioned on $\{\omega_k\}$ and $\{\omega'_k\}$, we can write
\[
	Z'(\tau) = \sum_{k=1}^m\epsilon_k\alpha_k, \quad\text{for some $\alpha_k$ with}~~~
	|\alpha_k| =
	\left|A\left|\widehat{s}(\omega_{k})\right|^{2}e^{\iunit\omega_{k}(\tau-\tau_{0})}-
	A\left|\widehat{s}(\omega_{k}')\right|^{2}e^{\iunit\omega_{k}'(\tau-\tau_{0})}\right|,
\]
and therefore,
\begin{align*}
	\sum_{k=1}^m|\alpha_k|^2 &\leq |A|^2\sum_{k=1}^m \left|
	\left|\widehat{s}(\omega_{k})\right|^{2}e^{\iunit\omega_{k}(\tau-\tau_{0})}-
	\left|\widehat{s}(\omega_{k}')\right|^{2}e^{\iunit\omega_{k}'(\tau-\tau_{0})}
	\right|^2 \\
	&\leq
	|A|^2\sum_{k=1}^{m}2\left|\widehat{s}(\omega_{k})\right|^{4}+2\left|\widehat{s}(\omega_{k}')\right|^{4} \\
	&= 2|A|^2M_3^2,
\end{align*}
where the second line uses (\ref{eq:conv fact 2}). Plugging into \eqref{eq:complexhding} yields \eqref{eq:tailinc1} as desired.

For the increment bound, conditioned on $\{\omega_k\}$ and $\{\omega'_k\}$, we can write
\[
	Z'(\tau_1) - Z'(\tau_2) = \sum_{k=1}^m\epsilon_k\beta_k,
\]
where
\begin{align*}
	\sum_{k=1}^m|\beta_k|^2  &= |A|^2 \sum_{k=1}^{m}\Biggl|\left|\widehat{s}(\omega_{k})\right|^{2}e^{\iunit\omega_{k}(\tau_{1}-\tau_{0})}-\left|\widehat{s}(\omega'_{k})\right|^{2}e^{\iunit\omega'_{k}(\tau_{1}-\tau_{0})}-\left|\widehat{s}(\omega_{k})\right|^{2}e^{\iunit\omega_{k}(\tau_{2}-\tau_{0})}+\left|\widehat{s}(\omega'_{k})\right|^{2}e^{\iunit\omega'_{k}(\tau_{2}-\tau_{0})}\Biggl|^{2} \\
&\leq~
	8|A|^2\sum_{k=1}^{m}\biggl(\left|\widehat{s}(\omega_{k})\right|^{4}\left|\sin\left(\frac{1}{2}\omega_{k}(\tau_{1}-\tau_{2})\right)\right|^{2} +  \left|\widehat{s}(\omega_{k}')\right|^{4}\left|\sin\left(\frac{1}{2}\omega'_{k}(\tau_{2}-\tau_{1})\right)\right|^{2}\biggl) \\
&\leq  8|A|^2\sum_{k=1}^{m}\left|\widehat{s}(\omega_{k})\right|^{4}\left|\frac{1}{2}\omega_{k}(\tau_{1}-\tau_{2})\right|^{2}+\left|\widehat{s}(\omega_{k}')\right|^{4}\left|\frac{1}{2}\omega'_{k}(\tau_{2}-\tau_{1})\right|^{2} \\
 & \leq  8|A|^2\sum_{k=1}^{m}\left|\widehat{s}(\omega_{k})\right|^{4}\left|\frac{|\Omega|}{4}(\tau_{1}-\tau_{2})\right|^{2}+\left|\widehat{s}(\omega_{k}')\right|^{4}\left|\frac{|\Omega|}{4}(\tau_{2}-\tau_{1})\right|^{2} \\
 & \leq  \frac{1}{2}|A|^2\bigmthree^2|\Omega|^2|\tau_{1}-\tau_{2}|^2.
\end{align*}
The second line above follows from applying \eqref{eq:conv fact 2} and then \eqref{eq:conv fact 1}, the third line uses the fact that $|\sin(a)|\le|a|$, and the fourth line follows from the fact that $\Omega$ is symmetric about the origin, i.e., that $\Omega=[-\omega_{\mathrm{max}},\omega_{\mathrm{max}}]$. Plugging into \eqref{eq:complexhding} yields \eqref{eq:tailinc2} as desired.

\section{Proof of Proposition~\ref{prop:Emaxsubg}}
\label{sec:Proof of Emaxsubg}

It follows from \eqref{eq:tailtoE} that
\[
	\E\max_i|V_i|=\int_{0}^\infty \P{\max_i|V_i|>\lambda}~d\lambda.
\]
By breaking the integration interval at $\lambda_0:=\sigma \sqrt{2\log(KN)}$, we can write
\begin{align*}
	 \E \max_{i} |V_i| & = \int_0^{\lambda_0} \P{\max_i|V_i|>\lambda}~d\lambda + \int_{\lambda_0}^\infty \P{\max_i|V_i|>\lambda}~d\lambda\\
	  &\le \int_0^{\lambda_0} d\lambda + \int_{\lambda_0}^\infty N\max_i \P{|V_i|>\lambda}~d\lambda\\
	  &\le \lambda_0+\int_{\lambda_0}^\infty KN e^{-\lambda^2/2\sigma^2}~d\lambda\\
	  &\le \lambda_0 +KN\cdot\frac{\sigma^2}{\lambda_0}e^{-\lambda_0^2/2\sigma^2}\\
	  &= \sigma \left(\sqrt{2\log(KN)} +\frac{1}{\sqrt{2\log(KN)}}\right),
\end{align*}
as claimed. The second line above uses the union bound, and the fourth line uses \eqref{eq:conv fact 3}.

\section{Proof of Lemma \ref{lem:Gaussian tail bound} \label{sec:Proof-of-Gaussian Tail bound}}

The proof essentially follows~\cite[p. 134]{Phenomenon-Ledoux}.
Fix $t_{1},t_{2},...,t_{N}$ in $\Delta$ and form the vector $g:=[G(t_{1}),G(t_{2}),...,G(t_{N})]^{T}\in\mathbb{C}^{N}$
with covariance matrix $\Gamma=\Xi\Xi^{*}$. This vector has the same
distribution as $\Xi h$, where $h\in\mathbb{R}^{N}$ is the standard
Gaussian vector, whose entries are i.i.d.\ zero-mean Gaussian random
variables with unit variance. This is because $\mathbb{E}\Xi h=\Xi\cdot\mathbb{E}h=0_{N}=\mathbb{E}g$
and $\mathbb{E}\left\{ \left(\Xi h\right)\left(\Xi h\right)^{*}\right\} =\Xi\mathbb{E}\left\{ hh^{T}\right\} \Xi^{*}=\Xi\Xi^{*}=\Gamma=\mathbb{E}\left\{ gg^{*}\right\} $.
Here $0_{N}$ denotes the $N\times1$ zero vector. Now consider the
function $F(x):\,\mathbb{R}^{N}\rightarrow\mathbb{R}$ defined as
$F(x):=\max_{1\le i\le N}\left|\left(\Xi x\right)_{i}\right|=\left\Vert \Xi x\right\Vert _{\infty}$.
Let $x_{1},\, x_{2}\in\mathbb{R}^{N}$. Now we can write \begin{align*}
\left|F(x_{1})-F(x_{2})\right| & =\left|\left\Vert \Xi x_{1}\right\Vert _{\infty}-\left\Vert \Xi x_{2}\right\Vert _{\infty}\right|\\
 & \le\left\Vert \Xi\left(x_{1}-x_{2}\right)\right\Vert _{\infty}\\
 & \le\left\Vert \Xi\right\Vert _{\infty,2}\left\Vert x_{1}-x_{2}\right\Vert _{2}.\end{align*}
 The second line follows from the triangle inequality, and in the third line, $\left\Vert \Xi\right\Vert _{\infty,2}$
denotes the operator norm of $\Xi$ from $\mathbb{R}^{N}$ equipped with
the $l_{2}$-norm to $\mathbb{C}^{N}$ equipped with the $l_{\infty}$-norm.
Let $B_{2}^{N}$ denote the unit $l_{2}$-ball in $\mathbb{R}^{N}$.
Note that we have \begin{align*}
\left\Vert \Xi\right\Vert _{\infty,2}^{2} & = \left(\sup_{x\in B_{2}^{N}}\max_{1\le i\le N}\left|\left(\Xi x\right)_{i}\right| \right)^{2}\\
 & =\sup_{x\in B_{2}^{N}}\max_{1\le i\le N}\left|\left(\Xi x\right)_{i}\right|^{2}\\
 & =\max_{1\le i\le N}\sup_{x\in B_{2}^{N}}\left|\sum_{j=1}^{N}\Xi_{i,j}x_{j}\right|^{2}\\
 & =\max_{1\le i\le N}\,\sum_{j=1}^{N}\left|\Xi_{i,j}\right|^{2}\\
 & =\max_{i}\,\left(\Xi\Xi^{*}\right)_{i,i}\\
 & =\max_{i}\,\Gamma_{i,i}\\
 & =\max_{i}\,\mathbb{\mathbb{E}}\left|G(t_{i})\right|^{2}\\
 & =:\nu_{N}^{2}.\end{align*}
 The third line is a consequence of the Cauchy-Schwartz inequality
for complex-valued numbers. Therefore, $F(\cdot)$ is a $\nu_{N}$-Lipschitz
function of its argument (which is a standard Gaussian vector here).
So we can invoke, for example,~\cite[eq. (2.35)]{Phenomenon-Ledoux}
to get\begin{align*}
\Pr\left\{ F\left(h\right)\ge\mathbb{E}F(h)+\lambda\right\}  & =\Pr\left\{ \max_{1\le i\le N}\left|\left(\Xi h\right)_{i}\right|\ge\mathbb{E}\max_{1\le i\le N}\left|\left(\Xi h\right)_{i}\right|+\lambda\right\} \\
 & =\Pr\left\{ \max_{1\le i\le N}\left|G(t_{i})\right|\ge\mathbb{E}\max_{1\le i\le N}\left|G(t_{i})\right|+\lambda\right\} \\
 & \le\exp\left(-\frac{\lambda^{2}}{2\nu_{N}^{2}}\right).\end{align*}
 Now we can apply monotone convergence to the above inequality; as
$N\rightarrow\infty$, we have $\nu_{N}^{2}\rightarrow\nu^{2}$ and
we come to the desired result.

\bibliographystyle{plain}
\bibliography{References}

\end{document}